\definecolor{darkred}{rgb}{0.8,0.1,0.1}
\theoremstyle{plain}
\newtheorem{theo}{Theorem}[section]
\newtheorem{lem}[theo]{Lemma}
\newtheorem{propo}[theo]{Proposition}
\newtheorem{cor}[theo]{Corollary}
\theoremstyle{definition}
\newtheorem{defi}[theo]{Definition}
\newtheorem{rem}[theo]{Remark}
\numberwithin{equation}{section}
\def\Aff{\mathsf{Aff}}
\def\PrBu{\mathsf{PrBuGlobHyp}}
\def\VeBu{\mathsf{VeBuGlobHyp}}
\def\AfBu{\mathsf{AfBuGlobHyp}}
\def\PreSymp{\mathsf{PreSymp}}
\def\PhaseSpace{\mathfrak{PhSp}}
\def\CCR{\mathfrak{CCR}}
\def\CAR{\mathfrak{CAR}}
\def\astAlg{{^\ast}\mathsf{Alg}}
\def\QFT{\mathfrak{A}}
\def\nn{\nonumber}
\def\bbR{\mathbb{R}}
\def\bbC{\mathbb{C}}
\def\bbN{\mathbb{N}}
\def\bbZ{\mathbb{Z}}
\def\bfA{\mathsf{A}}
\def\bfV{\mathsf{V}}
\def\EE{\mathcal{E}}
\def\Hom{\mathrm{Hom}}
\def\Con{\mathrm{Con}}
\def\Gau{\mathrm{Gau}}
\def\hor{\mathrm{hor}}
\def\eqv{\mathrm{eqv}}
\def\id{\mathrm{id}}
\def\supp{\mathrm{supp}}
\def\dd{\mathrm{d}}
\def\vol{\mathrm{vol}}
\def\dim{\mathrm{dim}}
\def\1{\mathbbm{1}}
\def\oone{\mathbf{1}}
\def\g{\mathfrak{g}}
\def\ad{\mathrm{ad}}
\def\o{\mathfrak{o}}
\def\t{\mathfrak{t}}
\def\MW{\mathsf{MW}}
\newcommand{\ip}[2]{\left\langle #1,#2 \right\rangle}
\newcommand{\sect}[2]{\Gamma^\infty( #2 )}
\newcommand{\sectn}[2]{\Gamma_0^\infty(  #2 )}
\def\sk{\vspace{2mm}}
\title{%
Quantized Abelian principal connections on Lorentzian manifolds
}
\author{%
Marco Benini$^{1,3,a}$, Claudio Dappiaggi$^{1,b}$ and Alexander Schenkel$^{2,c}$\vspace{4mm}\\
{\small $^1$ Dipartimento di Fisica}\\ 
{\small Universit{\`a} di Pavia \& INFN, sezione di Pavia –- Via Bassi 6, 27100 Pavia, Italy.}\vspace{2mm}\\
{\small $^2$ Fachgruppe Mathematik}\\
{\small Bergische~Universit\"at~Wuppertal,~Gau\ss stra\ss e~20,~42119~Wuppertal,~Germany.}\vspace{4mm}
\\
{\small $^3$ II. Institut f\"ur Theoretische Physik}\\
{\small Universit\"at Hamburg,~Luruper Chaussee 149,~22761~Hamburg,~Germany.}\vspace{4mm}
\\
 {\small  ~$^a$ marco.benini@pv.infn.it~,~$^b$ claudio.dappiaggi@unipv.it~,~$^c$ schenkel@math.uni-wuppertal.de }
 }
\date{\today}
\begin{document}

\maketitle

\begin{abstract}
We construct a covariant functor from a category of Abelian principal bundles over globally hyperbolic spacetimes to a category
of $\ast$-algebras that describes quantized principal connections. 
We work within an appropriate differential geometric setting by
 using the bundle of connections and we study the full gauge group, namely the group of vertical principal bundle automorphisms. 
Properties of our functor are investigated in detail and, similar to earlier works, it is found that due 
to topological obstructions the locality property of 
locally covariant quantum field theory is violated. 
Furthermore, we prove that, for Abelian structure groups containing a nontrivial compact factor,
the gauge invariant Borchers-Uhlmann algebra of the vector dual of the bundle of connections
is not separating on gauge equivalence classes of principal connections. 
We introduce a topological generalization of the concept of locally covariant quantum fields.
As examples, we construct for the category of principal $U(1)$-bundles
 two natural transformations from singular homology functors to the quantum field theory functor
that can be interpreted as the Chern class and the electric charge.
In this case we also prove that the electric charges can be consistently set to zero, which yields
another quantum field theory functor that satisfies all axioms of locally covariant quantum field theory.
\end{abstract}
\paragraph*{Keywords:}
locally covariant quantum field theory, 
quantum field theory on curved spacetimes,
gauge theory on principal bundles
\paragraph*{MSC 2010:}81T20, 81T05, 81T13, 53Cxx


\section{\label{sec:intro}Introduction}
The algebraic theory of quantum fields on Lorentzian manifolds has made tremendous developments
since the introduction of the principle of general local covariance by Brunetti, 
Fredenhagen and Verch \cite{Brunetti:2001dx}, see also \cite{Fewster:2011pe}.
 Mathematically, this principle states that any
reasonable quantum field theory has to be formulated by a covariant functor from a category of globally hyperbolic
Lorentzian manifolds (spacetimes) to a category of unital $(C)^\ast$-algebras, subject to certain physical conditions.
Many examples of linear quantum field theories satisfying the axioms of locally covariant quantum field theory 
have been constructed in the literature, see e.g.~\cite{Bar:2007zz,Bar:2011iu,Benini:2013} and references therein. 
The mathematical tool used in these constructions is the theory of Green-hyperbolic operators
on vector bundles over spacetimes together with the $\CCR$ and $\CAR$ quantization functors.
In our previous work \cite{Benini:2012vi}  we have generalized these constructions to classes of operators on affine bundles
over spacetimes. 
In addition to these exactly tractable models, the techniques of locally covariant quantum field theory
are essential for the perturbative construction of interacting quantum field theories, see for example \cite{Brunetti:2009qc},
and the generalization of the spin-statistics theorem from Minkowski spacetime to general spacetimes \cite{Verch:2001bv}.
\sk

One of the weak points of the current status of algebraic quantum field theory is our incomplete understanding 
of the formulation of gauge theories.
Even though there exist important results on the quantization of electromagnetism
 \cite{Dimock,Pfenning:2009nx,DL,Dappiaggi:2011cj,SDH12},
linearized general relativity \cite{Fewster:2012bj} and generic linear gauge theories \cite{Hack:2012dm},
as well as on the perturbative quantization of interacting gauge theories 
\cite{Hollands:2007zg, Fredenhagen:2011mq}, there are still open problems that deserve a detailed study.
In particular, there is up to now no satisfactory formulation of quantized electromagnetism
for the following two reasons: Firstly, applying canonical quantization techniques it has been found that
electromagnetism violates the locality axiom of locally covariant quantum field theory. 
This has been shown for the field strength algebra in \cite{DL} and for the vector potential algebra  in \cite{SDH12}.
The latter reference also gives an interpretation of this feature in terms of Gauss' law. Already in 
the earlier investigations on Maxwell's equations on flat spacetimes \cite{strocchi:1967fk,strocchi:1970kx,Bongaarts:1976tt}, 
the existence of non local features has been recognized as the source of major issues in the quantization procedure, 
mostly yielding obstructions to the construction of positive algebraic states. Secondly, the differential
 geometric developments over the past decades indicate that the
natural language for formulating gauge theories of Yang-Mills type is that of principal connections on principal $G$-bundles,
which includes electromagnetism by choosing $G=U(1)$.
Taking into account the principal bundle structure has far reaching consequences for the very principle of general local covariance:
Since principal connections can not be associated to spacetimes, but only to principal bundles over spacetimes,
the category of spacetimes in \cite{Brunetti:2001dx} should be replaced by a category of principal bundles over spacetimes.
This notion of general local covariance for gauge theories of Yang-Mills type
appeared recently in the discussion of the locally covariant charged Dirac field \cite{Zahn:2012dz},
where however the principal connections were assumed to be non-dynamical background fields.
Besides this new notion of general local covariance in gauge theories of Yang-Mills type,
the classical configuration space is different to the one used in previous works:
The set of principal connections does not carry a vector space structure, but it is an affine space over the vector space 
of gauge potentials. The vector space structure employed in the works
 \cite{Dimock,Pfenning:2009nx,Dappiaggi:2011cj,SDH12} comes from
a (necessarily non-unique) fixing of some reference connection, which is unnatural in differential geometry
and leads to the unnecessary question of independence of the theory on this choice \cite{Hollands:2007zg}.
\sk

We outline the structure of our paper: In Section \ref{sec:prelim} we fix the notations and review some aspects of
the theory of principal bundles and principal connections. This material is essentially
well-known in the differential geometry literature, but we require some details that go beyond standard textbook
presentations and hence are worth for being discussed. In particular, we need a full-fledged study of the bundle
of connections \cite{Atiyah} together with the action of principal bundle morphisms and the gauge group 
(the group of vertical principal bundle automorphisms) defined on it. Sections of the bundle of connections, that is an affine bundle
over the base space, are in bijective correspondence with principal connection forms on the total space, 
but they have the advantage of being fields on the base space and not on the total space. This has far reaching consequences
when one studies dynamical equations of connections and causality properties, since the total space 
is not equipped with a Lorentzian metric.

In Section \ref{sec:phasespace} we associate to any Abelian principal bundle a gauge invariant phase space
for its principal connections by extending ideas from \cite{Benini:2012vi} and \cite{Hack:2012dm}. 
Our notion of gauge invariance is dictated by the principal bundle and in the general case differs
from the one employed in \cite{Dimock,Pfenning:2009nx,Dappiaggi:2011cj,SDH12}. The phase space is not symplectic, but
only a presymplectic  vector space, whose radical contains topological information to be discussed in Section
\ref{sec:nattrafo}. 

We characterize explicitly the gauge invariant phase space and its radical in Section
\ref{sec:radical} by using techniques from cohomology. This leads to two interesting observations: Firstly, the gauge invariant
 phase space and its radical for theories with a compact Abelian structure group exhibit a different structure with respect to their counterparts with a non-compact Abelian structure group.
Secondly, if the Abelian structure group contains a compact factor, then the gauge invariant phase 
space is not separating on gauge equivalence classes of principal connections. In particular, gauge inequivalent
flat connections can not be resolved. The reason for  this feature is that our gauge invariant phase 
space consists of affine functionals, but for Abelian structure groups with compact factors the set of  gauge 
equivalence classes of principal connections is in general no longer an affine space.
This shows that in these cases the standard phase space of affine functionals 
introduced in \cite{Benini:2012vi} has to be extended in order to be separating.
Natural candidates for this extension are Wilson loops, which are however too singular for a straightforward
description in algebraic quantum field theory. We will come back to this issue in future investigations.

The results above are combined in Section \ref{sec:phasespacefunctor} to construct
a covariant functor from a category of Abelian principal bundles over spacetimes to a category of presymplectic vector spaces.
Composing this functor with the usual $\CCR$-functor we obtain a quantum field theory functor
that satisfies the causality property and the time-slice axiom. However, the locality property of
\cite{Brunetti:2001dx} is violated, confirming that the results of \cite{DL,SDH12} also hold true in
our principal bundle geometric approach. This result was not obvious from the beginning, since our concept of
morphisms and configuration space is different from the ones in earlier investigations.

In Section \ref{sec:nattrafo} we extend the concept of a locally covariant quantum field developed
in \cite{Brunetti:2001dx} to what we call a `generally covariant topological quantum field'.
By this we mean a natural transformation from a functor describing topological information
to the quantum field theory functor. For the category of principal $U(1)$-bundles
we provide two explicit examples where the functor
describing topological information is a singular homology functor. 
The natural transformations are then the coherent association of observables
that measure the Chern class of the principal bundle and the electric charge, that is a certain cohomology class.

Following the electric charge interpretation of the previous section (see also \cite{SDH12} for an earlier account)
we show in Section \ref{sec:chargezerofunctor} that the electric charges can be consistently set to zero.
This is physically motivated since in pure electromagnetism, without the presence of charged fields,
there cannot be electric charges.
The resulting quantum field theory functor then satisfies in addition to the causality property and the time-slice axiom
also the locality property. With this we succeed in constructing a locally covariant quantum field theory.


\section{\label{sec:prelim}Geometric preliminaries}
In this work all manifolds are $C^\infty$, Hausdorff and second-countable.
Unless stated otherwise, all maps between manifolds are $C^\infty$.
Furthermore, we assume that all manifolds are of finite-type, i.e.~they possess a finite good cover.
This is a sufficient, however not necessary, condition for finite dimensional cohomology groups
and the validity of Poincar{\'e} duality, see e.g.~\cite[Chapter I, \S 5]{Bott}.
Poincar{\'e} duality will be frequently used in our work.


\subsection{Spacetimes}
We briefly review some standard notions of Lorentzian geometry, see \cite{Bar:2007zz,Bar:2011iu,Waldmann}
for a more detailed discussion.
\sk

A {\bf Lorentzian manifold} is a triple $(M,\o,g)$, where $M$ is a manifold, $\o$ is an orientation on $M$
 and $g$ is a Lorentzian metric on $M$ of signature $(-,+,\dots,+)$. 
The orientation is necessary to construct a Hodge operator.
Given also a time-orientation $\t$ on a Lorentzian manifold $(M,\o,g)$,
we call the quadruple $(M,\o,g,\t)$ a {\bf spacetime}.
Let $(M,\o,g,\t)$ be a spacetime and $S\subseteq M$ be a subset.
We denote the {\bf causal future/past} of $S$ in $M$ by
$J_M^\pm(S)$. Furthermore, $J_M(S) := J_M^+(S) \cup J_M^-(S)$.
 The subset $S\subseteq M$ is called {\bf causally compatible}, if 
$J_S^\pm(\{x\}) = J_M^\pm(\{x\})\cap S$, for all $x\in S$.
A {\bf Cauchy surface} in a spacetime $(M,\o,g,\t)$ is a subset $\Sigma\subseteq M$, which is met
exactly once by every inextensible timelike curve. A spacetime $(M,\o,g,\t)$ is called {\bf globally hyperbolic},
if it contains a Cauchy surface.


\subsection{\label{subsec:prbu}Principal bundles}
We briefly  review standard notions of principal bundles and refer to the textbooks
\cite{Kobayashi,Baum:2009zz} for more details.
\begin{defi}\label{prinGbun}
Let $M$ be a manifold and $G$ a  Lie group.
A {\bf principal $G$-bundle over $M$} is a pair $(P,r)$, where $P$ is a manifold
and $r:P\times G\to P\,,~(p,g)\mapsto r_g(p) =: p\,g$ is a smooth right $G$-action, such that
\begin{itemize}
\item[(i)] the right $G$-action $r$ is free,
\item[(ii)] $M=P/G$ is the quotient of the $G$-action $r$
and the canonical projection $\pi:P\to M$ is smooth,
\item[(iii)] $P$ is locally trivial, that is, there exists for every $x\in M$ an open neighborhood $U\subseteq M$
and a diffeomorphism $\psi: \pi^{-1}[U] \to U\times G$, which is $G$-equivariant, i.e.,~for all
$p\in\pi^{-1}[U]$ and $g\in G$, $\psi(p\,g) = \psi(p)\,g$, and fibre preserving, i.e.~$\mathrm{pr}_1\circ\psi = \pi$.
The right $G$-action on $U\times G$ is the following: For all $x\in U$ and $g,g^\prime\in G$, $(x,g)\,g^\prime := (x,g\,g^\prime)$
and $\mathrm{pr}_1:U\times G\to U$ denotes the canonical projection on the first factor.
\end{itemize}
We call $P$ the {\bf total space}, $M$ the {\bf base space}, $G$ the {\bf structure group} and
$\pi$ the {\bf projection}. 
\end{defi}

\begin{defi}
Let $G$ be a Lie group. 
For $i=1,2$, let $M_i$ be a manifold and
$(P_i,r_i)$ a principal $G$-bundle over $M_i$.
A {\bf principal $G$-bundle map} is a $G$-equivariant map $f: P_1\to P_2$,
i.e., for all $p\in P_1$ and $g\in G$, $f(p\,g) = f(p)\,g$.
\end{defi}
\begin{rem}
Notice that for any principal $G$-bundle map $f:P_1\to P_2$ there exists a unique smooth map
$\underline{f}:M_1\to M_2$, such that the following diagram commutes:
\begin{flalign}
\xymatrix{
P_1\ar[d]_-{\pi_1}\ar[rr]^-{f} && P_2\ar[d]^-{\pi_2}\\
M_1 \ar[rr]^-{\underline{f}} && M_2
}
\end{flalign}
\end{rem}

We can now define a suitable category of principal bundles over globally hyperbolic spacetimes.
\begin{defi}\label{def:prbucat}
Let $G$ be a Lie group. 
The category $G{-}\PrBu$ consists of the following objects and morphisms:
\begin{itemize}
\item An object is a tuple $\Xi=\big((M,\o,g,\t),(P,r)\big)$, where $(M,\o,g,\t)$
is a globally hyperbolic spacetime and $(P,r)$ is a principal $G$-bundle over $M$.
\item A morphism $f:\Xi_1\to \Xi_2$ is a principal $G$-bundle
 map $f:P_1\to P_2$, such that $\underline{f}:M_1\to M_2$ is an orientation and 
 time-orientation preserving isometric embedding
with $\underline{f}[M_1]\subseteq M_2$ causally compatible and open.
\end{itemize}
\end{defi}
\sk

Given any smooth left $G$-action $\rho: G\times N\to N\,,~(g,\xi)\mapsto g\,\xi$ on a manifold $N$
we can construct a covariant functor from $G{-}\PrBu$ to the category of fibre bundles over globally hyperbolic
spacetimes. This is the well-known associated bundle construction. If $N$ is further
a vector space and $\rho$ a linear representation we obtain
a covariant functor $\rho: G{-}\PrBu\to \VeBu$, where the latter category is defined as follows:
\begin{defi}\label{def:vebucat}
The category $\VeBu$ consists of the following objects and morphisms:
\begin{itemize}
\item An object  is a pair $\mathcal{V} =\big((M,\o,g,\t), (\bfV,M,\pi_\bfV,V)\big)$, where $(M,\o,g,\t)$
is a globally hyperbolic spacetime and $ (\bfV,M,\pi_\bfV,V)$ is a vector bundle over $M$.
\item A morphism $\mathcal{V}_1 \to\mathcal{V}_2$
 is a vector bundle map $\big(f:\bfV_1\to \bfV_2,\underline{f}:M_1\to M_2\big)$,
 such that  $f\vert_x: \bfV_1\vert_x\to \bfV_2\vert_{\underline{f}(x)}$ is a vector space isomorphism, for all $x\in M_1$, and
  $\underline{f}:M_1\to M_2$ is an orientation and time-orientation preserving isometric embedding
with $\underline{f}[M_1]\subseteq M_2$ causally compatible and open.
\end{itemize}
\end{defi}

Of particular relevance for us is the {\bf adjoint bundle}. Explicitly, it is the following covariant functor 
$\ad :G{-}\PrBu \to \VeBu$: To any object $\Xi=\big((M,\o,g,\t),(P,r)\big)$ we associate
$\ad(\Xi) = \big((M,\o,g,\t),(P\times_\ad \g,M,\pi_\ad,\g)\big)$, where
$\g$ is the Lie algebra of $G$, $P\times_\ad \g := (P\times \g)/G$
is the quotient by the right $G$-action $P\times \g\times G \to P\times \g\,,~(p,\xi,g)\mapsto (p\,g,\ad_{g^{-1}}(\xi))$
and $\pi_\ad$ denotes the map obtained from the projection $P\times\g\to P$ via the quotient.
To any morphism $f:\Xi_1\to \Xi_2$ we associate $\ad(f): \ad(\Xi_1) \to \ad(\Xi_2)$,
which is the vector bundle map (covering $\underline{f}:M_1\to M_2$) given by
\begin{flalign}
\ad(f): P_1\times_\ad \g \to P_2\times_\ad \g~,~~[p,\xi]\mapsto [f(p),\xi]~.
\end{flalign}
We review the following well-known 
\begin{lem}\label{lem:trivadbund}
Let $M$ be a manifold, $G$ an Abelian Lie group and $(P,r)$ a  principal $G$-bundle over $M$.
Then $P\times_\ad \g = M\times \g$, i.e.~the adjoint bundle is trivial.
\end{lem}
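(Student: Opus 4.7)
The plan is to exploit the fact that when $G$ is Abelian, the adjoint action of $G$ on its Lie algebra $\g$ is trivial, so that quotienting $P\times\g$ by the diagonal $G$-action collapses to quotienting only the $P$-factor.

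First I would verify the triviality of $\ad$ on the algebra level: for any $g\in G$ and $\xi\in\g$, one has
\begin{flalign}
\ad_g(\xi)\;=\;\frac{\dd}{\dd t}\Big|_{t=0}\, g\,\exp(t\xi)\,g^{-1}\;=\;\frac{\dd}{\dd t}\Big|_{t=0}\exp(t\xi)\;=\;\xi~,
\end{flalign}
where in the middle equality I use that $G$ is Abelian so that $g$ commutes with the one-parameter subgroup generated by $\xi$. Equivalently, $\ad_{g^{-1}}=\id_\g$ for every $g\in G$.

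Next I would spell out what this means for the defining equivalence relation of $P\times_\ad\g=(P\times\g)/G$. The right $G$-action $(p,\xi,g)\mapsto (p\,g,\ad_{g^{-1}}(\xi))$ reduces to $(p,\xi,g)\mapsto(p\,g,\xi)$, which acts only on the first factor. Consequently, the canonical map
\begin{flalign}
\Phi: P\times_\ad\g \longrightarrow M\times \g~,~~[p,\xi]\longmapsto \big(\pi(p),\xi\big)
\end{flalign}
is well-defined (since $[p\,g,\xi]$ maps to $(\pi(p\,g),\xi)=(\pi(p),\xi)$), smooth, and a fibrewise linear isomorphism covering the identity on $M$. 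A smooth inverse is constructed locally by picking a local section $\sigma:U\to P$ of the principal bundle and setting $(x,\xi)\mapsto[\sigma(x),\xi]$; independence of the choice of $\sigma$ on overlaps follows again from the triviality of the adjoint action, so these local inverses glue to a global smooth inverse of $\Phi$.

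I do not expect any serious obstacle here: the main ingredient is purely algebraic (that $\ad$ is trivial for Abelian $G$), and the remaining bundle-theoretic argument is a routine verification that the quotient of a product bundle by a factor-wise free action is the obvious product. The only place where one must be slightly careful is the smoothness of $\Phi^{-1}$, which is why I would invoke local sections of $\pi:P\to M$ rather than attempting a global formula.
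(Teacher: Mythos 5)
Your proposal is correct and follows exactly the paper's argument: the paper's own proof is the one-line observation that the adjoint action is trivial for Abelian $G$, so $(P\times\g)/G = P/G\times\g = M\times\g$. You have simply spelled out the algebraic triviality of $\ad$ and the smoothness of the induced bundle isomorphism in more detail, which the paper leaves implicit.
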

\begin{proof}
Since $G$ is Abelian the adjoint action is trivial, i.e.
 $P\times_\ad \g = (P\times \g)/G = P/G\times \g = M\times \g$.
\end{proof}


\subsection{Principal connections}
Connections on principal bundles constitute the fundamental degrees of freedom
in gauge theories of Yang-Mills type. In this subsection we will review the relevant definitions and properties
following \cite{Kobayashi,Baum:2009zz}.

\begin{defi}\label{def:princon}
Let $M$ be a manifold, $G$ a Lie group and $(P,r)$ a principal $G$-bundle over $M$. 
A {\bf connection form} on $(P,r)$ is a $\g$-valued one-form $\omega \in \Omega^1(P,\g)$ satisfying
the following two conditions:
\begin{itemize}
\item[(i)] $\omega(X^\xi_p) = \xi$, for all $\xi\in\g$ and $p\in P$, where 
$X_p^\xi \in T_pP$ is the fundamental vector at $p$ corresponding to $\xi$.
\item[(ii)] $r_g^\ast(\omega) = \ad_{g^{-1}}(\omega) $, for all $g\in G$. 
\end{itemize}
We denote the set of all connection forms by $\Con(P)$.
\end{defi}

\begin{rem}
Due to \cite[Chapter II, Theorem 2.1]{Kobayashi} there exists a connection form, i.e.~$\Con(P)\neq \emptyset$.
\end{rem}

\begin{defi}
Let $\Omega^k(P,\g)$ be the vector space of $\g$-valued $k$-forms, $k= 0,\dots,\dim(P)$.
\begin{itemize}
\item[a)] We call $\eta\in\Omega^k(P,\g)$ {\bf $G$-equivariant}, if $r_g^\ast(\eta) = \ad_{g^{-1}}(\eta) $,
for all $g\in G$.
\item[b)] We call $\eta\in \Omega^k(P,\g)$ {\bf horizontal}, if $\eta(Y_1,\dots,Y_k)=0$ whenever
at least one $Y_i\in T_pP$ is vertical, i.e.~$\pi_\ast(Y_i)=0$.
\end{itemize}
The vector space of $G$-equivariant and horizontal $\g$-valued $k$-forms 
is denoted by $\Omega^k_{\hor}(P,\g)^{\eqv}$.
\end{defi}

According to \cite[Chapter II, Section 5]{Kobayashi}, see also \cite[Satz 3.5]{Baum:2009zz},
we have the following
\begin{propo}\label{propo:eqvhoriso}
Let $M$ be a manifold, $G$ a Lie group and $(P,r)$ a principal $G$-bundle over $M$. 
Then, for all $k=0,\dots,\dim(M)$, the vector space $\Omega^k_{\hor}(P,\g)^{\eqv}$ is isomorphic
 to the vector space of $P\times_\ad \g$-valued $k$-forms on $M$, $\Omega^k(M,P\times_\ad \g)$.
\end{propo}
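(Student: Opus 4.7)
The plan is to construct explicitly mutually inverse linear maps between $\Omega^k_{\hor}(P,\g)^{\eqv}$ and $\Omega^k(M,P\times_\ad\g)$ and to verify that they are well defined, smooth, and inverse to each other. This is classical; the main care lies in the independence checks, which pin down exactly why horizontality and $G$-equivariance are the correct conditions.

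First I would build the forward map. Given $\eta\in\Omega^k_{\hor}(P,\g)^{\eqv}$, define, for every $x\in M$ and $v_1,\dots,v_k\in T_xM$,
\begin{flalign}
\widetilde\eta_x(v_1,\dots,v_k) \;:=\; \big[p,\,\eta_p(Y_1,\dots,Y_k)\big] \;\in\; (P\times_\ad\g)\vert_x,
\end{flalign}
where $p\in\pi^{-1}[\{x\}]$ is chosen arbitrarily and $Y_i\in T_pP$ are arbitrary lifts, i.e.\ $\pi_\ast(Y_i)=v_i$. The key step is to check that $\widetilde\eta_x$ does not depend on these choices. Independence of the lifts follows from horizontality of $\eta$: any two lifts of $v_i$ differ by a vertical vector, and $\eta$ vanishes whenever one of its arguments is vertical, so $\eta_p(Y_1,\dots,Y_k)$ is unchanged. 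Independence of the point $p$ in the fibre follows from $G$-equivariance: replacing $p$ by $p\,g$ forces the lifts to be replaced by $(r_g)_\ast Y_i$, and then
\begin{flalign}
\eta_{p\,g}\big((r_g)_\ast Y_1,\dots,(r_g)_\ast Y_k\big)=\big(r_g^\ast\eta\big)_p(Y_1,\dots,Y_k)=\ad_{g^{-1}}\!\big(\eta_p(Y_1,\dots,Y_k)\big),
\end{flalign}
which is exactly the relation that makes $[p\,g,\ad_{g^{-1}}\eta_p(\cdots)]$ equal to $[p,\eta_p(\cdots)]$ in $P\times_\ad\g$. Smoothness of $\widetilde\eta$ is then a local matter: on a local trivialization $\pi^{-1}[U]\cong U\times G$ one may pick the constant-in-$G$ lifts of a local frame of $TU$, turning the construction into a smooth algebraic operation on the components of $\eta$.

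Next I would construct the inverse map. Given $\alpha\in\Omega^k(M,P\times_\ad\g)$, define $\eta\in\Omega^k(P,\g)$ by the rule that $\eta_p(Y_1,\dots,Y_k)$ is the unique element $\xi\in\g$ such that $[p,\xi]=\alpha_{\pi(p)}\big(\pi_\ast Y_1,\dots,\pi_\ast Y_k\big)$; uniqueness of $\xi$ follows from freeness of the $G$-action on $P\times\g$ in the first factor. Horizontality is immediate: if some $Y_i$ is vertical then $\pi_\ast Y_i=0$, so $\alpha_{\pi(p)}(\cdots)=0$ and hence $\xi=0$. Equivariance is obtained by evaluating the defining relation at $p\,g$: since $\pi\circ r_g=\pi$, the tangent vectors $(r_g)_\ast Y_i$ still project to the same $v_i$, so $[p\,g,\eta_{p\,g}((r_g)_\ast Y_\bullet)]=\alpha_{\pi(p)}(v_\bullet)=[p,\eta_p(Y_\bullet)]=[p\,g,\ad_{g^{-1}}\eta_p(Y_\bullet)]$, whence $r_g^\ast\eta=\ad_{g^{-1}}\eta$. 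Smoothness is again checked in a local trivialization.

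Finally, linearity in both directions is clear from the definitions, and the two maps are inverse to each other by construction: starting from $\eta$ and unraveling both definitions recovers $\eta_p(Y_1,\dots,Y_k)$ because $[p,\xi]=[p,\eta_p(Y_\bullet)]$ forces $\xi=\eta_p(Y_\bullet)$, and similarly in the other direction. The only delicate point in the whole argument is the well-definedness of the forward map, which is where horizontality and equivariance enter in an essential way; the rest is bookkeeping.
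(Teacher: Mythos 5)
Your proof is correct and is precisely the standard descent/pull-back argument; the paper itself gives no proof of this proposition but defers to the literature (Kobayashi--Nomizu, Chapter II, Section 5, and Baum, Satz 3.5), where exactly this construction appears. The well-definedness checks you single out (lift-independence via horizontality, fibre-point-independence via equivariance and the defining relation $[p\,g,\ad_{g^{-1}}\xi]=[p,\xi]$ of $P\times_\ad\g$) are indeed the only substantive steps, and they are handled correctly.
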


\begin{rem}\label{rem:eqvhoriso}
Let $G$ be an Abelian Lie group. Due to Lemma \ref{lem:trivadbund} we have $P\times_\ad\g = M\times \g$
and hence $\Omega^k(M,P\times_\ad \g) = \Omega^k(M,\g)$, for all $k=0,\dots,\dim(M)$.
In this case the isomorphism of Proposition \ref{propo:eqvhoriso}  is given by the
pull-back map $\pi^\ast:\Omega^k(M,\g) \to \Omega^k_\hor(P,\g)^\eqv$.
We shall denote in the following the inverse of this map simply by an underline,
i.e.~for all $\eta\in \Omega_\hor^k(P,\g)^\eqv$, $\underline{\eta} := \pi^{\ast\,-1}(\eta)\in\Omega^k(M,\g)$.
\end{rem}

There is a canonical action of the Abelian group $\Omega^1_\hor(P,\g)^\eqv $
on $\Con(P)$,
\begin{flalign}
\Con(P)\times \Omega^1_\hor(P,\g)^\eqv \to \Con(P)~,~~(\omega,\eta)\mapsto \omega + \eta~.
\end{flalign}
This action is free and transitive, thus $\Con(P)$ is an affine space over $\Omega^1_\hor(P,\g)^\eqv $ and, due
 to Proposition \ref{propo:eqvhoriso}, also over $\Omega^1(M,P\times_\ad\g)$. For any 
 Abelian Lie group $G$, $\Con(P)$ is an affine space over $\Omega^1(M,\g)$.
\sk

In category theoretical terms, the above construction implies that there exists a contravariant functor
$\Con: G{-}\PrBu \to \Aff$, where $\Aff$ is the category of (not necessarily finite dimensional) affine spaces.
To any object $\Xi$  the functor associates
the affine space $\Con(P)$ modeled on $\Omega^1(M,P\times_\ad \g)$. To
a morphism $f:\Xi_1\to \Xi_2$ the functor associates the affine map given by restricting the
 pull-back $f^\ast: \Omega^1(P_2,\g) \to \Omega^1(P_1,\g)$.

\begin{defi}
Let $M$ be a manifold, $G$ a  Lie group and $(P,r)$ a principal $G$-bundle over $M$.
The {\bf curvature} is the following map
\begin{flalign}
\mathcal{F}: \Con(P) \to \Omega^2_\hor(P,\g)^\eqv~,~~\omega \mapsto \mathcal{F}(\omega) = 
\dd\omega +\frac{1}{2} [\omega,\omega]_\g ~,
\end{flalign}
where $\dd$ is the exterior differential  and $[\cdot,\cdot]_\g$ denotes the Lie bracket on $\g$.
\end{defi}

\begin{rem}
Let $G$ be an Abelian Lie group. Since in this case the Lie bracket $[\cdot,\cdot]_\g$ is trivial, the curvature
reads $\mathcal{F}(\omega) = \dd \omega$, for all $\omega\in \Con(P)$.
Furthermore, applying  Remark \ref{rem:eqvhoriso} we can consider equivalently the curvature as a map
\begin{flalign}\label{eqn:curvaturedown}
\underline{\mathcal{F}} : \Con(P) \to \Omega^2(M,\g)~,~~\omega \mapsto \underline{\mathcal{F}}(\omega) =
\underline{\mathcal{F}(\omega)}= \underline{\dd\omega}~.
\end{flalign}
As a consequence of the (Abelian) Bianchi identity $\dd\mathcal{ F}(\omega)= \dd \dd\omega=0$, for all $\omega\in\Con(P)$,
we obtain that $\underline{\mathcal{F}}(\omega)\in \Omega_\dd^2(M,\g)$ is closed, for all $\omega\in\Con(P)$.
\end{rem}

The next statement is valid only for Abelian Lie groups. It implies that Abelian Yang-Mills theories are not self-interacting and hence it simplifies drastically our construction of the associated quantum field theory.
\begin{lem}\label{lem:curvaffineop}
Let $M$ be a manifold, $G$ an Abelian Lie group and $(P,r)$ a principal $G$-bundle over $M$.
The map $\underline{\mathcal{F}}:\Con(P)\to\Omega^2(M,\g)$ is an affine map
with linear part $\underline{\mathcal{F}}_V: \Omega^1(M,\g) \to \Omega^2(M,\g)\,,~\eta\mapsto \dd\eta$.
\end{lem}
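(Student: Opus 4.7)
The plan is to verify affineness by direct computation, exploiting the Abelian simplification of the curvature formula together with the identification of the affine structure on $\Con(P)$ via Remark \ref{rem:eqvhoriso}. Recall that for Abelian $G$ the curvature reduces to $\mathcal{F}(\omega) = \dd\omega$, and that the affine action of $\Omega^1(M,\g)$ on $\Con(P)$ is given by $(\omega,\eta) \mapsto \omega + \pi^\ast\eta$, where $\pi^\ast\eta \in \Omega^1_\hor(P,\g)^\eqv$ is the unique horizontal equivariant form with underline $\eta$.

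The first step is to fix an arbitrary $\omega \in \Con(P)$ and an arbitrary $\eta \in \Omega^1(M,\g)$ and compute
\begin{flalign}
\mathcal{F}(\omega + \pi^\ast \eta) = \dd(\omega + \pi^\ast\eta) = \dd\omega + \dd\pi^\ast\eta = \dd\omega + \pi^\ast \dd\eta~,
\end{flalign}
where the last equality uses that the exterior differential commutes with pull-back along $\pi:P\to M$. The second step is to apply the underline. Since the pull-back $\pi^\ast : \Omega^\bullet(M,\g) \to \Omega^\bullet_\hor(P,\g)^\eqv$ is by Remark \ref{rem:eqvhoriso} the inverse of the underline, we have $\underline{\pi^\ast \dd\eta} = \dd\eta$, and therefore
\begin{flalign}
\underline{\mathcal{F}}(\omega + \pi^\ast\eta) = \underline{\dd\omega} + \underline{\pi^\ast \dd\eta} = \underline{\mathcal{F}}(\omega) + \dd\eta~.
\end{flalign}

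This identity is exactly the defining property of an affine map between affine spaces (here $\Con(P)$ modeled on $\Omega^1(M,\g)$ and $\Omega^2(M,\g)$ regarded as an affine space over itself), with linear part $\underline{\mathcal{F}}_V(\eta) = \dd\eta$. There is no real obstacle to this argument; the only subtlety worth checking is the compatibility $\dd \circ \pi^\ast = \pi^\ast \circ \dd$, which is standard, and the fact that $\pi^\ast \dd\eta$ is indeed horizontal and equivariant so that taking its underline is meaningful. Both are immediate, the former from naturality of $\dd$ and the latter because $\pi^\ast$ lands in $\Omega^\bullet_\hor(P,\g)^\eqv$ by construction.
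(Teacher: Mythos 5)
Your computation is correct and is essentially identical to the paper's own proof: both expand $\underline{\mathcal{F}}(\omega+\pi^\ast(\eta))$ using $\dd\circ\pi^\ast=\pi^\ast\circ\dd$ and the fact that the underline inverts $\pi^\ast$ on horizontal equivariant forms, yielding the linear part $\eta\mapsto\dd\eta$. No issues.
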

\begin{proof}
Let $\omega\in \Con(P)$ and $\eta\in\Omega^1(M,\g)$, then
$\underline{\mathcal{F}}(\omega + \pi^\ast(\eta)) = \underline{\dd\omega + \dd\pi^\ast(\eta)}
= \underline{\mathcal{F}}(\omega) + \underline{\pi^\ast(\dd \eta)} 
= \underline{\mathcal{F}}(\omega) +\dd\eta$.
\end{proof}
Rephrasing this statement in the language of category theory,
we obtain the following important insight: For an Abelian Lie group $G$, the curvature is a natural transformation
$\underline{\mathcal{F}}: \Con \Rightarrow \Omega^2_{\mathrm{base}}$,
where $\Omega^2_{\mathrm{base}}:G{-}\PrBu \to \Aff$ is the contravariant functor
associating to any object $\Xi$ the $\g$-valued $2$-forms on the base space 
$\Omega^2(M,\g)$ (regarded as an affine space)
and to a morphism $f:\Xi_1\to \Xi_2$ the pull-back $\underline{f}^\ast: \Omega^2(M_2,\g)\to \Omega^2(M_1,\g)$.


\subsection{The Atiyah sequence}\label{subsec:Atiyah}
We briefly review the Atiyah sequence \cite{Atiyah} using a category theoretical language.
Consider the following covariant functors from $G{-}\PrBu$ to $\VeBu$:
\begin{enumerate}
\item adjoint bundle functor $\ad$, given at the end of Section \ref{subsec:prbu}.
\item base space tangent bundle functor $T_\mathrm{base}$, with
$T_\mathrm{base}(\Xi) := \big((M,\o,g,\t),(TM,M,\pi_{TM},\bbR^{\dim(M)})\big)$ and
$T_\mathrm{base}(f):=\big(\underline{f}_\ast:TM_1\to TM_2,\underline{f}:M_1\to M_2\big)$.
\item quotient of the total space tangent bundle functor $T_{\mathrm{total}/G}$, with $T_{\mathrm{total}/G}(\Xi) := 
\big((M,\o,g,\t),(TP/G,\linebreak M,\pi\circ \pi_{TP},\bbR^{\dim(P)})\big)$
and $T_{\mathrm{total}/G}(f):=\big(f_\ast:TP_1/G\to TP_2/G,\underline{f}:M_1\to M_2\big)$.
\item trivial associated bundle functor $\rho_{0}$, with $\rho_{0}(\Xi) := \big((M,\o,g,\t),(M\times \{0\},M,\mathrm{pr}_1,\{0\})\big)$
and $\rho_0(f):= \big(\underline{f}\times\id_{\{0\}} :M_1\times\{0\} \to M_2\times\{0\},\underline{f}:M_1\to M_2\big)$.
\end{enumerate}
For the following construction let us compose the four covariant functors above with the forgetful
 functor which forgets the fibre-wise invertibility of the morphisms in the category $\VeBu$.
For keeping the notation as simple as possible we do not introduce a new symbol for the latter category
and just remember this convention for the rest of this subsection.
Then there exists a sequence of natural transformations
\begin{flalign}\label{eqn:Atiyahnattrans}
\xymatrix{
\rho_0 \ar@{=>}[r]& \ad \ar@{=>}[r] &  {T_{\mathrm{total}/G}} \ar@{=>}[r] & T_{\mathrm{base}} \ar@{=>}[r] & \rho_0~. 
}
\end{flalign}
Explicitly, for every object $\Xi = \big((M,\o,g,\t),(P,r)\big)$ in $G{-}\PrBu$, there exists
a sequence of vector bundle maps (covering $\id_M$), called the {\bf Atiyah sequence},
\begin{flalign}\label{eqn:Atiyah}
\xymatrix{
M\times \{0\} \ar[r]^-{\alpha} & P\times_\ad \g \ar[r]^-{\iota} & TP/G\ar[r]^-{\pi_\ast} & TM \ar[r]^-{\beta} & M\times \{0\}~,
}
\end{flalign}
where $\alpha(x,0) = [p,0]$, with $p\in \pi^{-1}[\{x\}]$ arbitrary, $\iota([p,\xi]) = [X^\xi_p]$, 
$\pi_\ast([Y]) = \pi_\ast(Y)$ and $\beta(X) = (\pi_{TM}(X),0)$.
The following statement is proven in \cite{Atiyah}.
\begin{propo}\label{propo:atiyahexact}
The Atiyah sequence (\ref{eqn:Atiyah}) is a short exact sequence.
\end{propo}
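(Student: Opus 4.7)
The plan is to prove exactness fibrewise at an arbitrary point $x\in M$, after first verifying that the maps $\iota$ and $\pi_\ast$ are well-defined as vector bundle maps. Once $\alpha$ and $\beta$ are recognized as the zero sections of their targets (resp.\ domains) restricted to the zero rank bundle $M\times\{0\}$, the nontrivial content reduces to exactness of the sequence
\begin{equation*}
0\longrightarrow \g \stackrel{\iota_x}{\longrightarrow} T_pP/G \stackrel{\pi_\ast}{\longrightarrow} T_xM\longrightarrow 0
\end{equation*}
for any choice of $p\in\pi^{-1}[\{x\}]$, which implies injectivity of $\iota$, equality $\Imm(\iota)=\Ker(\pi_\ast)$ and surjectivity of $\pi_\ast$ at the level of vector bundles.

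The first step is well-definedness. For $\iota$ I would check that $[p\,g,\ad_{g^{-1}}(\xi)]\mapsto [X^{\ad_{g^{-1}}(\xi)}_{p\,g}]$ agrees with $[p,\xi]\mapsto[X^\xi_p]$; this follows from the standard equivariance property of fundamental vector fields, $(r_g)_\ast(X^\xi_p)=X^{\ad_{g^{-1}}(\xi)}_{p\,g}$, which identifies the two tangent vectors after quotienting by the right $G$-action on $TP$. Smoothness and fibrewise linearity of $\iota$, $\pi_\ast$ are then clear from the fact that they are induced by smooth linear maps on representatives.

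For the fibrewise exactness I would argue as follows. Injectivity of $\iota_x$: if $X^\xi_p=0$, then since the $G$-action is free the fundamental vector field map $\g\to T_pP$, $\xi\mapsto X^\xi_p$, is injective (its kernel is the Lie algebra of the stabilizer of $p$, which is trivial), hence $\xi=0$. Surjectivity of $\pi_\ast$: this is immediate because $\pi:P\to M$ is a surjective submersion, so $\pi_\ast:T_pP\to T_xM$ is already surjective before quotienting. For $\Imm(\iota_x)\subseteq \Ker(\pi_\ast)$: fundamental vectors are tangent to the fibres of $\pi$, hence vertical, so $\pi_\ast(X^\xi_p)=0$. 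The reverse inclusion $\Ker(\pi_\ast)\subseteq \Imm(\iota_x)$ is the content of the statement that every vertical tangent vector at $p$ is a fundamental vector; this follows either by counting dimensions, $\dim\Ker(\pi_\ast)=\dim(P)-\dim(M)=\dim(G)=\dim(\g)$ combined with injectivity of $\xi\mapsto X^\xi_p$, or equivalently by observing that the orbit map $G\to P$, $g\mapsto p\,g$, is an embedding (again by freeness and properness of the action coming from local triviality) whose image is exactly the fibre, and its differential at $e$ identifies $\g$ with the vertical subspace.

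The step I expect to require the most care is the well-definedness of $\iota$ and the identification $\Ker(\pi_\ast)=\Imm(\iota_x)$ at the level of the quotient $TP/G$: one must be careful that the quotient is taken before comparing, so that the fibre of $TP/G$ over $x$ is canonically $T_pP/\sim$ with $Y\sim (r_g)_\ast(Y)$, and the identifications above respect this equivalence. Apart from this bookkeeping, the remaining verifications are routine, and the assembly of fibrewise exactness into exactness of the sequence of vector bundles is automatic because all maps cover $\id_M$ and have constant rank.
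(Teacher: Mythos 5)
Your proof is correct. Note, however, that the paper does not prove this proposition at all: it simply cites Atiyah's original 1957 article, so there is no in-paper argument to compare against. What you have written is the standard self-contained verification, and all the key points are in order: well-definedness of $\iota$ on the quotient via the equivariance identity $(r_g)_\ast(X^\xi_p)=X^{\ad_{g^{-1}}(\xi)}_{p\,g}$, injectivity of $\xi\mapsto X^\xi_p$ from freeness of the action, surjectivity of $\pi_\ast$ from $\pi$ being a submersion, the inclusion $\Imm(\iota_x)\subseteq\Ker(\pi_\ast)$ from verticality of fundamental vectors, and the reverse inclusion by the dimension count $\dim(P)-\dim(M)=\dim(\g)$. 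Your closing remark about the bookkeeping in $TP/G$ is the right thing to flag: since the action is free, each $G$-orbit in $TP$ lying over $\pi^{-1}[\{x\}]$ meets $T_pP$ in exactly one point, so the fibre of $TP/G$ over $x$ is canonically identified with $T_pP$ for any chosen $p$, and the fibrewise sequence you write down is exactly the fibre of the bundle sequence. With that identification made explicit, the assembly into exactness of the sequence of vector bundle maps covering $\id_M$ is indeed automatic.
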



\subsection{The bundle of connections}
A vector bundle map $\lambda:TM\to TP/G$ covering the identity $\id_M:M\to M$ 
is called a {\bf splitting of the Atiyah sequence} (\ref{eqn:Atiyah}), if $\pi_\ast\circ\lambda = \id_{TM}$.
The set of splittings of the Atiyah sequence can be modeled by a subbundle
of the homomorphism bundle $\Hom(TM,TP/G)$, called the {\bf bundle of connections}. 
For later convenience we use once more a category theoretical language
to describe this subbundle.
\sk

As a first step, we address the construction of covariant functors describing homomorphism bundles.
Let $\mathfrak{F},\mathfrak{G} : G{-}\PrBu \to \VeBu$ be two covariant functors. Using the fact that
all morphisms in $\VeBu$ are fibre-wise invertible we can construct a covariant
functor $\Hom_{\mathfrak{F},\mathfrak{G}}:  G{-}\PrBu \to \VeBu$ as follows:
For any object $\Xi$  we set 
\begin{flalign}
\Hom_{\mathfrak{F},\mathfrak{G}}(\Xi):=\Big((M,\o,g,\t),\big(\Hom(\mathfrak{F}(\Xi),
\mathfrak{G}(\Xi)),M,\pi_{\mathfrak{F}(\Xi),\mathfrak{G}(\Xi)},\bbR^{\mathrm{rank}(\mathfrak{F}(\Xi))\times\mathrm{rank}(\mathfrak{G}(\Xi))}\big)\Big)~,
\end{flalign}
where, with a slight abuse of notation, we denoted the total space of 
the vector bundle contained in the object $\mathfrak{F}(\Xi)$ also as $\mathfrak{F}(\Xi)$.
To any morphism $f:\Xi_1\to \Xi_2$ we associate the following vector bundle map (covering
$\underline{f}:M_1\to M_2$)
\begin{flalign}
\Hom_{\mathfrak{F},\mathfrak{G}}(f) :  \Hom(\mathfrak{F}(\Xi_1),\mathfrak{G}(\Xi_1))\to 
\Hom(\mathfrak{F}(\Xi_2),\mathfrak{G}(\Xi_2))~,~~L\mapsto \mathfrak{G}(f) \circ L\circ \mathfrak{F}(f)^{-1}~. 
\end{flalign}

We are going to interpret the splitting condition in terms of a suitable natural transformation. As in Section
\ref{subsec:Atiyah}, in the construction of the natural transformation, we are dropping the condition according to which the morphisms in $\VeBu$ are fibre-wise invertible.
With the natural transformation $T_{\mathrm{total}/G}\Rightarrow T_{\mathrm{base}}$ introduced in (\ref{eqn:Atiyahnattrans})
we construct a natural transformation $\Hom_{T_\mathrm{base},T_{\mathrm{total}/G}} \Rightarrow 
\Hom_{T_\mathrm{base},T_\mathrm{base}}$
by setting, for any object $\Xi$ in $G{-}\PrBu$,
\begin{flalign}\label{eqn:nautraltemp}
l_{\pi_\ast} : \Hom(TM,TP/G) \to \Hom(TM,TM)~,~~\lambda \mapsto \pi_\ast\circ \lambda~.
\end{flalign}
We induce on the submanifold $\mathcal{C}(\Xi) := l_{\pi_\ast}^{-1}(\id_{TM})$ the structure
of a subbundle of $\Hom(TM,TP/G)$. 
We denote this subbundle by $(\mathcal{C}(\Xi),M,\pi_{\mathcal{C}(\Xi)},A^{\dim(M)\times \dim(\g)})$,
where $A^{\dim(M)\times \dim(\g)}$ is the unique (up to isomorphism) affine space 
modeled on $\bbR^{\dim(M)\times \dim(\g)}$.
As a consequence of Proposition \ref{propo:atiyahexact}, 
$(\mathcal{C}(\Xi),M,\pi_{\mathcal{C}(\Xi)},A^{\dim(M)\times \dim(\g)})$
is an affine bundle modeled on the homomorphism bundle $\Hom(TM,P\times_\ad \g)$.
Our definition of affine bundles is the one of \cite[Chapter 6.22]{KMS} and \cite[Definition 2.11]{Benini:2012vi}.
Furthermore, since (\ref{eqn:nautraltemp}) is a natural transformation,
the bundle of connections can be seen as a covariant functor $\mathcal{C}: G{-}\PrBu \to \AfBu$, 
where the latter category is defined as follows:
\begin{defi}
The category $\AfBu$ consists of the following objects and morphisms:
\begin{itemize}
\item An object is a triple $\mathcal{A}=\big((M,\o,g,\t), (\bfA,M,\pi_\bfA,A),(\bfV,M,\pi_\bfV,V)\big)$, where $(M,\o,g,\t)$
is a globally hyperbolic spacetime and $ (\bfA,M,\pi_\bfA,A)$ is an affine bundle over $M$ modeled on
the vector bundle $(\bfV,M,\pi_\bfV,V)$.
\item A morphism  $\mathcal{A}_1\to \mathcal{A}_2 $
 is a fibre bundle map $\big(f:\bfA_1\to \bfA_2,\underline{f}:M_1\to M_2\big)$,
 such that  $f\vert_x:\bfA_1\vert_x\to \bfA_2\vert_{\underline{f}(x)}$ is an affine space isomorphism, for all $x\in M_1$, and
  $\underline{f}:M_1\to M_2$ is an orientation and time-orientation preserving isometric embedding
with $\underline{f}[M_1]\subseteq M_2$ causally compatible and open.
\end{itemize}
\end{defi}
\begin{rem}
Every morphism $(f,\underline{f})$ in $\AfBu$ determines a unique vector bundle map between the underlying vector bundles
(that is a morphism in $\VeBu$) by taking fibre-wise the linear part. We call this vector bundle map with a slight abuse of notation
 the linear part of $(f,\underline{f})$ and denote it by $(f_V,\underline{f})$. 
 \end{rem}

\subsection{Sections of the bundle of connections}
The set of sections $\sect{M}{\mathcal{C}(\Xi)}$ of the bundle of connections
is an affine space modeled on the vector space $\sect{M}{\Hom(TM,P\times_\ad \g)}$, cf.~\cite[Lemma 2.20]{Benini:2012vi}. 
The latter is isomorphic to the $P\times_\ad \g$-valued one-forms on $M$, i.e.~$\Omega^1(M,P\times_\ad\g)$.
We follow the usual abuse of notation and denote by $\lambda + \eta$ the
action of $\eta\in \Omega^1(M,P\times_\ad\g)$ on $\lambda\in \sect{M}{\mathcal{C}(\Xi)}$.
In category theoretical terms, the above construction is a contravariant functor
$\Gamma^\infty \circ \mathcal{C} : G{-}\PrBu \to \Aff$. To any object
$\Xi$ the functor associates the affine space $\sect{M}{\mathcal{C}(\Xi)}$ modeled on 
$\Omega^1(M,P\times_\ad\g)$. To a morphism $f:\Xi_1\to \Xi_2$ the functor associates the affine map
\begin{flalign}\label{eqn:connectionmorph}
\Gamma^\infty(\mathcal{C}(f)) : \sect{M_2}{\mathcal{C}(\Xi_2)} \to \sect{M_1}{\mathcal{C}(\Xi_1)}~,~~
\lambda\mapsto \mathcal{C}(f)^{-1} \circ \lambda \circ \underline{f}~.
\end{flalign}
This is exactly the pull-back of a section $\lambda\in \sect{M_2}{\mathcal{C}(\Xi_2)}  $ to $\sect{M_1}{\mathcal{C}(\Xi_1)} $
via the affine bundle map $\mathcal{C}(f)$.
With a slight abuse of notation we shall denote this pull-back also simply by
$f^\ast(\lambda) := \Gamma^\infty(\mathcal{C}(f)) (\lambda)$.
\sk

We can define for any connection form $\omega\in \Con(P)$ an element $\lambda_\omega\in\sect{M}{\mathcal{C}(\Xi)}$
by, for all $X\in TM$, $\lambda_\omega(X) := [X^{\uparrow_\omega}_p]$. 
The arrow symbol denotes the horizontal lift  with respect to $\omega$ of $X\in TM$ to an arbitrary $p\in \pi^{-1}[\{\pi_{TM}(X)\}]$.
For each object $\Xi$ in $G{-}\PrBu$ this construction provides us with a map 
$\Con(P) \to \sect{M}{\mathcal{C}(\Xi)}\,,~\omega\mapsto \lambda_\omega$.
Using the explicit expressions, the following statement descends directly:
\begin{propo}\label{prop:affineiso}
The maps defined above yield a natural isomorphism $\Con \Rightarrow \Gamma^\infty\circ \mathcal{C}$.
\end{propo}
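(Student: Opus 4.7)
The plan is to verify, for every object $\Xi$ in $G{-}\PrBu$, that the assignment $\omega \mapsto \lambda_\omega$ is a well-defined affine bijection, and then to check naturality with respect to morphisms.

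First I would establish well-definedness of $\lambda_\omega$. For $X\in T_xM$ and any two points $p,p'=p\,g\in \pi^{-1}[\{x\}]$, the horizontal lifts are related by $X^{\uparrow_\omega}_{p\,g}=r_{g\ast}\bigl(X^{\uparrow_\omega}_p\bigr)$, which is an immediate consequence of the $G$\nobreakdash-equivariance $r_g^\ast(\omega)=\ad_{g^{-1}}(\omega)$ and of the fact that $\pi\circ r_g=\pi$. Hence the class $[X^{\uparrow_\omega}_p]\in TP/G$ is independent of $p$, so $\lambda_\omega:TM\to TP/G$ is unambiguous. Smoothness follows because the horizontal projector associated with $\omega$ is smooth on $TP$, and $\pi_\ast\circ\lambda_\omega=\id_{TM}$ holds by construction, so $\lambda_\omega$ indeed lands in the subbundle $\mathcal{C}(\Xi)=l_{\pi_\ast}^{-1}(\id_{TM})$.

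Next I would verify that the map is an affine isomorphism. For the affine property, take $\omega\in\Con(P)$ and $\eta\in\Omega^1_{\hor}(P,\g)^{\eqv}$, and decompose the horizontal lift of $X\in T_xM$ with respect to the shifted connection $\omega+\eta$. Writing $X^{\uparrow_{\omega+\eta}}_p=X^{\uparrow_\omega}_p+X^{\zeta}_p$ for some vertical correction, the defining condition $(\omega+\eta)(X^{\uparrow_{\omega+\eta}}_p)=0$ together with $\omega(X^{\uparrow_\omega}_p)=0$ forces $\zeta=-\eta(X^{\uparrow_\omega}_p)$. Passing to classes in $TP/G$ and using the Atiyah map $\iota\colon P\times_\ad\g\to TP/G$ from (\ref{eqn:Atiyah}), this identifies $\lambda_{\omega+\eta}-\lambda_{\omega}$ with $-\iota$ composed with the element of $\Omega^1(M,P\times_\ad\g)\cong \Gamma^\infty(M,\Hom(TM,P\times_\ad\g))$ corresponding to $\eta$ under Proposition \ref{propo:eqvhoriso}. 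Thus, up to a harmless sign convention, $\omega\mapsto\lambda_\omega$ intertwines the affine actions. For bijectivity I would construct the inverse directly: given $\lambda\in\sect{M}{\mathcal{C}(\Xi)}$, define at each $p\in P$ the horizontal subspace $H_p:=\{Y\in T_pP : [Y]=\lambda(\pi_\ast Y)\}$, which is a $G$\nobreakdash-invariant, $\dim(M)$-dimensional complement of the vertical subspace thanks to $\pi_\ast\circ\lambda=\id_{TM}$ and the exactness of the Atiyah sequence (Proposition \ref{propo:atiyahexact}); the corresponding connection form $\omega_\lambda$ is then the unique element of $\Omega^1(P,\g)$ whose kernel is $H$ and which satisfies condition (i) of Definition \ref{def:princon}. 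The two constructions are mutually inverse by inspection, since $\lambda_{\omega_\lambda}$ and $\omega_{\lambda_\omega}$ reproduce the same horizontal distributions.

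Finally, naturality amounts to checking, for a morphism $f\colon\Xi_1\to\Xi_2$ in $G{-}\PrBu$ and any $\omega_2\in\Con(P_2)$, the identity
\begin{equation*}
\lambda_{f^\ast(\omega_2)} \;=\; \Gamma^\infty(\mathcal{C}(f))(\lambda_{\omega_2}) \;=\; \mathcal{C}(f)^{-1}\circ \lambda_{\omega_2}\circ\underline{f}_\ast~.
\end{equation*}
This follows because $f$ intertwines horizontal lifts: if $Y\in T_{p_1}P_1$ is $f^\ast(\omega_2)$-horizontal, then $0=f^\ast(\omega_2)(Y)=\omega_2(f_\ast Y)$, so $f_\ast Y$ is $\omega_2$-horizontal, and $\pi_{2\ast}(f_\ast Y)=\underline{f}_\ast(\pi_{1\ast}Y)$. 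Applied to $Y=X^{\uparrow_{f^\ast\omega_2}}_{p_1}$ this gives $f_\ast(X^{\uparrow_{f^\ast\omega_2}}_{p_1}) = (\underline{f}_\ast X)^{\uparrow_{\omega_2}}_{f(p_1)}$, and passing to classes in $TP_i/G$ proves the required equality.

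The only mildly delicate point is the affine compatibility step, since it requires tracking how the isomorphism of Proposition \ref{propo:eqvhoriso} matches $\Omega^1_{\hor}(P,\g)^{\eqv}$ with $\Gamma^\infty(M,\Hom(TM,P\times_\ad\g))$ and how this in turn enters the affine action on sections of $\mathcal{C}(\Xi)$; once the Atiyah map $\iota$ is correctly inserted, everything else is a direct unwinding of definitions.
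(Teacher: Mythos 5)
Your proof is correct and follows the route the paper implicitly intends: the paper's own ``proof'' is the single sentence that the statement ``descends directly'' from the explicit expressions, and your write-up (well-definedness of $\lambda_\omega$ via $G$-invariance of the horizontal distribution, the affine compatibility with the $-\iota\circ\underline{\eta}$ linear part — whose sign the paper itself acknowledges right after the proposition — the inverse construction from a splitting of the Atiyah sequence, and the naturality square via intertwining of horizontal lifts) supplies exactly the omitted details. No gaps.
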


Let now $G$ be an Abelian Lie group. Due to Lemma \ref{lem:curvaffineop} (and the text below this lemma)
 the curvature can be regarded as a natural transformation
$\underline{\mathcal{F}}: \Con \Rightarrow \Omega^2_\mathrm{base}$.
Using the natural isomorphism of Proposition \ref{prop:affineiso}
we obtain a natural transformation (denoted with a slight abuse of notation also 
by the symbol  $\underline{\mathcal{F}}$) $\underline{\mathcal{F}}: \Gamma^\infty\circ \mathcal{C} \Rightarrow  
\Omega^2_\mathrm{base}$.
Explicitly, we obtain for any object $\Xi$ in $G{-}\PrBu$ 
an affine map 
$\underline{\mathcal{F}}: \sect{M}{\mathcal{C}(\Xi)} \to  \Omega^2(M,\g)$
with linear part $\underline{\mathcal{F}}_V : \Omega^1(M,\g)\to \Omega^2(M,\g)\,,~\eta \mapsto -\dd\eta$. 
(The minus sign is part of the natural isomorphism of Proposition \ref{prop:affineiso}.)
According to \cite[Section 3]{Benini:2012vi} this is an affine differential operator.
\sk

We conclude this section by studying gauge transformations.
\begin{defi}
Let $M$ be a manifold, $G$ a  Lie group and $(P,r)$ a principal $G$-bundle over $M$.
A {\bf gauge transformation}  is a $G$-equivariant diffeomorphism 
$f:P\to P$, such that $\underline{f}=\id_M$.
We denote by $\Gau(P)$ the group of all gauge transformations of $(P,r)$.
\end{defi}
Notice that whenever $\Xi=\big((M,\o,g,\t), (P,r)\big)$ is an object in  $G{-}\PrBu$, 
a gauge transformation $f\in\Gau(P)$  is an automorphism  in the same category.
\sk

Let now $G$ be an Abelian Lie group. Then the gauge group $\Gau(P)$
is isomorphic to the group $C^\infty(M,G)$.
This isomorphism is constructed as follows:
Notice that for any $f\in \Gau(P)$ there exists a unique $\widetilde{f}\in C^\infty(P,G)$,
such that, for all $p\in P$, $f(p) = p\,\widetilde{f}^{-1}(p)$ (the use of the inverse is purely conventional).
Since $f$ is $G$-equivariant and $G$ is Abelian the map $\widetilde{f}$ has to be
$G$-invariant and hence it defines a unique element $\widehat{f}\in C^\infty(M,G)$.
A straightforward calculation shows that map $f\mapsto \widehat{f}$ is an isomorphism of groups.
The action of the gauge group on sections of the bundle of connections can be derived
from the general discussion of morphisms that we have given above and we obtain
\begin{flalign}
 \sect{M}{\mathcal{C}(\Xi)} \times C^\infty(M,G) \to\sect{M}{\mathcal{C}(\Xi)}~,~~(\lambda,\widehat{f})\mapsto
\lambda + \widehat{f}^\ast(\mu_G)~,
\end{flalign}
where $\mu_G\in \Omega^1(G,\g)$ is the Maurer-Cartan form. Notice that the gauge group
acts in terms of affine maps on $\sect{M}{\mathcal{C}(\Xi)}$ and that
the linear part of these maps is the identity.


\section{\label{sec:phasespace}The phase space for an object}
Let $G$ be an Abelian Lie group and $\Xi=\big((M,\o,g,\t),(P,r)\big)$ an object in $G{-}\PrBu$.
Let $\mathcal{C}(\Xi)$ be the associated bundle of connections
and $\sect{M}{\mathcal{C}(\Xi)}$ the affine space of sections.
We denote the vector dual bundle (see \cite[Definition 2.15]{Benini:2012vi})
 by $\mathcal{C}(\Xi)^\dagger$ and by $\sectn{M}{\mathcal{C}(\Xi)^\dagger}$
  the vector space of compactly supported sections. 
 The aim of this section is to construct a gauge invariant phase space for dynamical 
 principal connections on $\Xi$.
\sk

The dynamics is governed by Maxwell's equations, which 
are described in our setting by the affine differential operator
\begin{flalign}
\MW := \delta \circ \underline{\mathcal{F}} : \sect{M}{\mathcal{C}(\Xi)} \to \Omega^1(M,\g)~,~~\lambda \mapsto \MW(\lambda) =\delta \underline{\mathcal{F}}(\lambda)~,
\end{flalign}
where $\delta$ is the codifferential and $\underline{\mathcal{F}}$ is the 
curvature affine differential operator.
The linear part of  $\MW$ is 
\begin{flalign}
\MW_V : \Omega^1(M,\g) \to \Omega^1(M,\g)~,~~\eta \mapsto \MW_V(\eta) = \delta\underline{\mathcal{F}}_V(\eta) = -\delta\dd\eta~.
\end{flalign}
Due to \cite[Theorem 3.5]{Benini:2012vi}, the affine differential operator $\MW$ is formally adjoinable
to a differential operator $\MW^\ast : \Omega_0^1(M,\g^\ast)\to \sectn{M}{\mathcal{C}(\Xi)^\dagger}$, with 
$\g^\ast$ denoting the vector space dual of the Lie algebra $\g$. 
Explicitly, $\MW^\ast$ is determined (up to the ambiguities to be discussed below) by the condition,
for all $\lambda\in\sect{M}{\mathcal{C}(\Xi)} $ and $\eta  \in \Omega_0^1(M,\g^\ast)$,
\begin{flalign}\label{eqn:adjointaffop}
\ip{\eta}{\MW(\lambda)}:=\int_M \eta \wedge \ast\big(\MW(\lambda)\big)= \int_M \vol\,\big(\MW^\ast(\eta)\big)(\lambda)~,
\end{flalign}
where $\ast$ denotes the Hodge operator and $\vol$ the volume form.
We will always suppress the duality pairing between $\g^\ast$ and $\g$ in order to simplify the notation.
\sk

As it is proven in \cite[Theorem 3.5]{Benini:2012vi}, the formal adjoint differential operator
$\MW^\ast: \Omega_0^1(M,\g^\ast)\to \sectn{M}{\mathcal{C}(\Xi)^\dagger}$ is not unique. Uniqueness is
restored if we quotient out the trivial elements\footnote{By trivial we mean that the corresponding classical affine observables
(\ref{eqn:affinefunctional}), i.e.~functionals on the configuration space $\sect{M}{\mathcal{C}(\Xi)}$, vanish.}
\begin{flalign}
\mathrm{Triv} := \Big\{a\,\1 \in \sectn{M}{\mathcal{C}(\Xi)^\dagger} : a\in C^\infty_0(M) \text{ satisfies }\int_M\vol \,a =0\Big\}~,
\end{flalign}
i.e.~if we consider the operator $\MW^\ast:\Omega_0^1(M,\g^\ast)\to \sectn{M}{\mathcal{C}(\Xi)^\dagger}/\mathrm{Triv} $.
By $\1 \in \sect{M}{\mathcal{C}(\Xi)^\dagger}$ we denote the canonical section which associates to every $x\in M$
the constant affine map in the fibre $\mathcal{C}(\Xi)^\dagger\vert_x$ 
defined as $\1(\lambda)=1$ for each $\lambda\in\mathcal{C}(\Xi)\vert_x$.
In the following we shall use the convenient notation $\EE^\mathrm{kin}:= \sectn{M}{\mathcal{C}(\Xi)^\dagger}/\mathrm{Triv} $.
The quotient by $\mathrm{Triv}$ does not affect the linear part of $\MW^\ast(\eta)$:
Indeed,  for all $\eta\in\Omega^1_0(M,\g^\ast)$, $\lambda\in\sect{M}{\mathcal{C}(\Xi)}$ and $\eta^\prime\in\Omega^1(M,\g)$,
\begin{flalign}
\nn \int_M \vol\,\big(\MW^\ast(\eta)\big)\big(\lambda + \eta^\prime\big) &= 
\ip{\eta}{\MW\big(\lambda + \eta^\prime\big)} = \ip{\eta}{\MW(\lambda) - \delta\dd\eta^\prime}\\
&= \int_M \vol\,\big(\MW^\ast(\eta)\big)(\lambda) + \ip{-\delta\dd\eta}{\eta^\prime}
\end{flalign}
implies that the linear part 
is $\MW^\ast(\eta)_{V} = -\delta\dd\eta$, for all $\eta\in\Omega^1_0(M,\g^\ast)$.
\sk

The next step is to restrict to those
elements in $\EE^\mathrm{kin}$ that describe gauge invariant observables.
It is enlightening to introduce the vector space of classical affine observables 
$\{\mathcal{O}_\varphi: \varphi\in\EE^\mathrm{kin}\}$, where 
$\mathcal{O}_\varphi$ is the functional on the configuration space $\sect{M}{\mathcal{C}(\Xi)} $ defined by
\begin{flalign}\label{eqn:affinefunctional}
\mathcal{O}_\varphi: \sect{M}{\mathcal{C}(\Xi)} \to \bbR~,~~\lambda \mapsto \mathcal{O}_\varphi(\lambda) = \int_M\vol~\varphi\big(\lambda\big)~.
\end{flalign}
Let $\widehat{f}\in C^\infty(M,G)\simeq \Gau(P)$ be an element in the gauge group.
The gauge transformations on $\sect{M}{\mathcal{C}(\Xi)}$
 are given by $\lambda \mapsto\lambda + \widehat{f}^\ast(\mu_G)$.
Demanding invariance of $\mathcal{O}_\varphi$ under gauge transformations,
 i.e.~$\mathcal{O}_{\varphi}\big(\lambda + \widehat{f}^\ast(\mu_G)\big) 
=\mathcal{O}_\varphi(\lambda) $ 
 for all $\lambda\in \sect{M}{\mathcal{C}(\Xi)}$ and $\widehat{f}\in C^\infty(M,G)$, leads to the
  following condition for the linear part $\varphi_{V}\in \Omega^1_0(M,\g^\ast)$
 of $\varphi \in \EE^\mathrm{kin}$, for all $\widehat{f}\in C^\infty(M,G)$,
 \begin{flalign}
 \ip{\varphi_{V}}{\widehat{f}^\ast(\mu_G)} =0~.
 \end{flalign}
 This provides the motivation for the following vector subspace
 \begin{flalign}\label{eqn:gaugeinv}
 \EE^{\mathrm{inv}} := \Big\{\varphi\in\EE^\mathrm{kin} 
 : \ip{\varphi_{V}}{\widehat{f}^\ast(\mu_G)} = 0~,~~\forall \widehat{f}\in C^\infty(M,G) \Big\}\subseteq \EE^\mathrm{kin}~,
 \end{flalign}
 which serves as a starting point to construct the phase space.
\begin{lem}\label{lem:coclosed}
\begin{itemize}
\item[a)] For all $\varphi \in \EE^{\mathrm{inv}}$ the linear part $\varphi_{V}\in \Omega^1_0(M,\g^\ast)$ is coclosed, 
i.e.~$\delta\varphi_{V} =0$. 
\item[b)] All $\varphi\in\EE^{\mathrm{kin}}$ satisfying
$\varphi_V = \delta \eta$ for some $\eta\in \Omega^2_0(M,\g^\ast)$ are elements in $\EE^\mathrm{inv}$.
\end{itemize}
\end{lem}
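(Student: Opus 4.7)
The proof plan hinges on two elementary features of the Maurer-Cartan form $\mu_G\in\Omega^1(G,\g)$ of an Abelian Lie group $G$. First, the Maurer-Cartan equation $\dd\mu_G+\tfrac{1}{2}[\mu_G,\mu_G]_\g=0$ reduces to $\dd\mu_G=0$ since the Lie bracket vanishes. Second, $\exp:\g\to G$ is a Lie group homomorphism, and it pulls $\mu_G$ back to the Maurer-Cartan form on $\g$ itself (regarded as an Abelian Lie group), which is the identity map on $\g$; consequently, for every $\alpha\in C^\infty(M,\g)$, the gauge transformation $\widehat{f}:=\exp\circ\alpha\in C^\infty(M,G)$ satisfies $\widehat{f}^\ast(\mu_G)=\dd\alpha$.

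For part a), I would specialize the defining condition of $\EE^\mathrm{inv}$ in (\ref{eqn:gaugeinv}) to this distinguished family of gauge transformations, obtaining $\ip{\varphi_V}{\dd\alpha}=0$ for every $\alpha\in C^\infty(M,\g)$. Since $\varphi_V\in\Omega^1_0(M,\g^\ast)$ is compactly supported, integration by parts produces no boundary contributions and yields $\ip{\delta\varphi_V}{\alpha}=0$ for all $\alpha\in C^\infty(M,\g)$. Non-degeneracy of the pairing between smooth $\g$-valued functions and compactly supported $\g^\ast$-valued zero-forms then forces $\delta\varphi_V=0$.

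For part b), assume $\varphi_V=\delta\eta$ with $\eta\in\Omega^2_0(M,\g^\ast)$. For any $\widehat{f}\in C^\infty(M,G)$ I compute
\[
\ip{\varphi_V}{\widehat{f}^\ast(\mu_G)}=\ip{\delta\eta}{\widehat{f}^\ast(\mu_G)}=\ip{\eta}{\dd\widehat{f}^\ast(\mu_G)}=\ip{\eta}{\widehat{f}^\ast(\dd\mu_G)}=0,
\]
where the second equality is the adjointness of $\dd$ and $\delta$, valid because $\eta$ has compact support, and the final equality uses $\dd\mu_G=0$. Hence $\varphi\in\EE^\mathrm{inv}$.

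The only step that is not immediate is the identity $(\exp\circ\alpha)^\ast\mu_G=\dd\alpha$, which I expect to verify by a direct pointwise computation using functoriality of pullback together with the fact that $\exp^\ast\mu_G$ is the Maurer-Cartan form of the vector group $\g$. All remaining steps reduce to routine integration by parts for the codifferential and to the Abelian Maurer-Cartan equation, so I do not foresee any genuine obstacle.
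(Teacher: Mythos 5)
Your proposal is correct and follows essentially the same route as the paper's proof: part a) specializes the gauge invariance condition to exponential gauge transformations $\widehat{f}_\chi=\exp\circ\chi$ with $\widehat{f}_\chi^\ast(\mu_G)=\dd\chi$ and integrates by parts, while part b) moves $\delta$ onto the pull-back of the Maurer--Cartan form and uses that $\mu_G$ is closed for Abelian $G$. The only difference is that you spell out the verification of $(\exp\circ\chi)^\ast(\mu_G)=\dd\chi$, which the paper simply asserts.
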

\begin{proof}
 Proof of a): 
Let  $\chi\in C^\infty(M,\g)$ and consider the element of the gauge group specified
by $\widehat{f}_\chi := \exp \circ \chi \in C^\infty(M,G)$, where $\exp:\g\to G$ denotes the exponential map.
The pull-back of the Maurer-Cartan form then reads $\widehat{f}_\chi^\ast(\mu_G) = \dd\chi$.
Let $\varphi\in \EE^{\mathrm{inv}}$ be arbitrary. Due to (\ref{eqn:gaugeinv}) the linear part $\varphi_{V}$ of $\varphi$
satisfies, for all $\chi\in C^\infty(M,\g)$,
\begin{flalign}
0=\ip{\varphi_{V}}{\widehat{f}^\ast_\chi(\mu_G)} = \ip{\varphi_{V}}{\dd\chi} = \ip{\delta\varphi_{V}}{\chi}~,
\end{flalign}
which implies $\delta\varphi_{V}=0$.
\sk

Proof of b): For all $\widehat{f}\in C^\infty(M,G)$, 
\begin{flalign}
\ip{\varphi_V}{\widehat{f}^\ast(\mu_G)} = \ip{\delta\eta}{\widehat{f}^\ast(\mu_G)}
= \ip{\eta}{\dd \widehat{f}^{\ast}(\mu_G)}= \ip{\eta}{\widehat{f}^{\ast}(\dd \mu_G)}= 0~,
\end{flalign}
 since the Maurer-Cartan form of Abelian Lie groups is closed.
\end{proof}
\begin{cor}\label{cor:gaugeinvobs}
Let us define the vector spaces
\begin{subequations}
\begin{flalign}
\EE^\mathrm{min} &:= \big\{  \varphi \in \EE^\mathrm{kin} : \varphi_V \in \delta \Omega^2_0(M,\g^\ast)\big\}~~,\\
\EE^\mathrm{max} &:= \big\{  \varphi \in \EE^\mathrm{kin} : \varphi_V \in \Omega^1_{0,\delta}(M,\g^\ast)\big\}~~.
\end{flalign}
\end{subequations}
Then the following inclusions of vector spaces hold true
\begin{flalign}
\EE^\mathrm{min}\subseteq \EE^\mathrm{inv} \subseteq \EE^\mathrm{max}~.
\end{flalign}
\end{cor}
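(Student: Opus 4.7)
The plan is to observe that both inclusions are immediate unpackings of Lemma \ref{lem:coclosed} in terms of the newly-defined subspaces, so no work beyond a definition chase is required. I would present them as two short arguments, one per inclusion.

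For $\EE^\mathrm{min}\subseteq \EE^\mathrm{inv}$, I would pick an arbitrary $\varphi\in\EE^\mathrm{min}$, use the definition to extract $\eta\in\Omega^2_0(M,\g^\ast)$ with $\varphi_V=\delta\eta$, and then invoke part (b) of Lemma \ref{lem:coclosed} verbatim: this is precisely its hypothesis, so the conclusion $\varphi\in\EE^\mathrm{inv}$ is immediate.

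For $\EE^\mathrm{inv}\subseteq \EE^\mathrm{max}$, I would start from an arbitrary $\varphi\in\EE^\mathrm{inv}$ and apply part (a) of the same lemma to conclude $\delta\varphi_V=0$. By definition of $\Omega^1_{0,\delta}(M,\g^\ast)$ as the compactly supported coclosed $\g^\ast$-valued one-forms, this means $\varphi_V\in\Omega^1_{0,\delta}(M,\g^\ast)$ and hence $\varphi\in\EE^\mathrm{max}$.

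There is no real obstacle here: the conceptual content has been absorbed into Lemma \ref{lem:coclosed}, and the corollary just rebrands the two statements of that lemma as the existence of a sandwich $\EE^\mathrm{min}\subseteq\EE^\mathrm{inv}\subseteq\EE^\mathrm{max}$ that will be convenient for the cohomological analysis to come in Section \ref{sec:radical}. The only thing to double-check while writing the proof is that the definition of $\Omega^1_{0,\delta}$ in the paper indeed refers to compactly supported coclosed one-forms (so that the containment $\varphi_V\in\Omega^1_{0,\delta}(M,\g^\ast)$ is the same statement as $\delta\varphi_V=0$ together with compact support, the latter being automatic since $\varphi_V\in\Omega^1_0(M,\g^\ast)$ by $\varphi\in\EE^\mathrm{kin}$).
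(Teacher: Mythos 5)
Your proposal is correct and matches the paper's intent exactly: the corollary is stated without a separate proof precisely because, as you observe, the first inclusion is Lemma \ref{lem:coclosed}~b) verbatim and the second is Lemma \ref{lem:coclosed}~a) combined with the fact that $\varphi_V\in\Omega^1_0(M,\g^\ast)$ already has compact support. Nothing further is needed.
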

\begin{rem}
This corollary provides us with a lower and upper bound on the vector space $\EE^\mathrm{inv}$. Notice that in case
 $M$ has a trivial first de Rham cohomology group $H_{\mathrm{dR}}^1(M,\g)=\{0\}$
  (which implies that the dual cohomology group is trivial
  $H^1_{0\,\mathrm{dR}^\ast} (M,\g^\ast):=\Omega^1_{0,\delta}(M,\g^\ast)/\delta \Omega^2_0(M,\g^\ast) =\{0\}$),
the lower and upper bounds coincide, i.e.~$\EE^\mathrm{min}=\EE^\mathrm{inv} = \EE^\mathrm{max}$.
The explicit characterization of $\EE^\mathrm{inv}$  will be postponed to Section 
\ref{sec:radical}.
\end{rem}

\sk

The equation of motion $\MW(\lambda)=0$ is implemented at a dual level on  $\EE^\mathrm{inv}$
by considering the quotient vector space $\EE:=\EE^\mathrm{inv}/\MW^\ast\big[\Omega_0^1(M,\g^\ast)\big]$.
To construct a presymplectic structure on this space let us consider the Hodge-d'Alembert operators
$\square_{(k)} :=  \delta\circ \dd + \dd\circ\delta  :\Omega^k(M,\g^\ast)\to \Omega^k(M,\g^\ast)$, that are normally hyperbolic
operators. The corresponding unique retarded/advanced Green's operators are denoted by
$G^{\pm}_{(k)}:\Omega_0^k(M,\g^\ast)\to \Omega^k(M,\g^\ast) $ and the causal propagators are defined
by $G_{(k)}:= G_{(k)}^+ - G_{(k)}^-:\Omega_0^k(M,\g^\ast)\to \Omega^k(M,\g^\ast) $. We notice the relations
\begin{subequations}
\begin{flalign}
\square_{(k)}\circ \dd = \dd \circ \square_{(k-1)} ~,\quad \square_{(k)}\circ \delta = \delta \circ \square_{(k+1)}~,
\end{flalign}
which, together with formal self-adjointness of $\square_{(k)}$, imply
\begin{flalign}\label{useful}
G_{(k)}^\pm \circ \dd = \dd\circ G_{(k-1)}^\pm~,\quad G_{(k)}^\pm\circ \delta = \delta\circ G_{(k+1)}^{\pm}~.
\end{flalign}
\end{subequations}
Let us further assume that we are given a bi-invariant pseudo-Riemannian metric $h$ 
on the Lie group $G$. This structure is equivalent to an $\ad$-invariant
 inner product (possibly indefinite) on the Lie algebra
$\g$ and hence a vector space isomorphism (denoted with a slight abuse of notation by the same symbol) 
$h:\g\to \g^\ast$. Notice that a metric $h$ is necessary to specify
a Lagrangian and hence a Poisson bracket, cf.~Remark \ref{rem:peierls}.
We denote by $h^{-1}:\g^\ast \to \g$ the inverse vector space isomorphism.
 Using also the pairing $\ip{~}{~}$ we define for all $\eta,\eta^\prime\in \Omega^k(M,\g^\ast)$ with 
compact overlapping support the non-degenerate (indefinite) inner product
\begin{flalign}
\ip{\eta}{\eta^\prime}_h := \ip{\eta}{h^{-1}(\eta^\prime)}~.
\end{flalign}
As a consequence of $\square_{(k)}$ being formally self-adjoint, 
$G_{(k)}$ turns out to be formally skew-adjoint with respect to $\ip{~}{~}_h$.

\begin{propo}\label{propo:presymp}
Let $G$ be an Abelian Lie group and $h$ a bi-invariant pseudo-Riemannian
metric on $G$. Let further $\Xi=\big((M,\o,g,\t),(P,r)\big)$ be an object in $G{-}\PrBu$. Then the vector space 
$\EE := \EE^\mathrm{inv}/\MW^\ast\big[\Omega_0^1(M,\g^\ast)\big]$ can be equipped with the presymplectic structure
\begin{flalign}\label{eqn:presymp}
\tau: \EE\times \EE \to \bbR~,~~([\varphi],[\psi])\mapsto \tau([\varphi],[\psi]) = \ip{\varphi_{V}}{G_{(1)}(\psi_{V})}_h~.
\end{flalign}
In other words, $(\EE,\tau)$ is a presymplectic vector space.
\end{propo}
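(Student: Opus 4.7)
The plan is to verify three properties: bilinearity, antisymmetry, and well-definedness of $\tau$ on the quotient $\EE=\EE^{\mathrm{inv}}/\MW^\ast\big[\Omega_0^1(M,\g^\ast)\big]$. Bilinearity is immediate: the assignment $\varphi\mapsto \varphi_V$ is linear (the linear part is extracted fibre-wise), $G_{(1)}$ is a linear operator, and $\ip{~}{~}_h$ is bilinear. Antisymmetry follows at once from the formal skew-adjointness of $G_{(1)}$ with respect to $\ip{~}{~}_h$, noted in the text just before the statement: $\ip{\varphi_V}{G_{(1)}(\psi_V)}_h=-\ip{G_{(1)}(\varphi_V)}{\psi_V}_h=-\ip{\psi_V}{G_{(1)}(\varphi_V)}_h$ (the last equality uses symmetry of the inner product).

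The real content is well-definedness. By bilinearity and antisymmetry, it suffices to show that $\tau$ vanishes whenever $\varphi=\MW^\ast(\eta)$ for some $\eta\in\Omega^1_0(M,\g^\ast)$ and $\psi\in\EE^{\mathrm{inv}}$ is arbitrary. From the discussion preceding the proposition, the linear part is $\varphi_V=-\delta\dd\eta$; using the standard decomposition of the Hodge–d'Alembertian we rewrite
\begin{equation*}
\varphi_V=-\delta\dd\eta=-\square_{(1)}\eta+\dd\delta\eta.
\end{equation*}
Substituting into $\tau$ yields two contributions. For the first, formal self-adjointness of $\square_{(1)}$ together with $\square_{(1)}\circ G_{(1)}=0$ give
\begin{equation*}
\ip{\square_{(1)}\eta}{G_{(1)}(\psi_V)}_h=\ip{\eta}{\square_{(1)}G_{(1)}(\psi_V)}_h=0.
\end{equation*}
For the second, I integrate by parts ($\dd$ is formally adjoint to $\delta$ with respect to $\ip{~}{~}_h$) and then use the intertwining relation (\ref{useful}):
\begin{equation*}
\ip{\dd\delta\eta}{G_{(1)}(\psi_V)}_h=\ip{\delta\eta}{\delta G_{(1)}(\psi_V)}_h=\ip{\delta\eta}{G_{(0)}(\delta\psi_V)}_h.
\end{equation*}
Now I invoke Lemma \ref{lem:coclosed}(a): every $\psi\in\EE^{\mathrm{inv}}$ has coclosed linear part, so $\delta\psi_V=0$ and the whole expression vanishes. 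Hence $\tau([\MW^\ast(\eta)],[\psi])=0$, which proves that $\tau$ descends to the quotient in the first slot; antisymmetry then handles the second slot.

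The only subtle step is the second one, where one must handle forms that are not compactly supported (the image of $G_{(1)}$ is only spacelike-compact). This is harmless here because $\psi_V$ and $\delta\eta$ are compactly supported, so all pairings appearing above are well-defined integrals, and the standard identities $\delta G_{(1)}=G_{(0)}\delta$ and formal adjointness of $\dd$ and $\delta$ apply without issue. I expect no other obstacle: the coclosedness condition encoded in the definition of $\EE^{\mathrm{inv}}$ (via Lemma \ref{lem:coclosed}(a)) is precisely what is needed to kill the $\dd\delta$-part generated by Maxwell trivialities, which is the geometric content of the statement.
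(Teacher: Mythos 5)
Your proposal is correct and follows essentially the same route as the paper: both reduce well-definedness to showing $\ip{\MW^\ast(\eta)_V}{G_{(1)}(\psi_V)}_h=0$ via the decomposition $\delta\dd=\square_{(1)}-\dd\delta$, the identity $\square_{(1)}\circ G_{(1)}=0$, the intertwining relation $\delta\circ G_{(1)}=G_{(0)}\circ\delta$, and the coclosedness $\delta\psi_V=0$ from Lemma \ref{lem:coclosed}, with skew-adjointness of $G_{(1)}$ handling antisymmetry and the second slot. The only cosmetic difference is that you split $\delta\dd\eta$ before pairing, whereas the paper first moves $\delta\dd$ onto $G_{(1)}(\psi_V)$ and splits there.
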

\begin{proof}
We have to prove that $\tau$ is well-defined, i.e.~that for every $\varphi= \MW^\ast(\eta)$, $\eta\in\Omega^1_0(M,\g^\ast)$,
 we have $\ip{\varphi_{V}}{G_{(1)}(\psi_{V})}_h =0 $ and $\ip{\psi_{V}}{G_{(1)}(\varphi_{V})}_h=0$ for the linear parts $\psi_{V}$ 
of all elements $\psi\in \EE^\mathrm{inv}$.
Lemma \ref{lem:coclosed} implies that $\delta\psi_{V}=0$. The first property holds true:
\begin{flalign}
\nn\ip{\varphi_{V}}{G_{(1)}(\psi_{V})}_h &= \ip{\MW^\ast(\eta)_{V}}{G_{(1)}(\psi_{V})}_h = 
-\ip{\delta\dd\eta}{G_{(1)}(\psi_{V})}_h\\
\nn&=-\ip{\eta}{\delta\dd G_{(1)}(\psi_{V})}_h = -\ip{\eta}{(\square_{(1)} - \dd\delta)\big( G_{(1)}(\psi_{V})\big)}_h\\
&=\ip{\eta}{\dd G_{(0)}(\delta\psi_{V}) }_h = 0~.
\end{flalign}
The second property follows analogously, since $G_{(1)}$ 
is formally skew-adjoint  with respect to $\ip{~}{~}_h$.
 From the latter property it also follows that $\tau$ is antisymmetric.
\end{proof}
\begin{rem}\label{rem:peierls}
The presymplectic structure (\ref{eqn:presymp}) can be derived from a Lagrangian form
by generalizing the method of Peierls \cite{Peierls:1952cb} 
to gauge theories. This generalization has already been studied in \cite{Marolf:1993af}  and it was 
put on mathematically solid grounds recently in \cite{SDH12} for the vector potential of $U(1)$-connections.
Since in our approach the configuration space $\sect{M}{\mathcal{C}(\Xi)}$ is different,
we have to adapt the relevant arguments to our setting: 
Let us consider the Lagrangian form $\mathcal{L}[\lambda] := 
-\frac{1}{2} h\big(\underline{\mathcal{F}}(\lambda)\big) \wedge \ast\big( \underline{\mathcal{F}}(\lambda)\big)$ 
and its perturbation  by an element $\varphi \in \EE^\mathrm{inv}$, 
i.e.~$\mathcal{L}_\varphi [\lambda] := \mathcal{L}[\lambda] + \vol\,\varphi(\lambda)$. 
Notice that a bi-invariant metric $h$ on $G$ is required in order to define the Lagrangian.
The Euler-Lagrange equation corresponding to $\mathcal{L}_\varphi$ is 
$\MW(\lambda) + h^{-1}(\varphi_{V}) =0$, where $\varphi_{V}\in \Omega_0^1(M,\g^\ast)$ is the linear part of $\varphi$.
Let us take any  $\lambda\in \sect{M}{\mathcal{C}(\Xi)}$ satisfying $\MW(\lambda)=0$.
 The goal is to construct the retarded/advanced effect of $\varphi$ on this solution.
Let $\Sigma^\pm\subset M$ be two Cauchy surfaces (with $\Sigma^+$ being in the future of $\Sigma^-$) 
such that $\supp(\varphi_{V}) \subseteq J^-_M\big(\Sigma^+\big) \cap J^+_M\big(\Sigma^-\big)$
 (this means that $\varphi_{V}$ has support in the spacetime region between $\Sigma^+$ and $\Sigma^-$).
We are looking for a $\lambda_\varphi^\pm \in \sect{M}{\mathcal{C}(\Xi)}$ satisfying the equation of motion
$\MW(\lambda_\varphi^\pm) + h^{-1}(\varphi_{V}) =0$ and $\lambda_\varphi^\pm \vert_{J_M^\mp(\Sigma^\mp)}  = 
 (\lambda + \widehat{f}_\pm^\ast(\mu_G))\vert_{J_M^\mp(\Sigma^\mp)} $
  for some $\widehat{f}_\pm\in C^\infty(M,G)$. The latter condition states that $\lambda_\varphi^\pm$
agrees up to a gauge transformation with $\lambda$ in the past/future of $\Sigma^\mp$.
Since $\sect{M}{\mathcal{C}(\Xi)}$ is an affine space over $\Omega^1(M,\g)$ we find a unique
$\eta_\varphi^\pm\in \Omega^1(M,\g)$ such that $\lambda_\varphi^\pm = \lambda + \eta_\varphi^\pm$.
The equations of motion for $\lambda$ and $\lambda_\varphi^\pm$ then
 imply $-\delta\dd\eta_\varphi^\pm + h^{-1}(\varphi_{V}) =0$
and the asymptotic condition reads $\big(\eta_\varphi^\pm -\widehat{f}_\pm^\ast(\mu_G)\big)\big 
\vert_{J_M^\mp(\Sigma^\mp)} =0$ for some $\widehat{f}_\pm\in C^\infty(M,G)$.
By gauge equivalence,
we can assume without loss of generality
that $\eta_\varphi^\pm$ satisfies $\delta\eta_\varphi^\pm =0$, and hence the equation of motion reads
$\square_{(1)}(\eta_\varphi^\pm) = h^{-1}(\varphi_{V})$. For the support condition
 $\eta_\varphi^\pm\vert_{J_M^\mp(\Sigma^\mp)} =0$
(that is contained in the asymptotic condition above) the unique solution of this equation is 
$\eta_\varphi^\pm = G_{(1)}^\pm\big(h^{-1}(\varphi_V)\big) = h^{-1}\big(G^\pm_{(1)}(\varphi_V)\big)$.
All solutions of the equation $-\delta\dd\eta_\varphi^\pm + h^{-1}( \varphi_{V}) =0$
subject to the asymptotic condition  $\big(\eta_\varphi^\pm -\widehat{f}_\pm^\ast(\mu_G)\big)\big 
\vert_{J_M^\mp(\Sigma^\mp)} =0$, for some $\widehat{f}_\pm\in C^\infty(M,G)$, are obtained by adding 
a pure gauge solution $\widehat{f}^\ast(\mu_G)$ to
 $\eta_\varphi^\pm = h^{-1}\big(G_{(1)}^\pm(\varphi_{V})\big)$. Let now $\psi \in \EE^\mathrm{inv}$ and consider the
gauge invariant functional $\mathcal{O}_\psi$ as in (\ref{eqn:affinefunctional}). The retarded/advanced effect
of $\varphi\in \EE^\mathrm{inv}$ on $\mathcal{O}_\psi$ is defined by $E^\pm_\varphi\big(\mathcal{O}_\psi\big)(\lambda):=
 \mathcal{O}_\psi(\lambda_\varphi^\pm) - \mathcal{O}_\psi(\lambda) 
 = \ip{\psi_{V}}{\eta_\varphi^\pm} = \ip{\psi_{V}}{h^{-1}\big(G_{(1)}^\pm(\varphi_{V})\big)}
 =\ip{\psi_V}{G_{(1)}^\pm(\varphi_V)}_h$. 
 Notice that this expression is well-defined
since $\mathcal{O}_\psi$ is gauge invariant. We find that the presymplectic structure (\ref{eqn:presymp}) is given by the
 difference of the retarded and advanced effect, i.e.~$\tau([\psi],[\varphi]) = E^+_\varphi\big(\mathcal{O}_\psi\big)(\lambda)  - 
 E^-_\varphi\big(\mathcal{O}_\psi\big)(\lambda)$, which agrees with the idea of Peierls \cite{Peierls:1952cb}.
\end{rem}
\sk

We come to the characterization of the radical $\mathcal{N}\subseteq \EE$ 
of the presymplectic structure $\tau$. An element $[\psi]\in \EE$ is in $\mathcal{N}$ if and only if,
for all $[\varphi]\in \EE$,
$\tau([\varphi],[\psi]) =0$. In this section we will  only provide
a lower and upper estimate for the vector space $\mathcal{N}$. The explicit characterization 
will be content of Section \ref{sec:radical}.
\begin{lem}\label{radical}
\begin{itemize}
\item[a)] 
Let  $[\psi]\in \mathcal{N}$ be arbitrary. Then any representative $\psi\in \EE^\mathrm{inv}$ is such that $\psi_V = \delta \alpha$
for some $\alpha \in \Omega_{0,\dd}^2(M,\g^\ast)$.
\item[b)]Let $\psi\in \EE^\mathrm{inv}$ be such that $\psi_V = \delta\dd \gamma$
with $\gamma\in\Omega_{\mathrm{tc}}^{1}(M,\g^\ast)$ and $\dd \gamma\in \Omega_0^{2}(M,\g^\ast)$.
Then $[\psi]\in\mathcal{N}$. The subscript $_{\mathrm{tc}}$ denotes forms of timelike compact support.
\end{itemize}
\end{lem}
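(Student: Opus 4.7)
For part (a), the strategy is to exploit the inclusion $\EE^\mathrm{min}\subseteq\EE^\mathrm{inv}$ from Corollary \ref{cor:gaugeinvobs}: any $[\psi]\in\mathcal{N}$ must in particular annihilate every representative with linear part $\delta\eta$, $\eta\in\Omega^2_0(M,\g^\ast)$. An integration by parts then turns $\ip{\delta\eta}{G_{(1)}(\psi_V)}_h=0$ for all such $\eta$ into $\dd G_{(1)}(\psi_V)=0$. Combining this with $\delta\psi_V=0$ from Lemma \ref{lem:coclosed}a and expanding the identity $\psi_V=\square_{(1)}G^\pm_{(1)}(\psi_V)$ via $\square_{(1)}=\delta\dd+\dd\delta$ will yield $\psi_V=\delta\dd G^\pm_{(1)}(\psi_V)$, since the complementary term $\dd\delta G^\pm_{(1)}(\psi_V)=\dd G^\pm_{(0)}(\delta\psi_V)$ vanishes by (\ref{useful}). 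Setting $\alpha^\pm:=\dd G^\pm_{(1)}(\psi_V)$ produces two closed $2$-forms satisfying $\delta\alpha^\pm=\psi_V$; the support properties of the retarded/advanced Green's operators make $\alpha^+$ past compact and $\alpha^-$ future compact. Since their difference $\alpha^+-\alpha^-=\dd G_{(1)}(\psi_V)$ is zero, the common value $\alpha:=\alpha^\pm$ has both past and future compact support, hence is compactly supported: this gives the desired $\alpha\in\Omega^2_{0,\dd}(M,\g^\ast)$ with $\psi_V=\delta\alpha$.

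For part (b), the plan is a direct computation via successive integration by parts. Using (\ref{useful}), the formal skew-adjointness of $G_{(k)}$ with respect to $\ip{~}{~}_h$, and the coclosedness $\delta\varphi_V=0$ of an arbitrary $\varphi\in\EE^\mathrm{inv}$ (Lemma \ref{lem:coclosed}a), I would rewrite
\begin{flalign}
\nn \ip{\varphi_V}{G_{(1)}(\delta\dd\gamma)}_h &= \ip{\varphi_V}{\delta G_{(2)}(\dd\gamma)}_h = \ip{\dd\varphi_V}{G_{(2)}(\dd\gamma)}_h \\
\nn &= -\ip{\dd G_{(1)}(\varphi_V)}{\dd\gamma}_h = -\ip{\delta\dd G_{(1)}(\varphi_V)}{\gamma}_h,
\end{flalign}
and conclude by noting that $\delta\dd G_{(1)}(\varphi_V)=\square_{(1)}G_{(1)}(\varphi_V)-\dd\delta G_{(1)}(\varphi_V)=0-\dd G_{(0)}(\delta\varphi_V)=0$, where the first summand vanishes because $\square_{(1)}\circ G_{(1)}=0$ on $\Omega^1_0(M,\g^\ast)$.

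The main technical point will be verifying at each step that the pairings are well-defined despite the different support classes in play. Compact support of $\dd\gamma$ is what allows $G_{(2)}$ to be applied and the intermediate integrations by parts to go through, while the fact that $\gamma$ itself is \emph{only} timelike compact is exactly matched to the spacelike compact support of $\delta\dd G_{(1)}(\varphi_V)$ so that their overlap is compact in the final pairing. For part (a), the analogous simple support-calculus observation is that a form which is simultaneously past and future compact is compactly supported, and this is the sole reason one can glue the two separate affine representations $\psi_V=\delta\alpha^\pm$ into a single $\psi_V=\delta\alpha$ with $\alpha$ of compact support.
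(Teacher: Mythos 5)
Your proof is correct in both parts, and in each part it deviates from the paper's argument in a way worth noting. For part a) the paper also starts from $\ip{\eta}{G_{(2)}(\dd\psi_V)}_h=0$, but then invokes the abstract exactness statement ($G_{(2)}(\beta)=0$ with $\beta$ compactly supported implies $\beta=\square_{(2)}(\alpha)$ for some $\alpha\in\Omega^2_0$) to get $\dd\psi_V=\square_{(2)}(\alpha)$, and afterwards applies $\dd$ and $\delta$ together with injectivity of the normally hyperbolic operators on compactly supported forms to conclude $\dd\alpha=0$ and $\psi_V=\delta\alpha$. You instead inline the proof of that exactness statement: you build $\alpha^\pm=\dd G^\pm_{(1)}(\psi_V)$ explicitly, observe $\alpha^+=\alpha^-$ from $\dd G_{(1)}(\psi_V)=0$, and get compact support from the past/future compactness of the retarded/advanced supports in a globally hyperbolic spacetime. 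This is slightly more self-contained and skips the two extra normal-hyperbolicity steps; the resulting $\alpha$ is in fact the canonical choice hidden in the paper's existence claim. For part b) the paper extends the domain of $G_{(1)}$ to $\Omega^1_\mathrm{tc}(M,\g^\ast)$ (citing \cite{SDH12}) and computes $G_{(1)}(\delta\dd\gamma)=\delta\dd G_{(1)}(\gamma)=-\dd\delta G_{(1)}(\gamma)$ before integrating by parts onto $\delta\varphi_V$; you avoid the domain extension entirely by applying $G_{(2)}$ only to the compactly supported $\dd\gamma$, moving the propagator onto $\varphi_V$ by skew-adjointness, and paying for it with one final pairing of the spacelike compact form $\delta\dd G_{(1)}(\varphi_V)$ against the timelike compact $\gamma$ — which, as you correctly point out, is well-defined because their overlap is compact. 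Both routes rest on the same identities $\square_{(1)}\circ G_{(1)}=0$, the commutation relations (\ref{useful}), and $\delta\varphi_V=\delta\psi_V=0$ from Lemma \ref{lem:coclosed}; your version trades the external input on timelike compact domains for a small amount of support bookkeeping, which you handle correctly.
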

\begin{proof}
Proof of a): By hypothesis $[\psi]$ satisfies, for all $[\varphi]\in \EE$, 
\begin{flalign}\label{eqn:tmpnullspace}
\tau([\varphi],[\psi]) = \ip{\varphi_V}{G_{(1)}(\psi_V)}_h=0~.
\end{flalign}
By Corollary \ref{cor:gaugeinvobs} we have that $\EE^\mathrm{min}\subseteq \EE^\mathrm{inv}$
and thus it is necessary for $[\psi]$ to fulfill, for all $\eta\in\Omega_0^2(M,\g^\ast)$,
\begin{flalign}
0=\ip{\delta \eta}{ G_{(1)}(\psi_V)}_h = \ip{\eta}{G_{(2)}(\dd\psi_V)}_h~.
\end{flalign}
This implies that $G_{(2)}(\dd\psi_V) =0$ and hence due to the fact that $G_{(2)}$ is the causal propagator of
a normally hyperbolic operator we obtain $\dd\psi_V = \square_{(2)}(\alpha)$ for some $\alpha\in \Omega_0^2(M,\g^\ast)$.
Applying $\dd$ to this equation shows that $\dd\alpha =0$, i.e.~$\alpha\in \Omega_{0,\dd}^2(M,\g^\ast)$.
Applying $\delta$ and using that $\delta\psi_V=0$ (cf.~Lemma \ref{lem:coclosed}) we find
$\square_{(1)}(\psi_V) = \square_{(1)}(\delta\alpha)$. This implies $\psi_V = \delta\alpha$ and completes the proof.\sk

Proof of b): Let $\psi\in \EE^\mathrm{inv}$ be as above. For all $\varphi\in \EE^\mathrm{inv}$, we obtain
\begin{flalign}
\nn \tau([\varphi],[\psi]) & = \ip{\varphi_V}{G_{(1)}(\delta\dd\gamma)}_h=
 \ip{\varphi_V}{\delta\dd G_{(1)}(\gamma)}_h\\
\nn &= \ip{\varphi_V}{(\square_{(1)} -\dd\delta)\big(G_{(1)}(\gamma)\big)}_h =  -\ip{\varphi_V}{\dd\delta G_{(1)}(\gamma)}_h\\
 &=  -\ip{\delta\varphi_V}{\delta G_{(1)}(\gamma)}_h = 0~.
\end{flalign}
In the second equality we exploited the possibility to enlarge the domain of $G_{(1)}$
 to $\Omega^1_{\mathrm{tc}}(M,\g^\ast)$ \cite{SDH12}
 and in the last equality we used the identity $\delta\varphi_V=0$.
\end{proof}
\begin{cor}\label{cor:radicalinclusions}
Let us define the vector spaces
\begin{subequations}
\begin{flalign}
\mathcal{N}_{\mathrm{min}} &:= \big\{\psi\in \EE^\mathrm{inv}: \psi_V \in \delta \big(\Omega_0^{2}(M,\g^\ast)\cap \dd \Omega_\mathrm{tc}^1(M,\g^\ast)\big) \big\}\big /\MW^\ast\big[\Omega^1_0(M,\g^\ast)\big]~,\\
\mathcal{N}_{\mathrm{max}}&:=\big\{\psi\in\EE^\mathrm{inv} : \psi_V \in \delta \Omega_{0,\dd}^2(M,\g^\ast)\big\}\big /\MW^\ast\big[\Omega^1_0(M,\g^\ast)\big]~.
\end{flalign}
\end{subequations}
Then the following inclusions of vector spaces hold true
\begin{flalign}
\mathcal{N}_{\mathrm{min}} \subseteq \mathcal{N} \subseteq \mathcal{N}_{\mathrm{max}}\subseteq
\EE~.
\end{flalign}
\end{cor}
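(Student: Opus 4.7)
The plan is to obtain the corollary as a direct corollary of Lemma \ref{radical}, using only minor bookkeeping to check well-definedness of the quotient spaces. Before doing anything else, I would verify that $\MW^\ast[\Omega_0^1(M,\g^\ast)]$ is contained in both numerators, so that $\mathcal{N}_{\mathrm{min}}$ and $\mathcal{N}_{\mathrm{max}}$ are genuinely well-defined subspaces of $\EE$. This is immediate from the computation $\MW^\ast(\eta)_V = -\delta\dd\eta$ recorded above: writing $\eta\in\Omega^1_0(M,\g^\ast)$, the element $-\dd\eta$ is compactly supported and closed (since $\dd\dd=0$), so $\MW^\ast(\eta)_V \in \delta\Omega_{0,\dd}^2(M,\g^\ast)$; moreover $-\dd\eta\in\dd\Omega^1_{\mathrm{tc}}(M,\g^\ast)$ because $\Omega^1_0\subseteq\Omega^1_{\mathrm{tc}}$, so $\MW^\ast(\eta)_V$ also lies in $\delta\bigl(\Omega^2_0(M,\g^\ast)\cap\dd\Omega^1_{\mathrm{tc}}(M,\g^\ast)\bigr)$.

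With that in hand, the chain of inclusions is read off directly. The rightmost inclusion $\mathcal{N}_{\mathrm{max}}\subseteq\EE$ is tautological from the construction of $\EE=\EE^\mathrm{inv}/\MW^\ast[\Omega_0^1(M,\g^\ast)]$. For the middle inclusion $\mathcal{N}\subseteq\mathcal{N}_{\mathrm{max}}$, I would take $[\psi]\in\mathcal{N}$ with representative $\psi\in\EE^\mathrm{inv}$ and apply Lemma \ref{radical}(a), which produces $\alpha\in\Omega_{0,\dd}^2(M,\g^\ast)$ with $\psi_V=\delta\alpha$; by definition this is exactly the condition $\psi_V\in\delta\Omega^2_{0,\dd}(M,\g^\ast)$ characterising $\mathcal{N}_{\mathrm{max}}$.

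The remaining inclusion $\mathcal{N}_{\mathrm{min}}\subseteq\mathcal{N}$ is where I would need to unpack the intersection $\Omega^2_0(M,\g^\ast)\cap\dd\Omega^1_{\mathrm{tc}}(M,\g^\ast)$ appearing in the definition of $\mathcal{N}_{\mathrm{min}}$. Given $[\psi]\in\mathcal{N}_{\mathrm{min}}$ with representative $\psi$ satisfying $\psi_V=\delta\beta$ for some $\beta$ in this intersection, I can write $\beta=\dd\gamma$ for a suitable $\gamma\in\Omega^1_{\mathrm{tc}}(M,\g^\ast)$, and simultaneously the condition $\beta\in\Omega^2_0(M,\g^\ast)$ translates to $\dd\gamma\in\Omega^2_0(M,\g^\ast)$. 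Substituting, $\psi_V=\delta\dd\gamma$ with $\gamma\in\Omega^1_{\mathrm{tc}}(M,\g^\ast)$ and $\dd\gamma\in\Omega^2_0(M,\g^\ast)$, which is precisely the hypothesis of Lemma \ref{radical}(b); hence $[\psi]\in\mathcal{N}$.

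I do not expect a genuine obstacle here: the whole argument is an unpacking of the two parts of the preceding lemma, and the only thing one must be mildly careful about is that the definitions of $\mathcal{N}_{\mathrm{min}}$ and $\mathcal{N}_{\mathrm{max}}$ descend to the quotient by $\MW^\ast[\Omega^1_0(M,\g^\ast)]$, which I settle at the outset. If anything needs a second look it is the identification $\Omega^2_0\cap\dd\Omega^1_{\mathrm{tc}}=\{\dd\gamma:\gamma\in\Omega^1_{\mathrm{tc}},\;\dd\gamma\in\Omega^2_0\}$, but this is just the definition of the intersection.
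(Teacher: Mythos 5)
Your proposal is correct and matches the paper's (implicit) argument: the corollary is stated without a separate proof precisely because it is the direct unpacking of Lemma \ref{radical}~a) and~b) that you carry out, and your preliminary check that $\MW^\ast(\eta)_V=-\delta\dd\eta$ lies in both numerators (so the quotients are well-defined subspaces of $\EE$) is the only bookkeeping the paper leaves tacit.
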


\begin{rem}
The radical $\mathcal{N}$ of the theory under consideration is in general different from that of affine matter field theories,
see \cite[Proposition 4.4]{Benini:2012vi}. Even though the constant affine observables
$[a\,\1]$,  with $a\in C^\infty_0(M)$, are contained in $\mathcal{N}$, in general they do not exhaust
all elements.  The lower bound on $\mathcal{N}$, given in Corollary \ref{cor:radicalinclusions}, 
coincides with the radical obtained in \cite{SDH12} (up to the constant affine observables
 which are not present in this last mentioned paper, since it does not exploit the complete 
 geometric structure of the bundle of connections).
\end{rem}
\begin{rem}\label{rem:Nmin}
If $M$ has compact Cauchy surfaces all elements in $\mathcal{N}_\mathrm{min}$
have representatives with trivial linear part. In general this last statement does not hold true, as the following example proves:
Let us consider the case in which $G=\bbR$ (implying $\g^\ast =\bbR$) and
 $M$ is diffeomorphic to $\mathbb{R}^2\times\mathbb{S}^{m-2}$, where $m \geq 4$
 and $\mathbb{S}^{m-2}$ denotes the ${m-2}$-sphere (we suppress this diffeomorphism in the following). 
Any Cauchy surface $\Sigma\subseteq M$ is diffeomorphic to $\mathbb{R}\times\mathbb{S}^{m-2}$.
 Since $H^1_{0\,\mathrm{dR}}(\bbR) = \bbR$ is nontrivial, we can find an $\alpha\in \Omega^1_{0,\dd}(\bbR)$
which is not exact. Let us introduce Cartesian coordinates $(t,x)$ on the $\bbR^2$ factor of
$M$. We denote by $\alpha_t\in \Omega^1_{\dd}(M)$ the pull-back of $\alpha$
along the projection to the time coordinate $t$ and by $\alpha_x \in \Omega^1_{\dd}(M)$ the pull-back
of $\alpha$ along the projection to the space coordinate $x$.
We define $\eta := \alpha_t\wedge \alpha_x$.
The support property of $\alpha$ and the compatibility between $\dd$ and the pull-backs entail that
 $\eta\in\Omega^2_{0,\dd}(M)$. 
 Furthermore, since $H^1_\mathrm{dR}(M)=\{0\}$, there exists a $\beta \in C^\infty(M)$ 
 such that $\alpha_x=-\dd\beta$, which implies $\eta = \dd(\beta \,\alpha_t)$,
 where $\beta\,\alpha_t\in\Omega^1_{\mathrm{tc}}(M)$. We now show that $\eta\notin \dd\Omega^1_0(M)$:
 Let $\nu_{\mathbb{S}^{m-2}}$ be the normalized volume form on $\mathbb{S}^{m-2}$ and
  let $\mathrm{pr}:M\to\mathbb{S}^{m-2}$ be the projection from $M$ to  $\mathbb{S}^{m-2}$.
  Notice that the integral $\int_M \eta\wedge \mathrm{pr}^\ast(\nu_{\mathbb{S}^{m-2}})= \big(\int_\bbR \alpha\big)^2\neq 0$
  does not vanish,
  since $\alpha$ is not exact. If there would exist a $\gamma\in \Omega^1_0(M)$, such that $\eta =\dd\gamma$,
  then by Stokes' theorem the integral would vanish, which is a contradiction. Hence, $\eta = \dd(\beta \,\alpha_t)$,
  with $\beta\,\alpha_t\in\Omega^1_{\mathrm{tc}}(M)$, defines a nontrivial element in $H^2_{0\,\mathrm{dR}}(M)$.
  Furthermore, for the class in $\mathcal{N}_{\mathrm{min}}$ defined by 
  $\underline{\mathcal{F}}^\ast(\eta)\in \EE^\mathrm{inv}$ there exists no representative with a trivial
  linear part: Indeed,
  suppose that there exists $\gamma\in\Omega^1_0(M)$ such that $\underline{\mathcal{F}}^\ast(\eta)_V = 
  -\delta\eta  =\delta \dd\gamma$. 
  Using that $\eta$ is closed and of compact support, this equation entails $-\square_{(2)}(\eta )=\square_{(2)} ( \dd\gamma)$
  which yields the contradiction $\eta=-\dd\gamma$, since $\square_{(2)}$ is a normally hyperbolic operator.
\end{rem}


\section{\label{sec:radical}Explicit characterization of $\EE^\mathrm{inv}$ and $\mathcal{N}$}
So far we obtained only upper and lower bounds for the vector spaces $\EE^\mathrm{inv}$ and $\mathcal{N}$, see
Corollary \ref{cor:gaugeinvobs} and Corollary \ref{cor:radicalinclusions}. The goal of this section is
 to provide an explicit characterization of $\EE^\mathrm{inv}$ and $\mathcal{N}$
 when $G$ is a connected Abelian Lie group.
 Due to \cite[Theorem 2.19]{Adams} the latter assumption entails that $G$ is isomorphic
 to $\mathbb{T}^k\times \bbR^l$, for some $k,l\in\bbN_0$.
 \sk
 
For this endeavor we have to understand more explicitly
 how the gauge group $\mathrm{Gau}(P)\simeq C^\infty(M,G)$ acts on $\sect{M}{\mathcal{C}(\Xi)}$. 
Let us consider the homomorphism of Abelian groups
\begin{flalign}\label{eqn:homtemp}
C^\infty(M,G) \to \Omega^1(M,\g)~,~~\widehat{f}\mapsto \widehat{f}^\ast(\mu_G)~.
\end{flalign}
Notice that the image of this homomorphism characterizes the action of the gauge group on $\sect{M}{\mathcal{C}(\Xi)}$. 
Using that the Maurer-Cartan form of any Abelian Lie group is closed, we obtain that the homomorphism 
(\ref{eqn:homtemp}) maps to the closed one-forms $\Omega^1_\dd(M,\g)$.
Using further that $\exp[C^\infty(M,\g)]$ is an Abelian subgroup of $C^\infty(M,G) $
and that the image of $\exp[C^\infty(M,\g)]$ under the group homomorphism (\ref{eqn:homtemp}) is 
$\dd C^\infty(M,\g)$, we arrive at an injective Abelian group homomorphism
\begin{flalign}
C^\infty(M,G) / \exp[C^\infty(M,\g)] \to H^1_\mathrm{dR}(M,\g)~,~~[\widehat{f}]\mapsto [\widehat{f}^\ast(\mu_G)]~.
\end{flalign}
We denote the image of this homomorphism by $A_G\subseteq H^1_\mathrm{dR}(M,\g)$.
Notice that the Abelian group $A_G$ characterizes exactly the gauge transformations which
are not of exponential form $\exp\circ \chi$, for some $\chi\in C^\infty(M,\g)$.
  \sk
  
  Since any connected Abelian Lie group $G$ is isomorphic to $\mathbb{T}^k\times \mathbb{R}^{l}$, 
  the map $\widehat{f}\in C^\infty(M,G)$ is given by a $k+l$-tuple of maps
  $\big(\widehat{f}_1,\dots,\widehat{f}_{k+l}\big)$, where $\widehat{f}_i\in C^\infty(M,\mathbb{T})$, for $i=1,\dots,k$,
  and $\widehat{f}_i\in C^\infty(M,\mathbb{R})$, for $i=k+1,\dots,k+l$.
  The Abelian group $C^\infty(M,G) / \exp[C^\infty(M,\g)] $ factorizes into the direct  product 
  $\big(C^\infty(M,\mathbb{T})/\exp[C^\infty(M,i\,\bbR)] \big) ^{k} \times \big(C^\infty(M,\bbR)/\exp[C^\infty(M,\bbR)] \big)^l $,
  where $i\,\bbR$ is the Lie algebra of $\mathbb{T}$ and $\bbR$ is the Lie algebra of $\bbR$.
  Also the cohomology group splits into a direct sum $H^1_\mathrm{dR}(M,\g) = 
  H^1_\mathrm{dR}(M,i\bbR)^{\oplus_k} \oplus H^1_\mathrm{dR}(M,\bbR)^{\oplus_l}$. The Abelian group $A_G$ is thus given by
  a direct sum of Abelian groups $A_G = A_\mathbb{T} ^{\oplus_k}\oplus A_{\mathbb{R}}^{\oplus_l}$ 
  (remember that the direct product and direct sum of groups over a finite index set yield the same group).
  In this way the problem of characterizing $A_G$ is reduced to 
  the problem of characterizing $A_\mathbb{T}$ and $A_{\mathbb{R}}$.
\begin{propo}
$A_{\bbR} = \{0\}$.
\end{propo}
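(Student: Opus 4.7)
The plan is to show directly that the quotient $C^\infty(M,\mathbb{R})/\exp[C^\infty(M,\mathbb{R})]$ is the trivial group, which then implies $A_{\mathbb{R}}=\{0\}$ by the injectivity of the homomorphism constructed just before the statement.

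First I would recall that the Lie algebra of the additive Lie group $\mathbb{R}$ is $\mathbb{R}$ itself, and that the exponential map $\exp:\mathbb{R}\to\mathbb{R}$ of this abelian Lie group is the identity map, since the one-parameter subgroup through $v\in T_0\mathbb{R}=\mathbb{R}$ is simply $t\mapsto tv$, giving $\exp(v)=v$. Consequently $\exp[C^\infty(M,\mathbb{R})]=C^\infty(M,\mathbb{R})$, and the quotient $C^\infty(M,\mathbb{R})/\exp[C^\infty(M,\mathbb{R})]$ is the trivial group. Since the map $[\widehat{f}]\mapsto[\widehat{f}^\ast(\mu_{\mathbb{R}})]$ is an injective Abelian group homomorphism out of the trivial group, its image $A_{\mathbb{R}}\subseteq H^1_{\mathrm{dR}}(M,\mathbb{R})$ must be trivial.

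As a consistency check one can verify this directly via the Maurer-Cartan form: the Maurer-Cartan form of the additive group $\mathbb{R}$ is the canonical one-form $\mu_{\mathbb{R}}=\dd x$, so for any $\widehat{f}\in C^\infty(M,\mathbb{R})$ one has $\widehat{f}^\ast(\mu_{\mathbb{R}})=\dd\widehat{f}\in\dd C^\infty(M,\mathbb{R})$, which is exact and hence represents the zero class in $H^1_{\mathrm{dR}}(M,\mathbb{R})$.

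There is essentially no obstacle here: the result rests entirely on the fact that $\exp_{\mathbb{R}}$ is a bijection, a feature special to the non-compact factor $\mathbb{R}$ that fails for the compact factor $\mathbb{T}$. The nontrivial case will be the analysis of $A_{\mathbb{T}}$, which will constitute the real work.
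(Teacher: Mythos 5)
Your argument is correct and is exactly the paper's proof: the key fact in both cases is that $\exp:\g\to G$ is an isomorphism for $G=\bbR$, so the quotient $C^\infty(M,\bbR)/\exp[C^\infty(M,\bbR)]$ is trivial and hence so is its image $A_\bbR$ under the injective homomorphism. Your expanded justification of why $\exp$ is the identity, and the consistency check via $\widehat{f}^\ast(\mu_\bbR)=\dd\widehat{f}$, are fine but add nothing beyond what the paper states in one line.
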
  
  \begin{proof}
  This is a consequence of the exponential map $\exp :\g \to G $ being an isomorphism
  for $G=\bbR$.
  \end{proof}
  
To characterize $A_\mathbb{T}$ we are using techniques from sheaf cohomology, see 
e.g.~\cite[Section 4]{alggeo}.
Let us denote by $C^\infty_M(-,H)$ the sheaf of smooth functions on $M$ with values in an Abelian Lie group
$H$. Explicitly, for any open subset $U\subseteq M$ the sheaf associates the Abelian
group $C^\infty(U,H)$ of smooth functions on $U$ with values in $H$. 
In the following we shall require for $H$ the choices $2\pi i\,\bbZ$ (regarded as a zero-dimensional Lie group), 
$i\,\bbR$ (with group operation
given by addition) and $\mathbb{T}$ (with group operation given by multiplication).
There is an exact sequence of sheaves 
\begin{flalign}
\xymatrix{
0 \ar[r] & C^\infty_M(-,2\pi i\,\bbZ)  \ar[r]^\iota &C^\infty_M(-,i\,\bbR)\ar[r]^{\exp} & C^\infty_M(-,\mathbb{T}) \ar[r] & 0~,
}
\end{flalign}
where $\iota$ is induced from the canonical injection $2\pi i\,\bbZ \hookrightarrow i\,\bbR$
and $\exp$ is induced from the exponential map $i\,\bbR \to \mathbb{T}$.
Remember that by definition a sequence of sheaves is called exact if and only if 
for each point $x\in M$ the sequence of stalks is exact. In particular, this definition does not require
that $C^\infty(U,i\,\bbR)\to C^\infty(U,\mathbb{T}) $ is surjective, for all open subsets $U\subseteq M$, but only that this
property holds for sufficiently small $U$. The obstruction to surjectivity of this group homomorphism for $U=M$
is encoded in the long exact sequence of sheaf cohomology. 
For the present case we are interested in the following part of the aforementioned sequence:
\begin{flalign}
\cdots\, \to\, C^\infty(M,i\bbR) \to  C^\infty(M,\mathbb{T}) \to  H^1(M,C^\infty_M(-,2\pi i\,\bbZ))
\to  H^1(M,C^\infty_M(-,i\,\bbR)) \,\to \,\cdots 
\end{flalign}
Notice that the sheaf $C^\infty_M(-,i\,\bbR)$ is soft (i.e.~every real valued function defined on a closed subset of
$M$ can be extended to $M$) and as a consequence the sheaf cohomology group vanishes, 
i.e.~$H^1(M,C^\infty_M(-,i\,\bbR))  = 0$.
This implies that the group homomorphism
$C^\infty(M,\mathbb{T}) \to  H^1(M,C^\infty_M(-,2\pi i\,\bbZ))$ is
surjective, hence $A_\mathbb{T}\simeq C^\infty(M,\mathbb{T})/\exp[C^\infty(M,i\,\bbR)] \simeq 
H^1(M,C^\infty_M(-,2\pi i\,\bbZ))$.
Since $M$ is a manifold, the sheaf cohomology group $H^1(M,C^\infty_M(-,2\pi i\,\bbZ))$
is isomorphic to the first singular cohomology group 
as well as to the first {\v C}ech cohomology group with coefficients in $2\pi i\,\bbZ$.
We simply use the symbol $H^1(M,2\pi i\,\bbZ) := H^1(M,C^\infty_M(-,2\pi i\,\bbZ))$.
In summary, we have the following
\begin{propo}
$A_\mathbb{T} \simeq H^1(M,2\pi i\,\mathbb{Z})$.
\end{propo}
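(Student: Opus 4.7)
My plan is to read off the claimed isomorphism directly from the long exact sheaf cohomology sequence that has already been set up in the discussion immediately preceding the statement. The short exact sequence of sheaves
\begin{flalign*}
\xymatrix{
0 \ar[r] & C^\infty_M(-,2\pi i\,\bbZ)  \ar[r]^\iota &C^\infty_M(-,i\,\bbR)\ar[r]^{\exp} & C^\infty_M(-,\mathbb{T}) \ar[r] & 0
}
\end{flalign*}
induces a long exact sequence in sheaf cohomology, and the relevant segment reads
\begin{flalign*}
C^\infty(M,i\bbR) \xrightarrow{\exp} C^\infty(M,\mathbb{T}) \xrightarrow{\partial} H^1(M,C^\infty_M(-,2\pi i\,\bbZ)) \longrightarrow H^1(M,C^\infty_M(-,i\,\bbR))~.
\end{flalign*}

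The key input that makes the argument collapse is the acyclicity of the sheaf of smooth $i\bbR$-valued functions: since $C^\infty_M(-,i\bbR)$ is a soft (indeed fine) sheaf on the paracompact manifold $M$, one has $H^1(M,C^\infty_M(-,i\bbR))=0$. Exactness at the third term then forces the connecting homomorphism $\partial: C^\infty(M,\mathbb{T}) \to H^1(M,C^\infty_M(-,2\pi i\,\bbZ))$ to be surjective, while exactness at the second term identifies its kernel with the image of $\exp: C^\infty(M,i\bbR)\to C^\infty(M,\mathbb{T})$, namely $\exp[C^\infty(M,i\bbR)]$. The first isomorphism theorem therefore yields
\begin{flalign*}
C^\infty(M,\mathbb{T})\big/\exp[C^\infty(M,i\bbR)] \;\simeq\; H^1(M,C^\infty_M(-,2\pi i\,\bbZ))~.
\end{flalign*}

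To conclude I would combine this with two identifications. First, the left-hand side is by the very definition of $A_\mathbb{T}$ (via the injective homomorphism $[\widehat{f}]\mapsto [\widehat{f}^\ast(\mu_\mathbb{T})]$ introduced earlier in the section) isomorphic to $A_\mathbb{T}$. Second, standard sheaf theory on a manifold identifies the sheaf cohomology $H^1(M,C^\infty_M(-,2\pi i\bbZ))$ with the {\v C}ech cohomology and the singular cohomology with coefficients in the locally constant group $2\pi i\bbZ$; this is the identification tacitly used by the notation $H^1(M,2\pi i\bbZ)$. Chaining these identifications gives the statement.

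The main obstacle is a purely expository one: making sure the reader accepts the softness (hence acyclicity) of $C^\infty_M(-,i\bbR)$, which relies on existence of smooth partitions of unity on $M$ and could warrant a short citation to a standard reference. There is no genuine computation to perform beyond invoking the long exact sequence and the first isomorphism theorem.
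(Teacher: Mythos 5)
Your argument is correct and coincides with the paper's own proof: both use the long exact sequence in sheaf cohomology induced by the exponential sequence, the vanishing of $H^1(M,C^\infty_M(-,i\,\bbR))$ by softness of that sheaf, and the resulting surjectivity of $C^\infty(M,\mathbb{T})\to H^1(M,C^\infty_M(-,2\pi i\,\bbZ))$ to identify $A_\mathbb{T}$ with that cohomology group, which is then identified with singular ({\v C}ech) cohomology. No substantive differences.
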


\begin{cor}\label{cor:AGgeneral}
$A_G \simeq H^1(M,2\pi i\,\mathbb{Z})^{\oplus_k}$.
\end{cor}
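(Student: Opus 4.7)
The plan is that this corollary follows directly by combining the two preceding propositions with the factorization of $A_G$ that was established in the paragraph introducing $\mathbb{T}^k \times \mathbb{R}^l$. There is essentially no new content to prove; the work is just assembling what has already been shown.

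More concretely, I would proceed as follows. First, I would recall that since $G$ is a connected Abelian Lie group, by \cite[Theorem 2.19]{Adams} there is an isomorphism of Lie groups $G \simeq \mathbb{T}^k \times \mathbb{R}^l$ for some $k,l \in \mathbb{N}_0$. This isomorphism induces a factorization of the Abelian group $C^\infty(M,G)/\exp[C^\infty(M,\mathfrak{g})]$ as a direct product, which in turn (together with the compatible decomposition of $H^1_\mathrm{dR}(M,\mathfrak{g})$ as a direct sum) yields the decomposition $A_G \simeq A_\mathbb{T}^{\oplus_k} \oplus A_\mathbb{R}^{\oplus_l}$ already noted in the text above.

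Next, I would substitute the two identifications obtained in the preceding propositions: $A_\mathbb{R} = \{0\}$, so the $\mathbb{R}$-factor contributes nothing, and $A_\mathbb{T} \simeq H^1(M,2\pi i\,\mathbb{Z})$. Inserting these gives
\begin{flalign}
A_G \simeq H^1(M,2\pi i\,\mathbb{Z})^{\oplus_k} \oplus \{0\}^{\oplus_l} \simeq H^1(M,2\pi i\,\mathbb{Z})^{\oplus_k}~,
\end{flalign}
as claimed.

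The hard part, if any, is conceptual rather than technical: one should check that the direct-sum decomposition of $A_G$ is induced in a compatible way from the two ingredients used in its construction, namely the exponential map and the pull-back of the Maurer-Cartan form. But this compatibility is immediate since both constructions respect direct products of Abelian Lie groups componentwise, so no real obstacle arises.
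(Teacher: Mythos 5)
Your proposal is correct and matches the paper's own (implicit) argument exactly: the corollary is obtained by substituting $A_{\mathbb{R}}=\{0\}$ and $A_{\mathbb{T}}\simeq H^1(M,2\pi i\,\mathbb{Z})$ into the decomposition $A_G = A_{\mathbb{T}}^{\oplus_k}\oplus A_{\mathbb{R}}^{\oplus_l}$ established in the preceding text. Nothing further is needed.
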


We now provide an explicit characterization of $\EE^\mathrm{inv}$.
This is based on the following relation between cohomology with real and integer coefficients:
 \begin{flalign}\label{eqn:coefficientiso}
H^1(M,2\pi i\,\bbZ)\otimes_\bbZ \bbR \simeq H^1(M,i\,\bbR)~.
 \end{flalign}
A proof of this result can be found in \cite[Chapter 7.1.1]{Voisin} under the assumption that the manifold $M$ 
is of finite type, which is our case as specified at the beginning of Section \ref{sec:prelim}.
\begin{theo}\label{theo:gaugeinvspace}
Let $G\simeq \mathbb{T}^k\times \bbR^l$ be a connected Abelian Lie group and
$\Xi=\big((M,\o,g,\t),(P,r)\big)$ an object in $G{-}\PrBu$.
Then the gauge invariant subspace $\EE^\mathrm{inv}$ (\ref{eqn:gaugeinv}) is 
\begin{flalign}
\EE^\mathrm{inv}  =\big\{ \varphi \in \EE^\mathrm{kin}:
 \varphi_V \in \delta \Omega^2_0(M,i\bbR)^{\oplus_k}\oplus \Omega^1_{0,\delta}(M,\bbR)^{\oplus_l}\big\} ~.
\end{flalign}
\end{theo}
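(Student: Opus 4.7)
The plan is to exploit the factorization $G \simeq \mathbb{T}^k \times \bbR^l$, which induces corresponding direct-sum decompositions of $\g$, $\g^\ast$ and of the gauge group $C^\infty(M,G)$. Since the duality pairing between $\g^\ast$ and $\g$ is bilinear in these factors, the gauge invariance condition (\ref{eqn:gaugeinv}) decouples into $k+l$ independent conditions on the corresponding components of $\varphi_V\in\Omega^1_0(M,i\bbR)^{\oplus_k}\oplus\Omega^1_0(M,\bbR)^{\oplus_l}$. It therefore suffices to settle the cases $G=\bbR$ and $G=\mathbb{T}$ separately.

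For $G=\bbR$ the exponential map is an isomorphism, so every $\widehat{f}\in C^\infty(M,\bbR)$ has the form $\exp\circ\chi$ and $\widehat{f}^\ast(\mu_\bbR)=\dd\chi$ is exact. The identity $\ip{\varphi_V}{\dd\chi}=\ip{\delta\varphi_V}{\chi}$ then implies that gauge invariance is equivalent to $\delta\varphi_V=0$, i.e.\ to $\varphi_V\in\Omega^1_{0,\delta}(M,\bbR)$. The same argument applied to the subgroup $\exp[C^\infty(M,i\bbR)]\subseteq C^\infty(M,\mathbb{T})$ recovers Lemma~\ref{lem:coclosed}(a) in the $\mathbb{T}$ case, showing that coclosedness of $\varphi_V$ remains necessary there.

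The main obstacle is to characterize the additional constraint in the $\mathbb{T}$ case: for coclosed $\varphi_V\in\Omega^1_{0,\delta}(M,i\bbR)$, I must show that $\ip{\varphi_V}{\widehat{f}^\ast(\mu_\mathbb{T})}=0$ for all $\widehat{f}\in C^\infty(M,\mathbb{T})$ is equivalent to $\varphi_V\in\delta\Omega^2_0(M,i\bbR)$. Since $\widehat{f}^\ast(\mu_\mathbb{T})$ is closed and $\varphi_V$ is coclosed, the pairing depends only on the de Rham classes of its arguments and descends to a bilinear map $H^1_{0,\mathrm{dR}^\ast}(M,i\bbR)\times H^1_\mathrm{dR}(M,i\bbR)\to\bbR$, which is non-degenerate by Poincar\'e duality (via the Hodge-star identification of the complex $(\Omega^\bullet_0,\delta)$ with the compactly supported de Rham complex, together with the finite-type hypothesis on $M$). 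The sheaf-cohomology argument preceding Corollary~\ref{cor:AGgeneral} shows that, as $\widehat{f}$ varies, the classes $[\widehat{f}^\ast(\mu_\mathbb{T})]$ exhaust the image of $H^1(M,2\pi i\,\bbZ)$ inside $H^1_\mathrm{dR}(M,i\bbR)$; by the coefficient isomorphism (\ref{eqn:coefficientiso}) this image generates $H^1_\mathrm{dR}(M,i\bbR)$ over $\bbR$. The $\bbR$-linearity of the pairing in the second slot then forces $[\varphi_V]=0$ in $H^1_{0,\mathrm{dR}^\ast}(M,i\bbR)$, i.e.\ $\varphi_V\in\delta\Omega^2_0(M,i\bbR)$. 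The converse inclusions, namely that every $\varphi_V$ of the claimed form indeed yields a gauge invariant $\varphi$, follow from Lemma~\ref{lem:coclosed}(b) combined with the closedness of the Maurer-Cartan form on any Abelian Lie group.
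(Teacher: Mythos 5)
Your proposal is correct and follows essentially the same route as the paper: first coclosedness via the exponential subgroup, then reduction of the gauge-invariance condition to the pairing with $A_G\simeq H^1(M,2\pi i\,\bbZ)^{\oplus_k}$, extension to all of $H^1_{\mathrm{dR}}(M,i\bbR)^{\oplus_k}$ via the coefficient isomorphism (\ref{eqn:coefficientiso}) and $\bbR$-linearity, and finally Poincar\'e duality to conclude $\varphi_V\in\delta\Omega^2_0(M,i\bbR)^{\oplus_k}\oplus\Omega^1_{0,\delta}(M,\bbR)^{\oplus_l}$. The only difference is organizational (you decouple the $\mathbb{T}$ and $\bbR$ factors explicitly and spell out the converse inclusion, which the paper leaves to Corollary \ref{cor:gaugeinvobs}), not mathematical.
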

\begin{proof}
By definition, $\EE^\mathrm{inv}$ is the  vector subspace of $\EE^\mathrm{kin}$, such that
the linear parts annihilate $\{\widehat{f}^\ast(\mu_G):\widehat{f}\in C^\infty(M,G)\}$. Due to Corollary  \ref{cor:gaugeinvobs}
we have that $\EE^\mathrm{inv}\subseteq \EE^\mathrm{max}=\{\varphi \in \EE^\mathrm{kin} :
 \varphi_V\in \Omega^1_{0,\delta}(M,\g^\ast)\}$ and hence we can pair the linear parts of
 elements $\varphi\in \EE^\mathrm{inv}$ with cohomology classes $[\eta]\in H_{\mathrm{dR}}^1(M,\g)$, 
 $\ip{\varphi_V}{[\eta]} =\int_M \varphi_V \wedge \ast(\eta) $. The gauge invariance condition amounts to
 $\ip{\varphi_V}{A_G} =\{0\}$, for all $\varphi\in\EE^\mathrm{inv}$, and by Corollary \ref{cor:AGgeneral} this is equivalent to
 \begin{flalign}\label{eqn:tempEinv}
 \ip{\varphi_V}{H^1(M,2\pi i\,\mathbb{Z})^{\oplus_k}}=\{0\}~.
 \end{flalign}
 Since  $H^1_\mathrm{dR}(M,i\,\bbR) \simeq H^1(M,i\,\bbR)
 \simeq H^1(M,2\pi i\,\mathbb{Z}) \otimes_\bbZ \bbR$
 and since the map $\ip{\varphi_V}{~} : H^1_\mathrm{dR}(M,i\,\bbR) \to \bbR$ is linear, (\ref{eqn:tempEinv})
 implies that, for all $\varphi\in \EE^\mathrm{inv}$,
 \begin{flalign}\label{noflat}
 \ip{\varphi_V}{H_{\mathrm{dR}}^1(M,i\bbR)^{\oplus_k}}=\{0\}~.
 \end{flalign}
 As a consequence of Poincar{\'e} duality, 
 $\varphi_V \in \delta \Omega^2_0(M,i\bbR)^{\oplus_k}\oplus \Omega^1_{0,\delta}(M,\bbR)^{\oplus_l}$ which completes the proof.
\end{proof}
\begin{rem}
Notice that if $G\simeq \mathbb{T}^k\times \bbR^l$ contains a nontrivial compact factor (i.e.~$k>0$),
the vector space of gauge invariant classical affine functionals $\{\mathcal{O}_\varphi: \varphi \in \EE^\mathrm{inv}\}$
(cf.~(\ref{eqn:affinefunctional})) does not separate all gauge equivalence classes of connections:
Given two connections $\lambda_1,\lambda_2\in \sect{M}{\mathcal{C}(\Xi)}$ with the same curvature, then
 there exists $\eta \in \Omega^1_\dd(M,\g)$ such that $\lambda_2 = \lambda_1 +\eta$.
 Let us assume that $[\eta] \in H_{\mathrm{dR}}^1(M,i\bbR)^{\oplus_k} \subseteq H_{\mathrm{dR}}^1(M,\g)$, but
 $[\eta]\not\in A_G$ such that $\lambda_1$ and $\lambda_2$ are not gauge equivalent
  (this exists e.g.~for $M\simeq \bbR^{m-1}\times \mathbb{T}$).
 Then by (\ref{noflat}) we obtain, for all $\varphi \in \EE^\mathrm{inv}$, 
 $\mathcal{O}_\varphi(\lambda_2) = \mathcal{O}_\varphi(\lambda_1) + \ip{\varphi_V}{\eta} = \mathcal{O}_\varphi(\lambda_1)$.
The origin of this pathology is the fact that $A_G$ is only an Abelian group and not a vector space (cf.~Corollary \ref{cor:AGgeneral}).
 Performing the quotient of the configuration space $\sect{M}{\mathcal{C}(\Xi)}$ by the gauge transformations
 that are of exponential form (that are all for $k=0$) we obtain again an affine space. However, performing the quotient of
 the resulting affine space by the Abelian group $A_G$ we obtain no affine space anymore
 (compare this with the quotient $\bbR/\mathbb{Z} \simeq \mathbb{T}$).
 The gauge invariant classical affine functionals $\{\mathcal{O}_\varphi: \varphi \in \EE^\mathrm{inv}\}$ 
 do not take into account the nontrivial topology of the quotient of the configuration space by the full gauge group.
  For this reason one should enlarge the algebra of  gauge invariant observables constructed in this paper
   to include additional elements  which can separate all gauge equivalence classes of connections. A natural candidate 
  are Wilson loops, but, being too singular objects localized on curves, they cannot be added easily to the present formalism
  used in algebraic quantum field theory. 
  We will come back to this issue in our future investigations.
\end{rem}
\sk

To conclude this section we characterize the radical $\mathcal{N}$ of the 
presymplectic vector space $(\EE,\tau)$ of Proposition
\ref{propo:presymp}.
\begin{theo}\label{theo:radicalexplicit}
Let $G\simeq \mathbb{T}^k\times\bbR^l$ be a connected Abelian Lie group,
$h$ a bi-invariant pseudo-Riemannian metric on $G$
and $\Xi=\big((M,\o,g,\t),(P,r)\big)$ an object in $G{-}\PrBu$.
Then the radical $\mathcal{N}$ of $(\EE,\tau )$ is
\begin{flalign}\label{eqn:radicalrefined}
\mathcal{N}=\big\{\psi \in  \EE^\mathrm{inv} : 
h^{-1}(\psi_V) \in \delta\Omega^2_{0,\dd}(M,i\bbR)^{\oplus_k}\oplus \delta\big(\Omega^2_0(M,\bbR)\cap \dd \Omega^1_\mathrm{tc}(M,\bbR)\big)^{\oplus_{l}}   \big\}/\MW^\ast\big[\Omega^1_0(M,\g^\ast)\big]~.
\end{flalign}
\end{theo}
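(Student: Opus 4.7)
The plan is to sharpen the bounds $\mathcal{N}_\mathrm{min}\subseteq\mathcal{N}\subseteq\mathcal{N}_\mathrm{max}$ of Corollary \ref{cor:radicalinclusions} by splitting everything along the Lie algebra decomposition $\g\simeq (i\bbR)^{\oplus_k}\oplus\bbR^{\oplus_l}$ dictated by $G\simeq\mathbb{T}^k\times\bbR^l$. The crucial input is Theorem \ref{theo:gaugeinvspace}: the test space $\EE^\mathrm{inv}$ inherits this splitting, with markedly different gauge invariance conditions on the two factors. On the compact factor one may only test against $\varphi_V\in\delta\Omega^2_0(M,i\bbR)^{\oplus_k}$, while on the non-compact factor one may test against all of $\Omega^1_{0,\delta}(M,\bbR)^{\oplus_l}$. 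This asymmetry translates, via the pairing $\tau$, into an asymmetric characterization of $\mathcal{N}$: the $\mathcal{N}_\mathrm{max}$-type condition should suffice on the compact sector, while the stronger $\mathcal{N}_\mathrm{min}$-type condition should be required on the non-compact one. Starting from Lemma \ref{radical}(a), every $[\psi]\in\mathcal{N}$ admits a representative with $h^{-1}(\psi_V)=\delta\alpha$, $\alpha\in\Omega^2_{0,\dd}(M,\g)$, and I would decompose $\alpha=\alpha^{\mathbb{T}}\oplus\alpha^{\bbR}$ accordingly and study each summand separately.

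For the compact sector, I would first verify that any $\alpha^{\mathbb{T}}\in\Omega^2_{0,\dd}(M,i\bbR)^{\oplus_k}$ contributes to $\mathcal{N}$. For $\varphi\in\EE^\mathrm{inv}$ with $\mathbb{T}^k$-component $\varphi_V^{\mathbb{T}}=\delta\eta$ from Theorem \ref{theo:gaugeinvspace}, a direct computation yields
\begin{align*}
\tau([\varphi],[\psi]) = \ip{\delta\eta}{G_{(1)}(\delta\alpha^{\mathbb{T}})}_h = \ip{\eta}{G_{(2)}(\dd\delta\alpha^{\mathbb{T}})}_h = \ip{\eta}{G_{(2)}(\square_{(2)}\alpha^{\mathbb{T}})}_h = 0,
\end{align*}
where I used closedness of $\alpha^{\mathbb{T}}$ to replace $\dd\delta$ by $\square_{(2)}$ and the vanishing $G_{(2)}\circ\square_{(2)}=0$. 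Together with Lemma \ref{radical}(a), this shows that on the $\mathbb{T}^k$ summand the radical is exactly $\delta\Omega^2_{0,\dd}(M,i\bbR)^{\oplus_k}$, so no further refinement beyond $\mathcal{N}_\mathrm{max}$ occurs here.

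For the non-compact sector, the sufficiency $h^{-1}(\psi_V)\in\delta\big(\Omega^2_0(M,\bbR)\cap\dd\Omega^1_\mathrm{tc}(M,\bbR)\big)^{\oplus_l}\Rightarrow [\psi]\in\mathcal{N}$ is exactly Lemma \ref{radical}(b). The converse is where I expect the main technical difficulty to lie. Assuming $[\psi]\in\mathcal{N}$, Theorem \ref{theo:gaugeinvspace} allows me to pair against every coclosed $\varphi_V\in\Omega^1_{0,\delta}(M,\bbR^\ast)^{\oplus_l}$, so that
\begin{align*}
0 = \tau([\varphi],[\psi]) = \ip{\varphi_V}{\delta G_{(2)}(\alpha^{\bbR})}_h = \ip{\dd\varphi_V}{G_{(2)}(\alpha^{\bbR})}_h,
\end{align*}
and from this I must deduce $\alpha^{\bbR}=\dd\gamma$ for some $\gamma\in\Omega^1_\mathrm{tc}(M,\bbR)^{\oplus_l}$ with $\dd\gamma\in\Omega^2_0(M,\bbR)^{\oplus_l}$.

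Since $\alpha^{\bbR}$ is closed and compactly supported, $G_{(2)}(\alpha^{\bbR})$ is a closed, spacelike compact $2$-form and defines a class $[G_{(2)}(\alpha^{\bbR})]\in H^2_{\mathrm{sc},\dd}(M,\bbR)^{\oplus_l}$. The plan for the hard step is to invoke Poincar\'e duality (valid under our finite-type assumption) to identify $H^2_{\mathrm{sc},\dd}(M,\bbR)$ with the dual of $H^{m-2}_{\mathrm{tc},\dd}(M,\bbR)$, reinterpret the vanishing of $\ip{\dd\varphi_V}{G_{(2)}(\alpha^{\bbR})}_h$ over all coclosed compactly supported $\varphi_V$ as the annihilation of this duality pairing on a full set of representatives, and hence conclude $[G_{(2)}(\alpha^{\bbR})]=0$. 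Combining this with the kernel characterization $\ker G_{(2)}=\square_{(2)}[\Omega^2_0(M,\bbR)]$ on globally hyperbolic manifolds and the closedness of $\alpha^{\bbR}$ will then let me produce $\gamma\in\Omega^1_\mathrm{tc}(M,\bbR)^{\oplus_l}$ with $\alpha^{\bbR}=\dd\gamma$, along the lines of the analogous argument in \cite{SDH12}. Translating back through $h$ and quotienting by $\MW^\ast[\Omega^1_0(M,\g^\ast)]$ finally yields the form (\ref{eqn:radicalrefined}) claimed for $\mathcal{N}$.
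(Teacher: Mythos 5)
Your overall architecture is the same as the paper's: split along $\g\simeq(i\bbR)^{\oplus_k}\oplus\bbR^{\oplus_l}$, use Theorem \ref{theo:gaugeinvspace} to decompose the test functionals as $\varphi_V=\delta\alpha+\beta$, observe that on the compact sector the whole of $\mathcal{N}_\mathrm{max}$ survives while on the non-compact sector only the $\mathcal{N}_\mathrm{min}$-type elements do, and invoke Lemma \ref{radical} for the two easy inclusions. Your compact-sector computation and the sufficiency direction on the non-compact sector are correct and essentially identical to the paper's.

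The problem is the necessity direction on the non-compact sector, which is the crux of the theorem, and there your plan has a genuine gap. You integrate by parts to $\ip{\dd\varphi_V}{G_{(2)}(\alpha^{\bbR})}=0$ and want to read this as the annihilation of the duality pairing between $H^2_{\mathrm{sc},\dd}$ and $H^{m-2}_{\mathrm{tc},\dd}$ ``on a full set of representatives''. But the test forms $\dd\varphi_V$ (equivalently their Hodge duals $\ast\dd\varphi_V$) are not closed --- $\delta\dd\varphi_V=\square_{(1)}\varphi_V\neq 0$ --- so they do not represent classes in any cohomology dual to $H^2_{\mathrm{sc},\dd}$; worse, the functional $\ip{\dd\varphi_V}{\,\cdot\,}$ does not even descend to $H^2_{\mathrm{sc},\dd}$, since on an exact form $\dd\mu$ it evaluates to $\ip{\square_{(1)}\varphi_V}{\mu}$, which need not vanish. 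Moreover your intermediate target $[G_{(2)}(\alpha^{\bbR})]=0$ in $H^2_{\mathrm{sc},\dd}$ is not clearly equivalent to the statement you need ($\alpha^{\bbR}\in\dd\Omega^1_\mathrm{tc}$ gives $G_{(2)}(\alpha^{\bbR})=\dd G_{(1)}(\gamma)$ with a potential that is in general \emph{not} spacelike compact). You have moved the differential to the wrong side: the duality lives in degree $(1,m-1)$, not $(2,m-2)$. Keep the expression as $\ip{\varphi_V}{\delta G_{(2)}(\alpha^{\bbR})}=\ip{\varphi_V}{G_{(1)}(\delta\alpha^{\bbR})}$ and note that $G_{(1)}(\delta\alpha^{\bbR})$ is a \emph{closed one-form} (since $\dd\delta G_{(2)}(\alpha^{\bbR})=(\square_{(2)}-\delta\dd)G_{(2)}(\alpha^{\bbR})=0$ by closedness of $\alpha^{\bbR}$). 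Annihilation by all $\varphi_V\in\Omega^1_{0,\delta}(M,\bbR)^{\oplus_l}$ then says exactly that its class in $H^1_\mathrm{dR}(M,\bbR)^{\oplus_l}$ pairs to zero with all of $H^1_{0\,\mathrm{dR}^\ast}$, so by ordinary Poincar{\'e} duality $G_{(1)}(\delta\alpha^{\bbR})=\dd\gamma$ for a \emph{function} $\gamma$. This is where the argument becomes tractable: $\square_{(0)}\gamma=\delta\dd\gamma=G_{(0)}(\delta\delta\alpha^{\bbR})=0$, so $\gamma=G_{(0)}(\theta)$ with $\theta$ timelike compact, whence $G_{(1)}(\delta\alpha^{\bbR}-\dd\theta)=0$, i.e.~$\delta\alpha^{\bbR}=\dd\theta+\square_{(1)}(\zeta)$ with $\zeta\in\Omega^1_\mathrm{tc}$; applying $\dd$ and using $\dd\alpha^{\bbR}=0$ gives $\square_{(2)}(\alpha^{\bbR}-\dd\zeta)=0$ and hence $\alpha^{\bbR}=\dd\zeta$. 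Your degree-$2$ shortcut skips precisely the passage through the scalar wave equation that makes the surjectivity-onto-$\ker\square$ step available.
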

\begin{proof}
Let $[\psi]$ be an element of the vector space on the right hand side of (\ref{eqn:radicalrefined}). Any representative $\psi$ is such that
$h^{-1}(\psi_V) = \delta \eta + \delta \dd \zeta$ for some $\eta \in \Omega^2_{0,\dd}(M,i\bbR)^{\oplus_k}$
and $\zeta\in \Omega^1_\mathrm{tc}(M,\bbR)^{\oplus_l}$.
By Theorem \ref{theo:gaugeinvspace} any $\varphi\in \EE^\mathrm{inv}$ is such that
$\varphi_V = \delta \alpha + \beta$ for some $\alpha\in \Omega^2_0(M,i\bbR)^{\oplus_k}$ and
$\beta\in \Omega^1_{0,\delta}(M,\bbR)^{\oplus_l}$. As a consequence,
\begin{flalign}
\nn \tau([\varphi],[\psi]) &= \ip{\varphi_V}{G_{(1)}\big(h^{-1}(\psi_V)\big)}
=\ip{\delta\alpha}{G_{(1)}(\delta\eta)} + \ip{\beta}{G_{(1)}(\delta\dd\zeta)}\\
&=\ip{\alpha}{\dd\delta G_{(2)}(\eta)} + \ip{\beta}{\delta\dd G_{(1)}(\zeta)}=
-\ip{\alpha}{\delta\dd G_{(2)}(\eta)} - \ip{\beta}{\dd\delta G_{(1)}(\zeta)}=0~,
\end{flalign}
hence the vector space on the right hand side of (\ref{eqn:radicalrefined}) is contained in the radical $\mathcal{N}$.
To show that it is equal to the radical let $\psi\in \EE^\mathrm{inv}$ be any element satisfying, for all $\varphi\in\EE^\mathrm{inv}$,
$\tau([\varphi],[\psi])=0$. Using again the decomposition $\varphi_V = \delta \alpha + \beta$ for some
$\alpha\in \Omega^2_0(M,i\bbR)^{\oplus_k}$ and $\beta\in \Omega^1_{0,\delta}(M,\bbR)^{\oplus_l}$,
as well as the decomposition $h^{-1}(\psi_V) = \delta \eta + \delta \epsilon$, where $\eta\in \Omega^2_{0,\dd}(M,i\bbR)^{\oplus_k}$
and $\epsilon\in \Omega^2_{0,\dd}(M,\bbR)^{\oplus_l}$ (which is possible due to Corollary \ref{cor:radicalinclusions}),
this condition yields
\begin{flalign}\label{eqn:tmpdualityradical}
0=\tau([\varphi],[\psi]) =
 \ip{\delta\alpha}{G_{(1)}(\delta\eta)} + \ip{\beta}{G_{(1)}(\delta\epsilon)} = \ip{\beta}{G_{(1)}(\delta\epsilon)}~.
\end{flalign}
By (\ref{eqn:tmpdualityradical}) and Poincar{\'e} duality
there exists a $\gamma\in C^\infty(M,\bbR)^{\oplus_l}$, such that $G_{(1)}(\delta\epsilon) = \dd \gamma$.
Applying the codifferential to this equation we find that $\gamma$ satisfies
 the wave equation $\delta\dd\gamma = \square_{(0)}(\gamma) =0$, hence by \cite{SDH12}  there exists a
 $\theta \in C^\infty_\mathrm{tc}(M,\bbR)^{\oplus_l}$ such that $\gamma = G_{(0)}(\theta) $.
 Plugging this into the equation above yields $G_{(1)}(\delta\epsilon) = \dd \gamma = G_{(1)}(\dd\theta)$, which implies
 $\delta\epsilon = \dd\theta + \square_{(1)}(\zeta)$ for some $\zeta\in \Omega^1_\mathrm{tc}(M,\bbR)^{\oplus_l}$.
 Applying $\dd$ and using that $\epsilon$ is closed we obtain $\epsilon = \dd\zeta$, which shows that any element in the radical is 
 contained in the vector space on the right hand side of (\ref{eqn:radicalrefined}).
\end{proof}


\section{\label{sec:phasespacefunctor}The phase space functor and $\CCR$-quantization}
In this section we show that the association of the presymplectic vector space $(\EE,\tau)$
in Proposition \ref{propo:presymp} to objects $\Xi=\big((M,\o,g,\t),(P,r)\big)$ in $G{-}\PrBu$ is functorial.
We are going to construct a covariant functor $\PhaseSpace : G{-}\PrBu \to \PreSymp$, where 
the latter category is that of presymplectic vector spaces with compatible morphisms, that are however not
assumed to be injective (see the definition below). We will then derive some important properties of the functor.
\begin{defi}
The category $\PreSymp$ consists of the following objects and morphisms:
\begin{itemize}
\item An object is a tuple $(\EE,\tau)$, where $\EE$ is a (possibly infinite dimensional)
vector space over $\bbR$ and $\tau: \EE\times\EE\to \bbR$ is an antisymmetric bilinear map (a presymplectic structure).
\item A morphism is a linear map $L:\EE_1\to \EE_2$ (not necessarily injective), which preserves the presymplectic structures,
i.e.~$\tau_2(L(v),L(w)) = \tau_1(v,w)$, for all $v,w\in \EE_1$.
\end{itemize}
\end{defi}
\sk

Before constructing the phase space functor $\PhaseSpace$ we spell out two lemmas characterizing
the compatibility of Maxwell's affine differential operator $\MW$, the Hodge-d'Alembert operators $\square_{(k)}$ 
and their Green's operators $G_{(k)}^\pm$ with morphisms in $G{-}\PrBu$.
\begin{lem}\label{lem:MWcomp}
Let $G$ be an Abelian Lie group and let  $f:\Xi_1\to \Xi_2$ be a morphism in $G{-}\PrBu$. 
Then the following diagram commutes
\begin{flalign}\label{eqn:MWcomp}
\xymatrix{
\ar[d]_-{f^\ast}\sect{M_2}{\mathcal{C}(\Xi_2)} \ar[rr]^-{\MW_2} & & \Omega^1(M_2,\g)\ar[d]^-{\underline{f}^\ast}\\
\sect{M_1}{\mathcal{C}(\Xi_1)} \ar[rr]^-{\MW_1} & & \Omega^1(M_1,\g)\\
}
\end{flalign}
where $f^\ast := \Gamma^\infty(\mathcal{C}(f))$ is defined in (\ref{eqn:connectionmorph}) and
$\underline{f}^\ast$ is the usual pull-back along the induced map $\underline{f}:M_1\to M_2$.
More abstractly, this entails that the Maxwell operator 
$\MW$ defines a natural transformation $\Gamma^\infty \circ \mathcal{C}\Rightarrow
\Omega^1_\mathrm{base}$.
\end{lem}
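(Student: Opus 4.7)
The plan is to reduce the claim to two ingredients that are already essentially available: naturality of the curvature $\underline{\mathcal{F}}$ (stated in the paragraph following Proposition \ref{prop:affineiso}) and compatibility of the codifferential with the pull-back along morphisms in $G{-}\PrBu$. Since $\MW_i = \delta_i \circ \underline{\mathcal{F}}_i$ by definition, the diagram (\ref{eqn:MWcomp}) will factor through an intermediate square involving $\Omega^2_{\mathrm{base}}$.

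First, I would invoke the naturality of the curvature as a natural transformation $\underline{\mathcal{F}}: \Gamma^\infty\circ \mathcal{C}\Rightarrow \Omega^2_\mathrm{base}$, which immediately yields, for every $\lambda \in \sect{M_2}{\mathcal{C}(\Xi_2)}$,
\begin{equation*}
\underline{\mathcal{F}}_1\big(f^\ast(\lambda)\big) = \underline{f}^\ast\big(\underline{\mathcal{F}}_2(\lambda)\big).
\end{equation*}
This is the upper half of the desired diagram, rewritten at the level of curvatures.

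Second, I would verify that the codifferential commutes with $\underline{f}^\ast$, i.e.\ $\delta_1 \circ \underline{f}^\ast = \underline{f}^\ast \circ \delta_2$ on $\Omega^2(M_2,\g)$. By the definition of morphisms in $G{-}\PrBu$, the map $\underline{f}:M_1\to M_2$ is an isometric embedding that preserves both the orientation and the time-orientation, and $\underline{f}[M_1]$ is open in $M_2$; in particular $\underline{f}$ restricts to an orientation-preserving isometric diffeomorphism onto its image. Consequently the Hodge operator and the volume form of $M_1$ coincide with the pull-back of those of $M_2$, and the identity $\delta = (-1)^{\bullet}\ast\dd\ast$ together with naturality of $\dd$ under pull-backs gives the claimed commutation. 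This is a local statement, so the fact that $\underline{f}$ is not surjective poses no obstacle.

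Composing these two commutation relations yields
\begin{equation*}
\MW_1\big(f^\ast(\lambda)\big) = \delta_1\underline{\mathcal{F}}_1\big(f^\ast(\lambda)\big) = \delta_1\underline{f}^\ast\big(\underline{\mathcal{F}}_2(\lambda)\big) = \underline{f}^\ast\delta_2\underline{\mathcal{F}}_2(\lambda) = \underline{f}^\ast\big(\MW_2(\lambda)\big),
\end{equation*}
which is exactly the commutativity of (\ref{eqn:MWcomp}). The reformulation as a natural transformation $\Gamma^\infty\circ\mathcal{C}\Rightarrow \Omega^1_\mathrm{base}$ then follows by observing that $\Omega^1_\mathrm{base}$ is functorial under the pull-back $\underline{f}^\ast$, exactly as $\Omega^2_\mathrm{base}$ was. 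The main, and really only, point requiring care is the behaviour of $\delta$ under pull-backs; once the metric/orientation preserving nature of $\underline{f}$ is exploited, nothing else is at stake.
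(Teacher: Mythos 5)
Your proposal is correct and follows exactly the paper's own argument: naturality of the curvature $\underline{\mathcal{F}}$ combined with the commutation $\delta_1\circ\underline{f}^\ast = \underline{f}^\ast\circ\delta_2$, which holds because $\underline{f}$ is an orientation-preserving isometric embedding. Your additional justification of the codifferential step via the Hodge operator is a welcome elaboration of a point the paper states without proof, but the route is the same.
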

\begin{proof}
In Lemma \ref{lem:curvaffineop} 
and the text below we have observed that the curvature maps $\underline{\mathcal{F}}$ can be regarded
as a natural transformation $\Gamma^\infty \circ \mathcal{C}\Rightarrow
\Omega^2_\mathrm{base}$. Explicitly, we have that $\underline{\mathcal{F}}_1 \circ f^\ast
 = \underline{f}^\ast\circ \underline{\mathcal{F}}_2$.
Furthermore, using that by hypothesis $\underline{f}:M_1\to M_2$ is an isometric and orientation preserving embedding,
we obtain for the codifferentials $\delta_1 \circ\underline{f}^\ast = \underline{f}^\ast \circ \delta_2$.
This implies that $\MW_1 \circ f^\ast = \underline{f}^\ast \circ \MW_2$ and 
shows the commutativity of the diagram (\ref{eqn:MWcomp}).
\end{proof}
\begin{lem}\label{lem:HdAcomp}
Let $G$ be an Abelian Lie group and let $f:\Xi_1\to \Xi_2$ be a morphism in $G{-}\PrBu$. 
 \begin{itemize}
 \item[a)] The following diagram commutes for all $k$:
\begin{flalign}\label{eqn:HdAcomp}
\xymatrix{
\ar[d]_-{\underline{f}^\ast}\Omega^k(M_2,\g^\ast)\ar[rr]^-{\square_{2\,(k)}} & & \Omega^k(M_2,\g^\ast)\ar[d]^-{\underline{f}^\ast}\\
\Omega^k(M_1,\g^\ast)\ar[rr]^-{\square_{1\,(k)}} & & \Omega^k(M_1,\g^\ast)\\
}
\end{flalign}
More abstractly, this entails that the d'Alembert operators $\square_{(k)}$ 
define natural transformations $\Omega^k_\mathrm{base} \Rightarrow \Omega^k_\mathrm{base}$.
\item[b)] The Green's operators satisfy $G_{1\,(k)}^\pm = 
\underline{f}^\ast \circ G_{2\,(k)}^{\pm}\circ \underline{f}_\ast $, where
$\underline{f}_\ast $ denotes the push-forward of compactly supported forms along $\underline{f}:M_1\to M_2$.
\end{itemize}
\end{lem}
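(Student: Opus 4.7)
The plan is to reduce both parts to structural properties of the morphisms in $G{-}\PrBu$, namely that $\underline{f}:M_1\to M_2$ is an isometric, orientation and time-orientation preserving embedding with open, causally compatible image. Part (a) will follow from naturality of $\dd$ together with compatibility of the Hodge codifferential with such embeddings, and part (b) will follow from part (a) combined with the uniqueness of retarded/advanced Green's operators for normally hyperbolic operators on globally hyperbolic spacetimes.

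For part (a), I would first use that $\underline{f}^\ast\circ \dd=\dd\circ \underline{f}^\ast$ holds for $\g^\ast$-valued forms of any degree for any smooth map. Since $\underline{f}[M_1]\subseteq M_2$ is open, the two manifolds have the same dimension, and since $\underline{f}$ is an isometric, orientation-preserving embedding the Hodge star operators satisfy $\underline{f}^\ast\circ \ast_2=\ast_1\circ \underline{f}^\ast$. Writing $\delta=\pm\ast\,\dd\,\ast$ with signs depending only on $k$ and on the common dimension, one deduces $\underline{f}^\ast\circ \delta_2=\delta_1\circ \underline{f}^\ast$, and these two identities combine to give the commutativity of diagram (\ref{eqn:HdAcomp}) for each $k$.

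For part (b), the strategy is to consider the candidate operator $\widetilde{G}^\pm_{(k)} := \underline{f}^\ast\circ G^\pm_{2\,(k)}\circ \underline{f}_\ast$ on $\Omega^k_0(M_1,\g^\ast)$, where $\underline{f}_\ast$ denotes extension by zero, which is well-defined because $\underline{f}[M_1]$ is open in $M_2$. Using part (a) (and its analogue $\square_{2\,(k)}\circ \underline{f}_\ast=\underline{f}_\ast\circ \square_{1\,(k)}$ for push-forward along the open embedding), together with the Green's operator identities for $G^\pm_{2\,(k)}$, one checks that $\widetilde{G}^\pm_{(k)}$ is a two-sided inverse of $\square_{1\,(k)}$ on $\Omega^k_0(M_1,\g^\ast)$. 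The main obstacle is the support property $\supp(\widetilde{G}^\pm_{(k)}\eta)\subseteq J^\pm_{M_1}(\supp\eta)$: from the corresponding property of $G^\pm_{2\,(k)}$ one has $\supp(G^\pm_{2\,(k)}\underline{f}_\ast\eta)\subseteq J^\pm_{M_2}(\underline{f}[\supp\eta])$, and causal compatibility of $\underline{f}[M_1]\subseteq M_2$ identifies $J^\pm_{M_2}(\underline{f}[\supp\eta])\cap \underline{f}[M_1]$ with $\underline{f}[J^\pm_{M_1}(\supp\eta)]$, yielding the required support control after pullback along $\underline{f}$. By uniqueness of retarded/advanced Green's operators on the globally hyperbolic spacetime $M_1$, this forces $\widetilde{G}^\pm_{(k)}=G^\pm_{1\,(k)}$, which is the claim.
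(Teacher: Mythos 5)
Your proposal is correct and follows essentially the same route as the paper: part (a) via naturality of $\dd$ and $\delta$ (you merely unpack why $\delta$ commutes with $\underline{f}^\ast$ using the Hodge star, which the paper asserts as naturality), and part (b) via the candidate $\underline{f}^\ast\circ G^\pm_{2\,(k)}\circ\underline{f}_\ast$, the Green's operator identities, the support estimate from causal compatibility, and uniqueness of retarded/advanced Green's operators.
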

\begin{proof}
The commutative diagram (\ref{eqn:HdAcomp}) is a consequence both of 
$\square_{(k)} = \delta\circ \dd + \dd\circ \delta$ and of the fact that $\dd$ 
and $\delta$ are natural transformations, i.e.~$\dd:\Omega^k_\mathrm{base} \Rightarrow \Omega^{k+1}_\mathrm{base}$
and $\delta : \Omega^k_\mathrm{base}\Rightarrow \Omega^{k-1}_\mathrm{base}$.
\sk

To prove b) first notice that
$\underline{f}_\ast\big(\underline{f}^\ast(\eta)\big) =\eta$, for all
$\eta\in \Omega^k_0(\underline{f}[M_1],\g^\ast)\subseteq \Omega^k_0(M_2,\g^\ast)$, and that 
$\underline{f}^\ast\big(\underline{f}_\ast(\eta)\big) = \eta$, for all
$\eta\in \Omega^k_0(M_1,\g^\ast)$. 
Let us define $\widetilde{G}_{1\,(k)}^\pm := 
\underline{f}^\ast  \circ G_{2\,(k)}^{\pm}\circ \underline{f}_\ast$.
We show that $\widetilde{G}^\pm_{1\,(k)}$ are retarded/advanced Green's operators for $\square_{1\,(k)}$
and thus by uniqueness it follows the claim $\widetilde{G}^\pm_{1\,(k)} = G^\pm_{1\,(k)}$.
Due to the diagram (\ref{eqn:HdAcomp}) and the above properties of $\underline{f}_\ast$ and $\underline{f}^\ast$
we obtain 
\begin{subequations}
\begin{flalign}
 \square_{1\,(k)} \circ \widetilde{G}^\pm_{1\,(k)} = 
\square_{1\,(k)}\circ  \underline{f}^\ast \circ G_{2\,(k)}^{\pm}\circ \underline{f}_\ast 
=
\underline{f}^\ast \circ \square_{2\,(k)} \circ G_{2\,(k)}^{\pm}\circ \underline{f}_\ast  = \id_{\Omega_0^k(M_1,\g^\ast)}
\end{flalign}
and on $\Omega_0^k(M_1,\g^\ast)$
\begin{flalign}
\widetilde{G}^\pm_{1\,(k)} \circ \square_{1\,(k)} = \underline{f}^\ast \circ G_{2\,(k)}^{\pm}\circ \underline{f}_\ast \circ  \square_{1\,(k)} 
 =
 \underline{f}^\ast  \circ G_{2\,(k)}^{\pm}\circ \square_{2\,(k)} \circ \underline{f}_\ast = \id_{\Omega_0^k(M_1,\g^\ast)}~.
\end{flalign}
\end{subequations}
Thus, $\widetilde{G}^\pm_{1\,(k)}$ are Green's operators for $\square_{1\,(k)}$. They are retarded/advanced Green's operators,
since for all $\eta\in\Omega_0^k(M_1,\g^\ast)$,
\begin{flalign}
\supp\big(\widetilde{G}_{1\,(k)}^\pm(\eta)\big) 
\subseteq \underline{f}^{-1}\big[J_{M_2}^\pm\big(\underline{f}[\supp(\eta)]\big)\big]= J_{M_1}^\pm\big(\supp(\eta)\big)~,
\end{flalign}
where in the second step we have used that $\underline{f}[M_1]\subseteq M_2$ is by hypothesis causally compatible.
\end{proof}

\begin{defi}\label{defi:dualmap}
Let $G$ be an Abelian Lie group and let $f:\Xi_1\to \Xi_2$ be a morphism in $G{-}\PrBu$.
We define the linear map $f_\ast:\EE^\mathrm{kin}_1\to \EE^\mathrm{kin}_2$ by, 
for all $\varphi \in \EE^\mathrm{kin}_1$
and $\lambda \in \sect{M_2}{\mathcal{C}(\Xi_2)}$,
\begin{flalign}
\int_{M_2} \vol_2\,\big(f_\ast(\varphi)\big)(\lambda) = \int_{M_1}\vol_1\,\varphi\big(f^\ast(\lambda)\big)~.
\end{flalign}
\end{defi}
\sk

\begin{theo}\label{theo:phasespacefunc}
Let $G$ be an Abelian Lie group and $h$ a bi-invariant pseudo-Riemannian metric on $G$.
Then there exists a covariant functor $\PhaseSpace: G{-}\PrBu \to \PreSymp$. It associates to any object
$\Xi$ the presymplectic vector space $\PhaseSpace(\Xi) = (\EE,\tau)$
which has been constructed in Proposition \ref{propo:presymp}.
To a morphism $f:\Xi_1\to \Xi_2$ the functor associates the morphism in $\PreSymp$ given by
 \begin{flalign}\label{eqn:phasespacefunc}
 \PhaseSpace(f): \PhaseSpace(\Xi_1)\to \PhaseSpace(\Xi_2)~,~~[\varphi] \mapsto [f_\ast(\varphi)]~,
 \end{flalign}
 where the linear map $f_\ast$ is given in Definition \ref{defi:dualmap}.
\end{theo}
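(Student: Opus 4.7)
The plan is to verify the four conditions needed for $\PhaseSpace$ to be a well-defined functor into $\PreSymp$: first, that $f_\ast$ sends $\EE_1^\mathrm{inv}$ into $\EE_2^\mathrm{inv}$; second, that $f_\ast$ sends $\MW_1^\ast\big[\Omega^1_0(M_1,\g^\ast)\big]$ into $\MW_2^\ast\big[\Omega^1_0(M_2,\g^\ast)\big]$ so that the induced map on $\EE_1$ is well-defined; third, that the induced map preserves the presymplectic structures; and fourth, functoriality $(g\circ f)_\ast = g_\ast\circ f_\ast$ together with $(\id_\Xi)_\ast = \id$. The crucial preliminary step is to identify the linear part of $f_\ast(\varphi)$ as $\underline{f}_\ast(\varphi_V)$, viewed as a compactly supported $\g^\ast$-valued one-form on $M_2$; this follows by inserting a perturbation $\lambda\mapsto\lambda+\eta$ with $\eta\in\Omega^1(M_2,\g)$ in Definition \ref{defi:dualmap}, using that $f^\ast$ is affine with linear part given by $\underline{f}^\ast$, and then invoking the standard adjunction $\ip{\underline{f}_\ast(\alpha)}{\beta}_2 = \ip{\alpha}{\underline{f}^\ast(\beta)}_1$ for compactly supported $\alpha$.

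With this identification, gauge invariance is immediate: for any $\widehat{g}\in C^\infty(M_2,G)$, the map $\widehat{g}\circ\underline{f}\in C^\infty(M_1,G)$ is an element of the gauge group of $\Xi_1$, and hence
\begin{flalign*}
\ip{(f_\ast\varphi)_V}{\widehat{g}^\ast(\mu_G)}_2 = \ip{\varphi_V}{(\widehat{g}\circ\underline{f})^\ast(\mu_G)}_1 = 0
\end{flalign*}
by the characterization (\ref{eqn:gaugeinv}) of $\EE_1^\mathrm{inv}$. To see that $\MW_1^\ast$-images are mapped into $\MW_2^\ast$-images, one applies Definition \ref{defi:dualmap} to $\varphi = \MW_1^\ast(\eta)$ with $\eta\in\Omega^1_0(M_1,\g^\ast)$ and uses the commutative diagram (\ref{eqn:MWcomp}) of Lemma \ref{lem:MWcomp} to rewrite $\MW_1\circ f^\ast = \underline{f}^\ast\circ\MW_2$; the adjunction then yields the explicit identity $f_\ast\circ\MW_1^\ast = \MW_2^\ast\circ\underline{f}_\ast$, which in particular shows the required inclusion.

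For the presymplectic compatibility, I would compute directly, using the push-forward formula for the linear parts together with the transformation rule $G_{1\,(1)} = \underline{f}^\ast\circ G_{2\,(1)}\circ\underline{f}_\ast$ from Lemma \ref{lem:HdAcomp}(b) and the same adjunction between $\underline{f}_\ast$ and $\underline{f}^\ast$, so that
\begin{flalign*}
\tau_2([f_\ast\varphi],[f_\ast\psi]) = \ip{\underline{f}_\ast(\varphi_V)}{G_{2\,(1)}(\underline{f}_\ast(\psi_V))}_{h} = \ip{\varphi_V}{G_{1\,(1)}(\psi_V)}_{h} = \tau_1([\varphi],[\psi])~.
\end{flalign*}
Functoriality is then a routine consequence of the contravariance of pull-back of sections of the bundle of connections: applying Definition \ref{defi:dualmap} twice gives $(g\circ f)_\ast = g_\ast\circ f_\ast$, and the identity morphism is sent to the identity. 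The only potentially delicate point, and hence what I expect to be the main obstacle, is bookkeeping around the quotient by $\mathrm{Triv}$ in the definition of $\EE^\mathrm{kin}$: one must check that if two representatives of $\varphi$ differ by an element of $\mathrm{Triv}_1$, then their $f_\ast$-images differ by an element of $\mathrm{Triv}_2$, which reduces to the observation that $\underline{f}$ is an isometric embedding and therefore $\underline{f}_\ast$ preserves the vanishing-integral condition characterizing $\mathrm{Triv}$.
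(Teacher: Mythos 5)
Your proposal is correct and follows essentially the same route as the paper's proof: it identifies the linear part of $f_\ast(\varphi)$ as $\underline{f}_\ast(\varphi_V)$, verifies gauge invariance via $(\widehat{g}\circ\underline{f})^\ast(\mu_G)$, establishes $f_\ast\circ\MW_1^\ast = \MW_2^\ast\circ\underline{f}_\ast$ through Lemma \ref{lem:MWcomp}, and preserves $\tau$ using Lemma \ref{lem:HdAcomp}(b). The additional remarks on $(g\circ f)_\ast = g_\ast\circ f_\ast$ and on $f_\ast$ mapping $\mathrm{Triv}_1$ into $\mathrm{Triv}_2$ are correct routine checks that the paper leaves implicit.
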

\begin{proof}
First, we show that $f_\ast$ maps $\EE^\mathrm{inv}_1$ to $\EE^\mathrm{inv}_2$.
For any $\varphi\in\EE^\mathrm{inv}_1$ the linear part satisfies
$f_\ast(\varphi)_V = \underline{f}_\ast(\varphi_V)$
and hence, for all $\widehat{g}\in C^\infty(M_2,G) $,
\begin{flalign}
\ip{f_\ast(\varphi)_V}{\widehat{g}^\ast(\mu_{G})}_2 = 
\ip{\varphi_V}{\underline{f}^\ast\big(\widehat{g}^\ast(\mu_{G})\big)}_1
=\ip{\varphi_V}{(\widehat{g}\circ\underline{f})^\ast(\mu_{G})\big)}_1
=0~,
\end{flalign}
which implies that $f_\ast(\varphi)\in \EE^\mathrm{inv}_2$.
\sk

Next, we prove that (\ref{eqn:phasespacefunc}) is well-defined, that is, for
all $\eta\in \Omega_0^1(M_1,\g^\ast)$ we have 
$f_\ast\big(\MW^\ast_1(\eta)\big) \in \MW_2^\ast\big[\Omega_0^1(M_2,\g^\ast)\big]$.
This property is a consequence of the following short calculation, for all $\lambda\in \sect{M_2}{\mathcal{C}(\Xi_2)}$,
\begin{flalign}
\nn \int_{M_2} \vol_2\,\Big(f_\ast\big(\MW^\ast_1(\eta)\big) \Big)(\lambda) & 
=\ip{\eta}{\MW_1\big(f^\ast(\lambda)\big)}_1 =\ip{\eta}{\underline{f}^\ast\big(\MW_2(\lambda)\big)}_1 \\
&= \ip{\underline{f}_\ast(\eta)}{\MW_2(\lambda)}_2 =\int_{M_2} \vol_2\,\Big(\MW_2^\ast\big(\underline{f}_\ast(\eta)\big)\Big)(\lambda)~,
\end{flalign}
where in the second equality we have used Lemma \ref{lem:MWcomp}.
\sk

It remains to be shown that the linear map $\PhaseSpace(f)$ in (\ref{eqn:phasespacefunc}) 
preserves the presymplectic structures.
Let us take two arbitrary $[\varphi],[\psi]\in \EE_1$. Then
\begin{flalign}
\tau_{2}\big([f_\ast(\varphi)],[f_\ast(\psi)]\big) = \ip{f_\ast(\varphi)_V}{ G_{2\,(1)}\big(f_\ast(\psi)_V\big)}_{2,\,h}~.
\end{flalign}
Using again that $f_\ast(\varphi)_V = \underline{f}_\ast (\varphi_V)$
(and similar for $\psi$) yields
\begin{flalign}
\nn \tau_{2}\big([f_\ast(\varphi)],[f_\ast(\psi)]\big) &= 
\ip{\underline{f}_\ast(\varphi_V)}{G_{2\,(1)}\big(\underline{f}_\ast(\psi_V)\big)}_{2,\,h}
=\ip{\varphi_V}{\big(\underline{f}^\ast\circ G_{2\,(1)}\circ \underline{f}_\ast \big)(\psi_V)}_{1,\,h}\\
&=\ip{\varphi_V}{G_{1\,(1)}(\psi_V)}_{1,\,h} = \tau_{1}([\varphi],[\psi])~.
\end{flalign}
In the third equality  we have used Lemma \ref{lem:HdAcomp} b).
\end{proof}
\begin{rem}\label{rem:noninjective}
The covariant functor $\PhaseSpace : G{-}\PrBu \to \PreSymp$ does not satisfy the locality property
stating that for any morphism $f:\Xi_1\to \Xi_2$ in $G{-}\PrBu$ the morphism 
$\PhaseSpace(f)$ is injective. We will show this failure first by using the simplest example $G=U(1)\simeq \mathbb{T}$
 and we refer to Section \ref{sec:chargezerofunctor} for a possible solution of this problem. 
Let $\Xi_2$ be an object in $U(1){-}\PrBu$ such that $(M_2,\o_2,g_2,\t_2)$ is the 
$m$-dimensional Minkowski spacetime ($m\geq 4$).
Let us denote by  $\Xi_1$ the object in $U(1){-}\PrBu$ that is obtained by restricting
all data of $\Xi_2$ to the causally compatible and globally hyperbolic open subset $M_1:= M_2\setminus J_{M_2}(\{0\})$,
where $\{0\}$ is the set of a single point in Minkowski spacetime (cf.~\cite[Lemma A.5.11]{Bar:2007zz}).
Notice that $M_1$ is diffeomorphic to $\bbR^{2}\times S^{m-2}$, where $S^{m-2}$ is the ${m{-}2}$-sphere.
The canonical embedding (via the identity) $f:\Xi_1\to \Xi_2$ is a morphism in $U(1){-}\PrBu$.
Let us take any nonexact element $\eta\in \Omega^2_{0,\dd}(M_1,\g^\ast)$, which exists,
since per Poincar{\'e} duality $H^{m-2}_\mathrm{dR}(M_1,\g) \simeq H^2_{0\,\mathrm{dR}}(M_1,\g^\ast)$
and $H^{m-2}_\mathrm{dR}(M_1,\g)\simeq \g\simeq i\,\bbR$ since $M_1$ is homotopy equivalent to $S^{m-2}$.
Applying the formal adjoint of the curvature affine differential operator we obtain a nontrivial element
$\big[\underline{\mathcal{F}_1}^\ast(\eta)\big] \in \PhaseSpace(\Xi_1)$ (this
element is contained in the radical $\mathcal{N}_1$, cf.~Theorem \ref{theo:radicalexplicit}).
Under the morphism $\PhaseSpace(f)$ we obtain
\begin{flalign}
\PhaseSpace(f)\big(\big[\underline{\mathcal{F}_1}^\ast(\eta)\big]\big) &=
\big[f_\ast\big(\underline{\mathcal{F}_1}^\ast(\eta)\big)\big] = \big[\underline{\mathcal{F}_2}^\ast\big(\underline{f}_\ast(\eta)\big)\big] =\big[\underline{\mathcal{F}_2}^\ast(\dd\xi)\big] = \big[\MW_2^\ast(\xi)\big]=0~.
\end{flalign}
In the third equality  we have used that $\underline{f}_\ast(\eta) 
\in \Omega^2_{0,\dd}(M_2,\g^\ast)$
is exact since $M_2$ is the Minkowski spacetime. By Remark \ref{rem:Nmin} the same conclusion holds true
for $G=\bbR$ and hence for generic $G\simeq \mathbb{T}^k\times\bbR^l$.
\end{rem}

\begin{theo}\label{theo:classcausal}
The covariant functor $\PhaseSpace: G{-}\PrBu\to\PreSymp$ satisfies the classical causality property:\sk

Let $f_i:\Xi_i\to \Xi_3$, $i=1,2$, be two morphisms in $G{-}\PrBu$, such that
$\underline{f_1}[M_1]$ and $\underline{f_2}[M_2]$ are causally disjoint in $M_3$.
Then $\tau_{3}$ acts trivially among the vector subspaces
$\PhaseSpace(f_1)\big[\PhaseSpace(\Xi_1)\big]$ and 
$\PhaseSpace(f_2)\big[\PhaseSpace(\Xi_2)\big]$ of  
$\PhaseSpace(\Xi_3)$. That is, for all $[\varphi]\in \PhaseSpace(\Xi_1)$ and $[\psi]\in \PhaseSpace(\Xi_2)$,
\begin{flalign}
\tau_{3}\big(\PhaseSpace(f_1)([\varphi]),\PhaseSpace(f_2)([\psi])\big) =0~.
\end{flalign}
\end{theo}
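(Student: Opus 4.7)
The plan is to unpack the definition of $\tau_3\big(\PhaseSpace(f_1)([\varphi]),\PhaseSpace(f_2)([\psi])\big)$ step by step and then carry out a support analysis that reduces the statement to the standard fact that causal propagators preserve causal convex hulls.

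First I would use the formula \eqref{eqn:phasespacefunc} together with the observation from the proof of Theorem \ref{theo:phasespacefunc} that the linear part of $f_{i\,\ast}(\varphi)$ is simply $\underline{f_i}_\ast(\varphi_V)$. Combined with the presymplectic structure \eqref{eqn:presymp}, this yields
\begin{flalign*}
\tau_3\big(\PhaseSpace(f_1)([\varphi]),\PhaseSpace(f_2)([\psi])\big)
= \ip{\underline{f_1}_\ast(\varphi_V)}{G_{3\,(1)}\big(\underline{f_2}_\ast(\psi_V)\big)}_{3,\,h}\,,
\end{flalign*}
which is an integral over $M_3$ of a wedge product of forms.

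Next I would carry out the support analysis. Since $\varphi_V\in\Omega^1_0(M_1,\g^\ast)$, the push-forward $\underline{f_1}_\ast(\varphi_V)$ has compact support contained in $\underline{f_1}[M_1]\subseteq M_3$. On the other hand, because $G_{3\,(1)}=G_{3\,(1)}^+-G_{3\,(1)}^-$ is the causal propagator of the Hodge-d'Alembert operator on $M_3$, the support of $G_{3\,(1)}\big(\underline{f_2}_\ast(\psi_V)\big)$ is contained in $J_{M_3}\big(\underline{f_2}[\supp(\psi_V)]\big)\subseteq J_{M_3}\big(\underline{f_2}[M_2]\big)$. By the hypothesis that $\underline{f_1}[M_1]$ and $\underline{f_2}[M_2]$ are causally disjoint in $M_3$, the two supports have empty intersection, so the integrand vanishes identically and the pairing is zero.

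The argument does not require any new ingredient beyond what is already on the table, so I do not expect a serious obstacle; the only mildly delicate point is to confirm that $f_{i\,\ast}$ really acts on linear parts as the geometric push-forward $\underline{f_i}_\ast$, which is visible from Definition \ref{defi:dualmap} once one notes that the linear part of the pulled-back section $f_i^\ast(\lambda)$ depends linearly on the linear part of $\lambda$ via $\underline{f_i}^\ast$, and then one unfolds this inside the integral defining $f_{i\,\ast}$. After that the proof reduces entirely to the elementary support argument above.
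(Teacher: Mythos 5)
Your proposal is correct and coincides with the paper's own proof: both reduce the pairing to $\ip{\underline{f_1}_\ast(\varphi_V)}{G_{3\,(1)}(\underline{f_2}_\ast(\psi_V))}_{3,\,h}$ via the identity $f_{i\,\ast}(\varphi)_V=\underline{f_i}_\ast(\varphi_V)$ and then conclude by the support properties $\supp(\underline{f_1}_\ast(\varphi_V))\subseteq\underline{f_1}[M_1]$ and $\supp\big(G_{3\,(1)}(\underline{f_2}_\ast(\psi_V))\big)\subseteq J_{M_3}(\underline{f_2}[M_2])$ together with causal disjointness.
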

\begin{proof}
From (\ref{eqn:phasespacefunc}) and (\ref{eqn:presymp}) it follows that
\begin{flalign}
\tau_{3} \big(\PhaseSpace(f_1)([\varphi]),\PhaseSpace(f_2)([\psi])\big) = 
\ip{\underline{f_1}_\ast(\varphi_V)}{G_{3\,(1)}\big(\underline{f_2}_\ast(\psi_V)\big)}_{3,\,h} =0~,
\end{flalign}
since the supports $\supp\big(\underline{f_1}_\ast(\varphi_V)\big)\subseteq \underline{f_1}[M_1]$
 and $ \supp\big( G_{3\,(1)}\big(\underline{f_2}_\ast(\psi_V)\big)\big)\subseteq
  J_{M_3}(\underline{f_2}[M_2])$ are by hypothesis disjoint.
\end{proof}
\begin{theo}\label{theo:classtimeslice}
The covariant functor $\PhaseSpace: G{-}\PrBu\to\PreSymp$ satisfies the classical time-slice axiom:\sk

Let  $f:\Xi_1\to \Xi_2$ a morphism in $G{-}\PrBu$,
 such that $\underline{f}[M_1]\subseteq M_2$ contains a Cauchy surface of $M_2$.
Then
\begin{flalign}
\PhaseSpace(f): \PhaseSpace(\Xi_1)\to \PhaseSpace(\Xi_2)
\end{flalign}
is an isomorphism.
\end{theo}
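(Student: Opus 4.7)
The plan is to prove bijectivity of $\PhaseSpace(f)$ by the standard cutoff-function technique for Green-hyperbolic operators, combined with the naturality of $\MW$ and of the Green's operators $G^\pm_{(k)}$ of the Hodge--d'Alembertian under morphisms in $G\text{-}\PrBu$ (Lemmas \ref{lem:MWcomp} and \ref{lem:HdAcomp}). The key geometric input is that, since $\underline{f}[M_1]\subseteq M_2$ is open, causally compatible and contains a Cauchy surface of $M_2$, a standard deformation argument produces two Cauchy surfaces $\Sigma^{\pm}$ of $M_2$ lying in $\underline{f}[M_1]$, with $\Sigma^+$ in the future of $\Sigma^-$, and such that the whole strip $S:=J^+_{M_2}(\Sigma^-)\cap J^-_{M_2}(\Sigma^+)$ is contained in $\underline{f}[M_1]$. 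Fix a smooth cutoff $\chi\in C^\infty(M_2)$ with $\chi=0$ on $J^-_{M_2}(\Sigma^-)$ and $\chi=1$ on $J^+_{M_2}(\Sigma^+)$; then $\supp\,\dd\chi\subseteq S$, and for any $\omega\in\Omega^k_0(M_2,\g^\ast)$ the form $\chi\,G^-_{2\,(k)}(\omega)+(1-\chi)\,G^+_{2\,(k)}(\omega)$ is compactly supported on $M_2$ by the usual causal-compactness argument, while $[\square_{2\,(k)},\chi]\,G_{2\,(k)}(\omega)$ is compactly supported in $\underline{f}[M_1]$.

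For surjectivity, take $[\psi]\in\PhaseSpace(\Xi_2)$ with representative $\psi\in\EE^\mathrm{inv}_2$ and use Theorem \ref{theo:gaugeinvspace} to decompose its linear part as $\psi_V=\delta\alpha+\beta$ with $\alpha\in\Omega^2_0(M_2,i\bbR)^{\oplus_k}$ and $\beta\in\Omega^1_{0,\delta}(M_2,\bbR)^{\oplus_l}$. Apply the cutoff trick to $\alpha$ at degree two and to $\beta$ at degree one, producing $\alpha=\square_{2\,(2)}(w)+\zeta$ and $\beta=\square_{2\,(1)}(v)+\eta'$ with $\zeta,\eta'$ compactly supported in $S\subseteq\underline{f}[M_1]$. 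Expanding $\square=\delta\dd+\dd\delta$, exploiting $\delta\beta=0$, and recognising $-\delta\dd=\MW^\ast_V$ on $1$-forms, a short manipulation gives
\begin{flalign*}
\psi_V~\equiv~\delta\zeta+\eta'+\dd\delta v\pmod{\MW^\ast_2[\Omega^1_0(M_2,\g^\ast)]}~,
\end{flalign*}
whose right-hand side is coclosed (because $\delta\square v+\delta\eta'=\delta\beta=0$), compactly supported in $\underline{f}[M_1]$, and by construction lies in $\delta\Omega^2_0(M_2,i\bbR)^{\oplus_k}\oplus\Omega^1_{0,\delta}(M_2,\bbR)^{\oplus_l}$ (the summand $\delta\zeta$ encodes the compact-factor part, $\eta'+\dd\delta v$ the noncompact one). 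Pulling back along $\underline{f}$ thus yields the linear part of a well-defined $\varphi\in\EE^\mathrm{inv}_1$ with $\PhaseSpace(f)([\varphi])=[\psi]$; the affine-constant part is adjusted by an entirely analogous cutoff applied to the scalar $\square_{(0)}$ operator.

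For injectivity, suppose $\PhaseSpace(f)([\varphi])=0$, so that $f_\ast(\varphi)=\MW^\ast_2(\eta)$ modulo $\mathrm{Triv}$ for some $\eta\in\Omega^1_0(M_2,\g^\ast)$, equivalently $\underline{f}_\ast(\varphi_V)=-\delta\dd\eta$. Apply the cutoff trick to $\eta$: $\eta=\square_{2\,(1)}(u)+\xi$ with $u\in\Omega^1_0(M_2,\g^\ast)$ and $\xi$ compactly supported in $\underline{f}[M_1]$, so that $\xi=\underline{f}_\ast(\tilde\xi)$ for $\tilde\xi:=\underline{f}^\ast(\xi)\in\Omega^1_0(M_1,\g^\ast)$. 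The commutation $\delta\dd\square=\square\delta\dd$ gives $\underline{f}_\ast(\varphi_V)+\delta\dd\xi=-\square_{2\,(1)}(\delta\dd u)\in\square_{2\,(1)}[\Omega^1_0(M_2,\g^\ast)]$. Applying $\underline{f}^\ast\circ G_{2\,(1)}$, using Lemma \ref{lem:HdAcomp} b), the intertwining relations (\ref{useful}) and $G_{2\,(1)}\circ\square_{2\,(1)}=0$ on $\Omega^1_0(M_2,\g^\ast)$, we obtain
\begin{flalign*}
G_{1\,(1)}(\varphi_V+\delta\dd\tilde\xi)=0~.
\end{flalign*}
Since the kernel of $G_{1\,(1)}$ on $\Omega^1_0(M_1,\g^\ast)$ coincides with $\square_{1\,(1)}[\Omega^1_0(M_1,\g^\ast)]$ by the standard Green-hyperbolic exact sequence, $\varphi_V+\delta\dd\tilde\xi=\square_{1\,(1)}(\tau)$ for some $\tau\in\Omega^1_0(M_1,\g^\ast)$; coclosedness of both $\varphi_V$ (Lemma \ref{lem:coclosed}) and $\delta\dd\tilde\xi$ forces $\square_{1\,(0)}(\delta\tau)=0$, which by compact support of $\delta\tau$ and the absence of non-trivial compactly supported solutions of the wave equation yields $\delta\tau=0$. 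Hence $\square_{1\,(1)}(\tau)=\delta\dd\tau$ and $\varphi_V=\delta\dd(\tau-\tilde\xi)=\MW^\ast_1(\tilde\xi-\tau)_V$, so $[\varphi]=0$, once the affine-constant part is treated by the same mechanism.

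The principal obstacle is the bookkeeping of gauge invariance in surjectivity whenever the structure group has a compact factor ($k>0$): by Theorem \ref{theo:gaugeinvspace} the linear part of the preimage $\varphi$ on $M_1$ is required to lie in the strictly smaller subspace $\delta\Omega^2_0(M_1,i\bbR)^{\oplus_k}$ rather than in $\Omega^1_{0,\delta}(M_1,i\bbR)^{\oplus_k}$, which is what forces the cutoff to be applied to the primitive $\alpha$ (whose existence is guaranteed precisely by Theorem \ref{theo:gaugeinvspace}) instead of to $\psi_V$ directly, and that the final representative $\delta\zeta$ be exhibited in $\delta$-exact form. Once this gauge-invariance refinement is correctly tracked, the argument parallels the standard time-slice proof for the scalar or Proca field.
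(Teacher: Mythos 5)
Your proof is correct in substance but takes a genuinely different route from the paper's. The paper first reduces everything to the canonical map $\PhaseSpace(\Xi_2\vert_{\underline{f}[M_1]})\to\PhaseSpace(\Xi_2)$, using that $f:\Xi_1\to\Xi_2\vert_{\underline{f}[M_1]}$ is an isomorphism in $G{-}\PrBu$; it then proves surjectivity by a single cutoff applied directly to $\varphi_V$ (setting $\eta:=\chi^+\,G^-_{(1)}(\varphi_V)+\chi^-\,G^+_{(1)}(\varphi_V)$ and replacing $\varphi$ by $\varphi+\MW^\ast_2(\eta)$), and proves injectivity cohomologically: a kernel element lies in the radical, so by Corollary \ref{cor:radicalinclusions} its linear part is $\delta\eta$ with $[\eta]\in H^2_{0\,\mathrm{dR}}(\underline{f}[M_1],\g^\ast)$, and this class is shown to vanish using the Bernal--S{\'a}nchez homotopy equivalence between $\underline{f}[M_1]$ and $M_2$ together with Poincar{\'e} duality. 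You replace both halves by cutoff arguments: for injectivity you never invoke homotopy equivalence, instead cutting off the one-form witnessing $f_\ast(\varphi)\in\MW^\ast_2[\Omega^1_0(M_2,\g^\ast)]$ and using the exact sequence $\ker G_{(1)}\vert_{\Omega^1_0}=\square_{(1)}[\Omega^1_0]$; for surjectivity you cut off the primitive $\alpha$ and the coclosed part $\beta$ of the decomposition $\psi_V=\delta\alpha+\beta$ from Theorem \ref{theo:gaugeinvspace} separately. Your version is more explicit than the paper about why the pulled-back observable is gauge invariant on $M_1$ when $k>0$ (the $\delta$-exactness of the compact-factor piece is preserved precisely because you cut off $\alpha$ rather than $\delta\alpha$), a point the paper leaves implicit in its surjectivity step; the paper's version is shorter and makes the topological mechanism --- why the counterexample of Remark \ref{rem:noninjective} is excluded under the Cauchy-surface hypothesis --- more transparent.

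The one place where your argument is genuinely incomplete is the affine-constant part, which you dismiss in both halves. In injectivity you end with $\varphi=\MW^\ast_1(\tilde\xi-\tau)+a\,\1_1$ and must still show $\int_{M_1}\vol_1\,a=0$. Pushing forward and comparing with the hypothesis gives $\underline{f}_\ast(a)\,\1_2=\MW^\ast_2(\zeta_0)$ in $\EE^{\mathrm{kin}}_2$ for some $\zeta_0\in\Omega^1_0(M_2,\g^\ast)$ with $\delta\dd\zeta_0=0$; this is \emph{not} automatically the zero functional, since $\MW^\ast_2(\zeta_0)$ evaluates on $\lambda$ to $\ip{\dd\zeta_0}{\underline{\mathcal{F}}(\lambda)}$, which could a priori pair nontrivially with the curvature class. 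You need the additional observation that $\dd\zeta_0$ is closed, coclosed and compactly supported, hence annihilated by the normally hyperbolic operator $\square_{(2)}$ and therefore zero. Similarly, in surjectivity the constant part is not adjusted by "a cutoff applied to $\square_{(0)}$" but by subtracting an element of $\mathrm{Triv}_2$, i.e.\ replacing the constant part by any $b\in C^\infty_0(\underline{f}[M_1])$ with the same total integral, as in \cite[Theorem 5.6]{Benini:2012vi}. Both repairs are short, so with them your proof stands.
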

\begin{proof}
Let us define $\Xi_2\vert_{\underline{f}[M_1]} := \big((\underline{f}[M_1],\o_2\vert_{\underline{f}[M_1]},g_2\vert_{\underline{f}[M_1]},\t_2\vert_{\underline{f}[M_1]}),(P_2\vert_{\underline{f}[M_1]},r_2\vert_{\underline{f}[M_1]})\big)$, where
 $P_2\vert_{\underline{f}[M_1]}$ denotes the restriction of the principal $G$-bundle $(P_2,r_2)$ over $M_2$
to $\underline{f}[M_1]\subseteq M_2$. Notice that $\Xi_2\vert_{\underline{f}[M_1]}$ is an object in $G{-}\PrBu$ and
by definition of the morphisms in this category, $f: \Xi_1\to \Xi_2\vert_{\underline{f}[M_1]} $
is an isomorphism.
As a consequence of functoriality, we obtain an isomorphism in $\PreSymp$
\begin{flalign}
\PhaseSpace(f): \PhaseSpace(\Xi_1) \to \PhaseSpace(\Xi_2\vert_{\underline{f}[M_1]})~.
\end{flalign}
Hence, the proof would follow if we could show that 
the canonical map $ \PhaseSpace(\Xi_2\vert_{\underline{f}[M_1]}) \to  \PhaseSpace(\Xi_2)$, 
$[\varphi]\mapsto [\varphi]$ is an isomorphism under the hypotheses of this theorem. 
\sk

Let us first prove injectivity of the canonical map: Let $[\varphi]\in \PhaseSpace(\Xi_2\vert_{\underline{f}[M_1]})$ be 
such that when interpreted via the canonical map as an element in $\PhaseSpace(\Xi_2)$ we have $[\varphi]=0$. As a consequence,
$[\varphi] \in \PhaseSpace(\Xi_2\vert_{\underline{f}[M_1]})$ has to be in the radical $\mathcal{N}_2\vert_{\underline{f}[M_1]}$
and by Corollary \ref{cor:radicalinclusions} there exists for any representative $\varphi$ an 
$\eta\in \Omega^2_{0,\dd}(\underline{f}[M_1],\g^\ast)$ such that $\varphi_V = \delta_2\eta$.
Notice that due to the quotient in Corollary \ref{cor:radicalinclusions} the equivalence class $[\varphi]$
only depends on the cohomology class $[\eta]\in H^{2}_{0\,\mathrm{dR}}(\underline{f}[M_1],\g^\ast)$.
By a theorem of Bernal and S{\'a}nchez \cite{Bernal:2004gm} and by the hypothesis that $\underline{f}[M_1]$ 
contains a Cauchy surface of $M_2$ we have that $\underline{f}[M_1]$ and $M_2$ are homotopy equivalent 
(notice also that $\mathrm{dim}(\underline{f}[M_1]) = \mathrm{dim}(M_2)$). This entails that the inclusion 
$i:\underline{f}[M_1]\to M_2$ induces an isomorphism $i^\ast$ between the corresponding de Rham cohomology groups. 
In particular, for each $[\omega]\in H^{\mathrm{dim}(M_2)-2}_\mathrm{dR}(\underline{f}[M_1],\g)$, 
there exists $[\omega^\prime]\in H^{\mathrm{dim}(M_2)-2}_\mathrm{dR}(M_2,\g)$ 
such that $i^\ast([\omega^\prime])=[\omega]$. We use this isomorphism to show that 
$[\eta]\in H^{2}_{0\,\mathrm{dR}}(\underline{f}[M_1],\g^\ast)$ is trivial. 
This fact follows from Poincar{\'e} duality if the pairing between 
$[\eta]\in H^{2}_{0\,\mathrm{dR}}(\underline{f}[M_1],\g^\ast)$ 
and any $[\omega]\in H^{\mathrm{dim}(M_2)-2}_\mathrm{dR}(\underline{f}[M_1],\g)$ gives zero. 
Such pairing is expressed in terms of the integral $\int_{\underline{f}[M_1]}[\eta]\wedge[\omega]$. 
Using the argument above, we can replace $[\omega]$ with $i^\ast([\omega^\prime])$ in the last formula 
and compute explicitly the integral with an arbitrary choice of representatives. We obtain that
\begin{flalign}
\int_{\underline{f}[M_1]}[\eta]\wedge[\omega]=\int_{\underline{f}[M_1]}\eta\wedge i^\ast(\omega^\prime)
=\int_{M_2}i_\ast(\eta)\wedge\omega=\int_{M_2}[\eta]\wedge[\omega^\prime]=0~,
\end{flalign}
where the last $[\eta]$ is the trivial element of $H_{0\,\mathrm{dR}}^2(M_2,\g^\ast)$, 
since $[\varphi]=0$ when regarded in $\PhaseSpace(\Xi_2)$ per hypothesis. 
Thus, we can find a representative $\varphi$ of the class
$[\varphi] \in \PhaseSpace(\Xi_2\vert_{\underline{f}[M_1]})$ such that $\varphi_V =0$, i.e.~$\varphi = a\,\1_2$ 
with $a\in C_0^\infty(\underline{f}[M_1])$. Since $[\varphi]$ lies in the kernel of the canonical map
we obtain $0=\int_{M_2} \vol_2 \,a =\int_{\underline{f}[M_1]} \vol_2\,a  $
and thus $[\varphi] =0$ in $\PhaseSpace(\Xi_2\vert_{\underline{f}[M_1]})$. 
\sk

Next, we prove surjectivity of the canonical map: Let $[\varphi]\in \PhaseSpace(\Xi_2)$ be arbitrary and let $\varphi$
be any representative. Per hypothesis, there exists a Cauchy surface $\Sigma_2$ in $M_2$ which is contained in 
$\underline{f}[M_1]$. Then $\Sigma_1 := \underline{f}^{-1}[\Sigma_2]$ is a Cauchy surface in $M_1$, since
$\underline{f}:M_1\to \underline{f}[M_1]$ is an isometry. Let us choose two other Cauchy surfaces
$\Sigma^\pm_1$ with $\Sigma^\pm_1\cap \Sigma_1=\emptyset$ in the future/past of $\Sigma_1$ and let us denote
by $\Sigma_2^\pm := \underline{f}[\Sigma_1^\pm]$ their images, which are Cauchy surfaces in $M_2$ since
$\underline{f}[M_1]$ is causally compatible. Let $\chi^+\in C^\infty(M_2)$ be any function such that
$\chi^+ \equiv 1 $ on $J_{M_2}^+(\Sigma_2^+)$ and $\chi^+ \equiv 0$ on $J_{M_2}^-(\Sigma_2^-)$.
We define $\chi^-\in C^\infty(M_2)$ by $\chi^+ + \chi^- \equiv 1$ on $M_2$.
Then $\eta:=\chi^+\,G^-_{(1)}(\varphi_V) + \chi^-\,G^{+}_{(1)}(\varphi_V) \in \Omega^1_0(M_2,\g^\ast)$ is of compact support
and the linear part of $\varphi^\prime := \varphi + \MW^\ast_2(\eta)$, given by $\varphi^\prime_V =\varphi_V - \delta_2\dd_2 \eta$,
vanishes outside of $\underline{f}[M_1]$ (remember that by Lemma \ref{lem:coclosed} $\delta_2\varphi_V=0$).
The constant affine part of $\varphi^\prime$ can be treated as in \cite[Theorem 5.6]{Benini:2012vi} by adding a suitable
element of $\mathrm{Triv}_2$ to $\varphi^\prime$, which leads to a representative $\varphi^{\prime\prime}$ of the same
class $[\varphi]$ that has compact support in $\underline{f}[M_1]$. The class
 $[\varphi^{\prime\prime}]\in \PhaseSpace(\Xi_2\vert_{\underline{f}[M_1]})$ proves  
 surjectivity of the canonical map.
\end{proof}
\sk

We quantize our theory by using the $\CCR$-functor, which we are 
going to briefly review to be self-contained.
Let us define the category $\astAlg$:
An object  is a  unital $\ast$-algebra  $\mathcal{A}$ over $\bbC$.
A morphism is a unital $\ast$-algebra homomorphism $\kappa: \mathcal{A}_1 \to \mathcal{A}_2$
 (not necessarily injective).
The $\CCR$-functor is the covariant functor $\CCR:\PreSymp \to \astAlg$
which associates to any object $(\EE,\tau)$ the 
 unital $\ast$-algebra $\CCR(\EE,\tau)= \mathcal{T}(\EE)/\mathcal{I}(\EE,\tau)$.
 $\mathcal{T}(\EE)$ is the complex tensor algebra over $\EE$ and $\mathcal{I}(\EE,\tau)$ is the two-sided ideal generated by
the elements $v \otimes_\bbC w - w \otimes_\bbC v -i\,\tau(v,w)\,\oone$, for all $v,w\in \EE$.
To any morphism $L:(\EE_1,\tau_1)\to (\EE_2,\tau_2)$ in $\PreSymp$ the functor
associates the morphism $\CCR(L)$ in $\astAlg$ which is defined on the tensor algebra by
$\CCR(L)\big(v_1\otimes_\bbC \cdots\otimes_\bbC v_k \big)= L(v_1)\otimes_\bbC \cdots \otimes_\bbC L(v_k)$,
for all $k\geq 1$ and $v_1,\dots,v_k\in \EE_1$. Since $L$ preserves the presymplectic structures,
this unital $\ast$-algebra homomorphism canonically induces to the quotients.

Using the same arguments as in \cite[Theorem 6.3]{Benini:2012vi} it follows immediately from
Theorem \ref{theo:classcausal} and Theorem \ref{theo:classtimeslice} the following
\begin{theo}\label{theo:QFTfunctor}
The covariant functor $\QFT := \CCR \circ \PhaseSpace: G{-}\PrBu\to\astAlg$ satisfies:

\begin{itemize}
\item[(i)] The quantum causality property:\sk

Let $f_i :\Xi_i\to \Xi_3$, $i=1,2$, be two morphisms
in $G{-}\PrBu$, such that $\underline{f_1}[M_1]$ and $\underline{f_2}[M_2]$
are causally disjoint in $M_3$. Then $\QFT(f_1)\big[\QFT(\Xi_1)\big]$ and
$\QFT(f_2)\big[\QFT(\Xi_2)\big]$ commute as subalgebras of $\QFT(\Xi_3)$.

\item[(ii)] The quantum time-slice axiom:\sk

Let $f:\Xi_1\to \Xi_2$ a morphism in $G{-}\PrBu$,
 such that $\underline{f}[M_1]\subseteq M_2$ contains a Cauchy surface of $M_2$.
Then
\begin{flalign}
\QFT(f): \QFT(\Xi_1) \to \QFT(\Xi_2)
\end{flalign} 
is an isomorphism.
 \end{itemize}
\end{theo}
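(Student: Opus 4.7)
The plan is to derive both statements as essentially formal consequences of the analogous classical results (Theorem \ref{theo:classcausal} and Theorem \ref{theo:classtimeslice}) combined with basic functorial properties of $\CCR$, following the pattern of \cite[Theorem 6.3]{Benini:2012vi}. Since $\QFT = \CCR\circ\PhaseSpace$ by definition, I would organize the argument into two short independent parts, one for each axiom.

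For the quantum causality property (i), I would first recall that for any object $(\EE,\tau)$ in $\PreSymp$ the algebra $\CCR(\EE,\tau) = \mathcal{T}(\EE)/\mathcal{I}(\EE,\tau)$ is generated by the images of the elements of $\EE$ and that in the quotient one has the canonical commutation relations $[v]\cdot[w] - [w]\cdot[v] = i\,\tau(v,w)\,\oone$. Hence the subalgebra $\QFT(f_1)\big[\QFT(\Xi_1)\big]$ of $\QFT(\Xi_3)$ is generated (as a unital $\ast$-algebra) by the elements of the form $\QFT(f_1)([\varphi])$ with $[\varphi]\in\PhaseSpace(\Xi_1)$, and similarly for $f_2$. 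It therefore suffices to check commutativity on generators. For any $[\varphi]\in\PhaseSpace(\Xi_1)$ and $[\psi]\in\PhaseSpace(\Xi_2)$ one computes in $\QFT(\Xi_3)$
\begin{flalign}
\big[\QFT(f_1)([\varphi]),\QFT(f_2)([\psi])\big] = i\,\tau_3\big(\PhaseSpace(f_1)([\varphi]),\PhaseSpace(f_2)([\psi])\big)\,\oone~,
\end{flalign}
and this vanishes by Theorem \ref{theo:classcausal} under the causal disjointness hypothesis. Commutativity on generators propagates to the subalgebras they generate, which establishes (i).

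For the quantum time-slice axiom (ii), the argument is purely categorical: by Theorem \ref{theo:classtimeslice} the morphism $\PhaseSpace(f):\PhaseSpace(\Xi_1)\to\PhaseSpace(\Xi_2)$ is an isomorphism in $\PreSymp$, and any functor maps isomorphisms to isomorphisms. Applying this to the $\CCR$-functor I obtain that $\QFT(f) = \CCR\big(\PhaseSpace(f)\big)$ is an isomorphism in $\astAlg$, as required. Concretely, an inverse is provided by $\CCR\big(\PhaseSpace(f)^{-1}\big)$, which is well-defined by functoriality and automatically satisfies $\QFT(f)\circ\CCR\big(\PhaseSpace(f)^{-1}\big) = \id$ and vice versa.

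There is essentially no hard step here, since all the geometric and analytic content — in particular the support properties of $G_{(1)}$, the naturality of the codifferential and Green's operators under causally compatible isometric embeddings, and the cohomological argument needed for surjectivity in the time-slice statement — has already been absorbed into Theorems \ref{theo:classcausal} and \ref{theo:classtimeslice}. The only point that deserves explicit mention is the reduction of the commutator of general algebra elements to the commutator of generators in step (i); this is standard and immediate because the commutator of $[v]$ with a product $[w_1]\cdots[w_n]$ decomposes via the Leibniz rule into $\bbC$-linear combinations of shorter products multiplied by the central elements $\tau(v,w_i)\,\oone$, so vanishing of all pairwise commutators of generators implies vanishing of all commutators between the two subalgebras.
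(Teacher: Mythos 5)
Your proposal is correct and follows exactly the route the paper takes: the paper's proof is a one-line deferral to the standard argument (citing \cite[Theorem 6.3]{Benini:2012vi}) that derives both quantum properties from Theorem \ref{theo:classcausal} and Theorem \ref{theo:classtimeslice} via the $\CCR$-functor, which is precisely what you spell out. Your explicit reduction of the commutator to generators and the observation that functors preserve isomorphisms are the intended (and correct) details behind that deferral.
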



\section{\label{sec:nattrafo}Generally covariant topological quantum fields}
According to \cite{Brunetti:2001dx}, a locally covariant quantum field is a natural transformation
from a covariant functor describing test sections to the covariant functor $\QFT$. In this section we 
introduce the concept of generally covariant topological quantum fields, that are natural transformations
from a covariant functor describing topological information to the functor $\QFT$,
and construct two examples which can be interpreted as magnetic and electric charge.
We have added the attribute `generally covariant' in `generally covariant topological quantum field' 
in order to distinguish it from the usual notion of topological quantum field theory \cite{Atiyah:1989vu}.
To simplify the discussion, in this section we restrict ourselves to the case $G=U(1)$.
We define the category $\mathsf{Vec}$ as follows: 
An object  is a (possibly infinite dimensional) vector space $V$ over $\bbR$.
A morphism is a linear map $L:V_1\to V_2$ (not necessarily injective).
\sk

Composing $\QFT:U(1){-}\PrBu \to \astAlg$ with the forgetful functor from from $\astAlg$ to $\mathsf{Vec}$ 
we can consider $\QFT$ as a covariant functor from $U(1){-}\PrBu$ to $\mathsf{Vec}$ 
(with a slight abuse of notation we denote
this covariant functor again by $\QFT$). The other covariant functors
from $U(1){-}\PrBu$ to $\mathsf{Vec}$ which enter our construction of generally covariant topological quantum fields 
are those of smooth singular homology with coefficients in the real vector space $\g^\ast = i\,\bbR$
(since the smooth and continuous singular homology are isomorphic, the smooth singular homology only contains
topological information).
The covariant functor $\mathfrak{H}_p:U(1){-}\PrBu \to \mathsf{Vec}$ is defined as follows: To any object $\Xi$
 the functor associates the $p$-th singular homology group of the base space
$\mathfrak{H}_p(\Xi) = H_p(M,\g^\ast)$.
To a morphism $f:\Xi_1\to \Xi_2$  the functor associates the push-forward of $p$-simplices, i.e.~
\begin{flalign}
\mathfrak{H}_p(f) : \mathfrak{H}_p(\Xi_1) \to \mathfrak{H}_p(\Xi_2)~,~~ \Big[\sum a_j \,\sigma_j\Big] 
\mapsto \Big[\sum a_j \, (\underline{f}\circ \sigma_j)\Big]  ~,
\end{flalign}
where $\sigma_j: \Delta^p \to M$ are $p$-simplices and $a_j\in\g^\ast$ are coefficients.
The singular cohomology is defined by duality, 
$H^\ast(M,\g) := \Hom_\bbR(H_\ast(M,\g^\ast),\bbR)$. Furthermore, by de Rham's theorem
there exists a vector space isomorphism 
$\mathcal{J}: H^p_{\mathrm{dR}}(M,\g) \to H^p(M,\g)\,,~[\eta] \mapsto \mathcal{J}([\eta])$,
where $\mathcal{J}([\eta])$ is the linear functional on $H_p(M,\g^\ast)$ defined by, for all $\sum a_j\,\sigma_j$,
\begin{flalign}
\mathcal{J}([\eta])\Big(\Big[\sum a_j\,\sigma_j\Big]\Big) 
= \sum a_j \int_{\Delta^p} \sigma_j^\ast(\eta)~,
\end{flalign}
where $\sigma_j^\ast$ is the pull-back of $\sigma_j:\Delta^p\to M$ and the duality pairing between
$\g^\ast$ and $\g$ is suppressed. By Poincar{\'e} duality there also exists
a vector space isomorphism $\mathcal{K}:  H_p(M,\g^\ast)\to H^{p}_{0\,\mathrm{dR}^\ast}(M,\g^\ast) $ (by the subscript 
$\mathrm{dR}^\ast$ we denote the cohomology groups of the codifferential $\delta$) specified by,
for all $[\sigma]\in H_p(M,\g^\ast)$ and $[\eta]\in H_{\mathrm{dR}}^p(M,\g)$,
\begin{flalign}\label{eqn:poincaresingtmp}
\ip{\mathcal{K}([\sigma])}{[\eta]} =\mathcal{J}([\eta])([\sigma])~.
\end{flalign}
The pairing $\ip{~}{~}:H^{p}_{0\,\mathrm{dR}^\ast}(M,\g^\ast) \times H_{\mathrm{dR}}^p(M,\g) \to \bbR$
on the left hand side is that induced by the pairing $\ip{\zeta}{\eta} = \int_M \zeta\wedge \ast(\eta)$ 
 of $p$-forms $\zeta\in \Omega^p_0(M,\g^\ast)$ and $\eta\in\Omega^p(M,\g)$.
\sk

We now can construct our first example of a generally covariant topological quantum field,
which by Remark \ref{rem:magneticinterpreation} below should be interpreted as magnetic charge (Chern class).
\begin{theo}\label{theo:magnattrafo}
Consider the two covariant functors $\mathfrak{H}_2,\QFT: U(1){-}\PrBu\to \mathsf{Vec}$.
 We associate to any object $\Xi$ 
 the morphism in $\mathsf{Vec}$ given by
 \begin{flalign}\label{eqn:magnaturaltrafo}
 \Psi^{\mathrm{mag}}_\Xi : \mathfrak{H}_2(\Xi)\to \QFT(\Xi)~,~~[\sigma] 
 \mapsto \big[ \underline{\mathcal{F}}^\ast(\mathcal{K}([\sigma])) \big]~,
 \end{flalign}
 where $\underline{\mathcal{F}}^\ast : \Omega^2_0(M,\g^\ast) \to \EE^\mathrm{kin}$ is the formal
 adjoint of the curvature affine differential operator.
 The collection $\Psi^\mathrm{mag} = \{\Psi^\mathrm{mag}_\Xi\}$ is a natural transformation
 from $\mathfrak{H}_2$ to $\QFT$.
\end{theo}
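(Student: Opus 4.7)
The strategy is to verify first that $\Psi^{\mathrm{mag}}_\Xi$ is a well-defined map into $\QFT(\Xi)$, and then to check the naturality square for any morphism $f:\Xi_1\to\Xi_2$. Since $\QFT=\CCR\circ\PhaseSpace$ and the generators of the tensor algebra are in bijection with $\EE$, on such generators $\QFT(f)$ acts as $[\varphi]\mapsto[f_\ast(\varphi)]$ via Theorem \ref{theo:phasespacefunc}, so it suffices to work inside $\PhaseSpace(\Xi)$.

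For well-definedness, I would pick any representative $\zeta\in\Omega^2_0(M,\g^\ast)$ with $\delta\zeta=0$ of the class $\mathcal{K}([\sigma])\in H^2_{0\,\mathrm{dR}^\ast}(M,\g^\ast)$. The linear part of $\underline{\mathcal{F}}^\ast(\zeta)$ is $-\delta\zeta=0\in\delta\Omega^2_0(M,\g^\ast)$, so Corollary \ref{cor:gaugeinvobs} immediately yields $\underline{\mathcal{F}}^\ast(\zeta)\in\EE^\mathrm{min}\subseteq\EE^\mathrm{inv}$. To see that the resulting class in $\EE$ is independent of the representative, I would check that $\underline{\mathcal{F}}^\ast(\delta\xi)$ is zero in $\EE^\mathrm{kin}$ for every $\xi\in\Omega^3_0(M,\g^\ast)$; this follows from the Abelian Bianchi identity, since for every $\lambda\in\sect{M}{\mathcal{C}(\Xi)}$
\begin{flalign*}
\int_M\vol\,\big(\underline{\mathcal{F}}^\ast(\delta\xi)\big)(\lambda)=\ip{\delta\xi}{\underline{\mathcal{F}}(\lambda)}=\ip{\xi}{\dd\underline{\mathcal{F}}(\lambda)}=0,
\end{flalign*}
which, combined with the vanishing of the linear part, exhibits $\underline{\mathcal{F}}^\ast(\delta\xi)$ as an element of $\mathrm{Triv}$.

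For naturality, I need two ingredients. The first is that the naturality of the curvature (Lemma \ref{lem:MWcomp} applied to $\underline{\mathcal{F}}$ rather than $\MW$, i.e., $\underline{\mathcal{F}}_1\circ f^\ast=\underline{f}^\ast\circ\underline{\mathcal{F}}_2$) dualizes to
\begin{flalign*}
f_\ast\circ\underline{\mathcal{F}}_1^\ast=\underline{\mathcal{F}}_2^\ast\circ\underline{f}_\ast;
\end{flalign*}
this is a short calculation using Definition \ref{defi:dualmap} together with the standard adjunction $\ip{\eta}{\underline{f}^\ast(\omega)}_1=\ip{\underline{f}_\ast(\eta)}{\omega}_2$ along the open embedding $\underline{f}$. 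The second is the naturality of Poincar{\'e} duality, namely $\mathcal{K}_2\circ\mathfrak{H}_2(f)=\underline{f}_\ast\circ\mathcal{K}_1$ at the level of classes, which I would derive from the characterization (\ref{eqn:poincaresingtmp}): for any $[\eta]\in H^2_{\mathrm{dR}}(M_2,\g)$, one has the pointwise change-of-variables identity $(\underline{f}\circ\sigma_j)^\ast(\eta)=\sigma_j^\ast\underline{f}^\ast(\eta)$, which combined with nondegeneracy of the Poincar{\'e} pairing and the fact that the open isometric embedding $\underline{f}$ intertwines $\delta$ (so $\underline{f}_\ast$ descends to $H^2_{0\,\mathrm{dR}^\ast}$) yields the claim. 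Combining both ingredients, for a representative $\zeta_1$ of $\mathcal{K}_1([\sigma])$ I obtain
\begin{flalign*}
\QFT(f)\big(\Psi^\mathrm{mag}_{\Xi_1}([\sigma])\big)=\big[f_\ast\underline{\mathcal{F}}_1^\ast(\zeta_1)\big]=\big[\underline{\mathcal{F}}_2^\ast(\underline{f}_\ast\zeta_1)\big]=\Psi^\mathrm{mag}_{\Xi_2}\big(\mathfrak{H}_2(f)([\sigma])\big),
\end{flalign*}
where the last equality uses that $\underline{f}_\ast\zeta_1$ represents $\mathcal{K}_2(\mathfrak{H}_2(f)([\sigma]))$ together with the well-definedness established above.

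The main obstacle I anticipate is the careful bookkeeping around the Poincar{\'e}/de Rham duality isomorphisms $\mathcal{J}$ and $\mathcal{K}$: in particular, verifying that $\underline{f}_\ast$ descends to a well-defined map on $H^2_{0\,\mathrm{dR}^\ast}$ requires simultaneously that $\underline{f}$ is an open embedding (so extension by zero preserves smoothness and compact support) and a local isometry (so $\delta$ is intertwined), and the naturality of $\mathcal{K}$ must be extracted from the naturality of $\mathcal{J}$ via the definition (\ref{eqn:poincaresingtmp}) rather than derived directly. Once this is in place the remaining computations are essentially mechanical.
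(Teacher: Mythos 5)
Your proposal is correct and follows essentially the same route as the paper's proof: well-definedness via the vanishing linear part (coclosedness of representatives of $\mathcal{K}([\sigma])$, giving membership in $\EE^\mathrm{inv}$) together with the dual Bianchi identity for representative independence, and naturality from the naturality of $\underline{\mathcal{F}}$ combined with the naturality of the Poincar{\'e} duality isomorphism $\mathcal{K}$. The paper states these two naturality ingredients without elaboration, whereas you supply the explicit adjunction computations; the substance is identical.
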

\begin{proof}
The map (\ref{eqn:magnaturaltrafo})
is well-defined due to the dual of the (Abelian) Bianchi identity
$\dd\circ \underline{\mathcal{F}} =0$. Furthermore, since any representative of the class 
$\mathcal{K}([\sigma])$
is coclosed, the linear part of $\underline{\mathcal{F}}^\ast(\mathcal{K}([\sigma]) )$ vanishes.
Hence, $\underline{\mathcal{F}}^\ast( \mathcal{K}([\sigma])) \in \EE^{\mathrm{inv}}$ 
is a representative of an element in the radical $\mathcal{N}$
and the image of (\ref{eqn:magnaturaltrafo}) is contained in $\EE\subseteq \QFT(\Xi) $.

Let $f:\Xi_1\to \Xi_2$ be a morphism in $U(1){-}\PrBu$. As a consequence of
$\underline{\mathcal{F}}$ being a natural transformation and 
$\mathcal{K}$ being a natural isomorphism we obtain that the following diagram commutes:
\begin{flalign}
\xymatrix{
\ar[d]_-{\mathfrak{H}_2(f)}\mathfrak{H}_2(\Xi_1) \ar[rr]^-{\Psi^{\mathrm{mag}}_{\Xi_1}} & & \QFT(\Xi_1)\ar[d]^-{\QFT(f)}\\
\mathfrak{H}_2(\Xi_2) \ar[rr]^-{\Psi^{\mathrm{mag}}_{\Xi_2}} & & \QFT(\Xi_2)
}
\end{flalign}
This proves that $\Psi^\mathrm{mag} = \{\Psi^\mathrm{mag}_\Xi\}$ is a natural transformation.
\end{proof}
\begin{rem}\label{rem:magneticinterpreation}
The interpretation of the natural transformation $\Psi^\mathrm{mag}$ is as follows:
When evaluated on any $\lambda\in \sect{M}{\mathcal{C}(\Xi)}$, 
the classical affine functional (\ref{eqn:affinefunctional}) corresponding to
$\underline{\mathcal{F}}^\ast(\mathcal{K}([\sigma]) )$ yields 
\begin{flalign}
\mathcal{O}_{\underline{\mathcal{F}}^\ast(\mathcal{K}([\sigma]) )}(\lambda) 
= \ip{\mathcal{K}([\sigma]) }{\underline{\mathcal{F}}(\lambda)}
=\sum a_j\,\int_{\Delta^2} \sigma_j^\ast\big(\underline{\mathcal{F}}(\lambda)\big)~.
\end{flalign}
Via this identification the elements in the image of the map 
$\Psi^\mathrm{mag}_\Xi$ determine the cohomology class $[\underline{\mathcal{F}}(\lambda)]\in H^2_\mathrm{dR}(M,\g)$
and hence the Chern class of the principal $U(1)$-bundle. In physics $[\underline{\mathcal{F}}(\lambda)]$
is called the magnetic charge. This is a purely topological information, which justifies our nomenclature: generally covariant topological quantum field. After $\CCR$-quantization, we should interpret the image of the map (\ref{eqn:magnaturaltrafo}) as magnetic 
charge observables, which can be assigned coherently to all objects in $U(1){-}\PrBu$, since $\Psi^\mathrm{mag}$ is a natural transformation. We note that the image of the map (\ref{eqn:magnaturaltrafo}) lies in the center of the algebra $\QFT(\Xi)$, hence
magnetic charge observables are not subject to Heisenberg's uncertainty relation and can be measured without quantum fluctuations.
\end{rem}
\sk

Motivated by \cite{SDH12} we will now construct a generally covariant topological quantum field,
which, on account of Remark \ref{rem:elinterpreation}, should be interpreted as electric charge.
For this we require a covariant functor which associates to any object $\Xi$ in $U(1){-}\PrBu$
the singular homology group $H_{\mathrm{dim}(M)-2}(M,\g^\ast)\simeq H^{\dim(M)-2}_{0\,\mathrm{dR}^\ast}(M,\g^\ast)$.
 This functor  exists
since the set of morphisms $\{f:\Xi_1\to \Xi_2\}$ is only nonempty between objects $\Xi_1$ and $\Xi_2$
where $M_1$ and $M_2$ have the same dimension (cf.~Definition \ref{def:prbucat}). We shall denote this covariant functor
by $\mathfrak{H}_{-2}:U(1){-}\PrBu \to \mathsf{Vec}$.
\begin{theo}\label{theo:eltrafo}
Consider the two covariant functors $\mathfrak{H}_{-2},\QFT : U(1){-}\PrBu \to \mathsf{Vec}$.
 We associate to any object $\Xi$ 
 the morphism in $\mathsf{Vec}$ given by
 \begin{flalign}\label{eqn:elnaturaltrafo}
 \Psi^{\mathrm{el}}_\Xi : \mathfrak{H}_{-2}(\Xi)\to \QFT(\Xi)~,~~[\sigma]
 \mapsto \big[ \underline{\mathcal{F}}^\ast\big({\ast}( \mathcal{K}([\sigma]) ) \big) \big]~.
 \end{flalign}
 The collection $\Psi^\mathrm{el} = \{\Psi^\mathrm{el}_\Xi\}$ is a natural transformation
 from $\mathfrak{H}_{-2}$ to $\QFT$.
\end{theo}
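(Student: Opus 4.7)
The plan is to follow the architecture of the proof of Theorem \ref{theo:magnattrafo}, with the Hodge operator $\ast$ as the only genuinely new ingredient. The argument splits into two parts: first, for each object $\Xi$, show that the map $\Psi^\mathrm{el}_\Xi$ is well-defined; second, verify commutativity of the naturality square with respect to an arbitrary morphism $f:\Xi_1\to\Xi_2$ in $U(1){-}\PrBu$.

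For well-definedness, the strategy is the following. Since $\ast$ intertwines $\delta$ and $\dd$ up to signs depending on degree, it induces a vector space isomorphism $\ast: H^{\dim(M)-2}_{0\,\mathrm{dR}^\ast}(M,\g^\ast)\xrightarrow{\sim} H^2_{0\,\mathrm{dR}}(M,\g^\ast)$, so $\ast(\mathcal{K}([\sigma]))$ is a well-defined class of closed, compactly supported $2$-forms. Picking a representative $\zeta\in\Omega^2_{0,\dd}(M,\g^\ast)$, the linear part of $\underline{\mathcal{F}}^\ast(\zeta)\in\EE^\mathrm{kin}$ equals $-\delta\zeta\in\delta\,\Omega^2_0(M,i\,\bbR)$, hence by Theorem \ref{theo:gaugeinvspace} applied to $G=U(1)\simeq\mathbb{T}$ one has $\underline{\mathcal{F}}^\ast(\zeta)\in\EE^\mathrm{inv}$. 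Independence from the choice of representative is secured by the identity $\MW^\ast=\underline{\mathcal{F}}^\ast\circ\dd$ in $\EE^\mathrm{kin}$, a direct consequence of the formal adjunction of $\MW=\delta\circ\underline{\mathcal{F}}$: replacing $\zeta$ by $\zeta+\dd\xi$ with $\xi\in\Omega^1_0(M,\g^\ast)$ shifts $\underline{\mathcal{F}}^\ast(\zeta)$ by $\MW^\ast(\xi)\in\MW^\ast\big[\Omega^1_0(M,\g^\ast)\big]$, which vanishes in the quotient defining $\EE$.

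For naturality, I decompose the desired equality $\QFT(f)\circ\Psi^\mathrm{el}_{\Xi_1}=\Psi^\mathrm{el}_{\Xi_2}\circ\mathfrak{H}_{-2}(f)$ into three elementary commutativities. First, Poincar\'e duality is natural, in the sense that $\underline{f}_\ast\circ\mathcal{K}_1=\mathcal{K}_2\circ\mathfrak{H}_{-2}(f)$; this I would establish by pairing both sides against an arbitrary de Rham class $[\eta]\in H^{\dim(M_2)-2}_\mathrm{dR}(M_2,\g)$ via (\ref{eqn:poincaresingtmp}) and invoking the integration identity $\int_{M_2}\underline{f}_\ast(\alpha)\wedge\beta=\int_{M_1}\alpha\wedge\underline{f}^\ast(\beta)$, valid for $\alpha$ compactly supported along the open embedding $\underline{f}$. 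Second, $\underline{f}_\ast\circ\ast_1=\ast_2\circ\underline{f}_\ast$ on compactly supported forms, since $\underline{f}$ is by definition isometric and orientation preserving. Third, $f_\ast\circ\underline{\mathcal{F}}_1^\ast=\underline{\mathcal{F}}_2^\ast\circ\underline{f}_\ast$ in $\EE^\mathrm{kin}_2$, obtained as the formal adjoint of the naturality of $\underline{\mathcal{F}}$ already recorded in Lemma \ref{lem:MWcomp}, together with the observation (made in the proof of Theorem \ref{theo:phasespacefunc}) that the linear part of $f_\ast$ is $\underline{f}_\ast$. Chaining these three commutativities yields the required naturality square.

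The step I expect to demand the most care is the naturality of Poincar\'e duality, because $\mathcal{K}$ couples the singular homology of $M$ with compactly supported codifferential cohomology, and $\underline{f}$ is only an open isometric embedding rather than a diffeomorphism; the extension-by-zero of compactly supported forms on $M_1$ must therefore be coordinated with the post-composition of simplices by $\underline{f}$ underlying the definition of $\mathfrak{H}_{-2}(f)$. Once this step is in place, the remaining ingredients combine in a purely formal manner, essentially parallel to the proof of Theorem \ref{theo:magnattrafo}.
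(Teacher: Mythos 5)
Your proposal is correct and follows essentially the same route as the paper: well-definedness via the identity $\MW^\ast=\underline{\mathcal{F}}^\ast\circ\dd$ (the paper phrases this as $\underline{\mathcal{F}}^\ast(\ast(\delta\chi))=\MW^\ast(\ast(\chi))$, which is the same ambiguity transported through the Hodge isomorphism) together with the observation that the linear part lies in $\delta\Omega^2_0(M,i\,\bbR)$, and naturality by combining the naturality of $\mathcal{K}$, of the Hodge operator, and of $\underline{\mathcal{F}}^\ast$ exactly as in the proof of Theorem \ref{theo:magnattrafo}. You merely spell out the three commuting squares that the paper leaves implicit.
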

\begin{proof}
The map (\ref{eqn:elnaturaltrafo}) is well-defined, since for all $\chi \in \Omega_0^{\dim(M) -1}(M,\g^\ast)$,
$\underline{\mathcal{F}}^\ast\big({\ast{(\delta\chi)}}\big) = \MW^\ast(\ast(\chi))$ yields the trivial class in $\EE\subseteq \QFT(\Xi)$.
For any $\eta\in \Omega_{0,\delta}^{\dim(M)-2}(M,\g^\ast)$
the linear part of $\underline{\mathcal{F}}^\ast\big({\ast(\eta)}\big)$ is
$\underline{\mathcal{F}}^\ast\big({\ast(\eta)}\big)_V = \delta \,{\ast(\eta)}$, with $\ast(\eta)\in \Omega^2_{0,\dd}(M,\g^\ast)$.
Hence, $\underline{\mathcal{F}}^\ast\big({\ast}( \mathcal{K}([\sigma]) ) \big) \in \EE^{\mathrm{inv}}$ 
is a representative of an element in the radical $\mathcal{N}$
and the image of (\ref{eqn:elnaturaltrafo}) is contained in $\EE\subseteq \QFT(\Xi) $.
\sk

Let $f:\Xi_1\to \Xi_2$ be a morphism in $U(1){-}\PrBu$. 
Using that the Hodge operator is a natural isomorphism
and the same arguments as in the proof of Theorem
\ref{theo:magnattrafo} we obtain that the following diagram commutes:
\begin{flalign}
\xymatrix{
\ar[d]_-{\mathfrak{H}_{-2}(f)}\mathfrak{H}_{-2}(\Xi_1) \ar[rr]^-{\Psi^{\mathrm{el}}_{\Xi_1}} & & \QFT(\Xi_1)\ar[d]^-{\QFT(f)}\\
\mathfrak{H}_{-2}(\Xi_2) \ar[rr]^-{\Psi^{\mathrm{el}}_{\Xi_2}} & & \QFT(\Xi_2)
}
\end{flalign}
This proves that $\Psi^\mathrm{el} = \{\Psi^\mathrm{el}_\Xi\}$ is a natural transformation.
\end{proof}
\begin{rem}\label{rem:elinterpreation}
Following Remark \ref{rem:magneticinterpreation} we can interpret $\Psi^\mathrm{el} $ as a coherent assignment of
electric charge observables:
When evaluated on any solution $\lambda\in \sect{M}{\mathcal{C}(\Xi)}$ 
of the equation of motion $\MW(\lambda)=0$,
the classical affine functional (\ref{eqn:affinefunctional}) corresponding to
$\underline{\mathcal{F}}^\ast\big({\ast}( \mathcal{K}([\sigma]) ) \big) $  
yields
\begin{flalign}
\mathcal{O}_{\underline{\mathcal{F}}^\ast({\ast}( \mathcal{K}([\sigma]) ) ) }(\lambda) 
= \ip{\mathcal{K}([\sigma])}{\ast{\big(\underline{\mathcal{F}}(\lambda)\big)}}
=\sum a_j\,\int_{\Delta^{\dim(M)-2}} \sigma_j^\ast\big({\ast{\big(\underline{\mathcal{F}}(\lambda)\big)}}\big)~.
\end{flalign}
Via this identification the elements in the image of the map $\Psi^\mathrm{el}_\Xi$ determine the cohomology class 
$[{\ast{(\underline{\mathcal{F}}(\lambda))}}]\in H^{\dim(M)-2}_\mathrm{dR}(M,\g)$ that, via
 Gauss' law, is the electric charge. As in the previous case, 
the image of the map (\ref{eqn:elnaturaltrafo}) lies in the center of
the algebra $\QFT(\Xi)$, meaning that electric charge observables in the quantum theory
 are not subject to Heisenberg's uncertainty relation and can be measured without quantum fluctuations.
\end{rem}


\section{\label{sec:chargezerofunctor}The charge-zero functor and the locality property}
In the previous section we identified electric and magnetic charge observables in
the algebra $\QFT(\Xi) = \CCR\big(\PhaseSpace(\Xi)\big)$ for any object $\Xi$ in $U(1){-}\PrBu$. 
While magnetic charge observables are certainly very welcome in our framework since they can measure the
topology of the principal bundle, electric charges play a different role. By construction,
the covariant functor $\QFT:U(1){-}\PrBu\to \astAlg$ models quantized principal $U(1)$-connections
without the presence of any charged fields. As a consequence, all electric charge measurements should yield
 zero.
We are going to implement this physical feature into our framework by performing a different quotient
in the presymplectic vector spaces $(\EE,\tau)$ of Proposition \ref{propo:presymp}.
This extends \cite[Remark 4.17]{SDH12} to our principal bundle setting and leads to one possible 
solution of the locality problem\footnote{
We are grateful to Jochen Zahn for communicating to us an alternative strategy for solving the
locality problem.
}. 
It is then rather straightforward to show that there is a covariant functor $\PhaseSpace^{0}: U(1){-}\PrBu \to \PreSymp$,
the charge-zero phase space functor, which associates these  presymplectic  vector spaces
to objects in $U(1){-}\PrBu$.
Interestingly, the functor $\PhaseSpace^{0}$ satisfies, in addition to the classical causality property and the classical time-slice axiom,
the locality property stating that for any morphism $f$ in $U(1){-}\PrBu$
the morphism $\PhaseSpace^{0}(f)$ in $\PreSymp$ is injective. Due to Remark \ref{rem:noninjective}
this is not the case for the functor $\PhaseSpace$ constructed in Section \ref{sec:phasespacefunctor}.
Composing the charge-zero phase space functor with the $\CCR$-functor we obtain a covariant functor
$\QFT^{0}$ that satisfies all axioms of locally covariant quantum field theory, i.e.~the quantum causality property, 
the quantum time-slice axiom and injectivity of $\QFT^{0}(f)$ for any morphism $f$ in $U(1){-}\PrBu$.
\sk

An interesting problem would be to understand if our physically well-motivated, however
slightly ad hoc, procedure of identifying the electric charges with zero can be explained
within the formalism developed by Fewster \cite{Fewster:2012yc}.\footnote{
We are grateful to the anonymous referee for pointing this out to us.} 
The basic idea of this paper is
to identify the group of automorphisms of a quantum field theory functor 
with the `global gauge group' of the theory. This group then can be used
to characterize the invariant subalgebras in $\QFT(\Xi)$, which should be the true observables
of the theory. Applied to our setting, this idea might provide a possibility to interpret the charge-zero algebras
$\QFT^0(\Xi)$ as arising from those subalgebras of $\QFT(\Xi)$ which are invariant under the automorphism group.
However, concrete statements on this relation require a computation of the automorphism
group of the functor $\QFT$, which is rather technical and beyond the scope of this article. We hope to
come back to this issue in a future work.

\sk

Let $\Xi=\big((M,\o,g,\t),(P,r)\big)$ be an object in $U(1){-}\PrBu$
and $\EE^\mathrm{inv}$ the vector space characterized in Theorem \ref{theo:gaugeinvspace}.
Notice that the vector subspace $\underline{\mathcal{F}}^\ast\big[\Omega^2_{0,\dd}(M,\g^\ast)\big]\subseteq \EE^\mathrm{inv}$
contains $\MW^\ast\big[\Omega^1_0(M,\g^\ast)\big]$ as a vector subspace as well as the electric charge observables
of Theorem \ref{theo:eltrafo}. Hence, by considering the quotient 
$\EE^0:= \EE^\mathrm{inv}/\underline{\mathcal{F}}^\ast\big[\Omega^2_{0,\dd}(M,\g^\ast)\big]$
we implement the equation of motion and identify all  electric charges with zero.
\begin{lem}\label{lem:presympzero}
Let $\Xi$ be an object in $U(1){-}\PrBu$ and $h$ any bi-invariant pseudo-Riemannian metric on $U(1)$.
\begin{itemize}
\item[a)] $\EE^{0}:= \EE^\mathrm{inv}/\underline{\mathcal{F}}^\ast\big[\Omega^2_{0,\dd}(M,\g^\ast)\big] $ can be equipped
with the presymplectic structure
\begin{flalign}\label{eqn:presympzero}
\tau^0: \EE^0\times\EE^0 \to \bbR~,~~\big([\varphi],[\psi]\big)\mapsto \tau^0\big([\varphi],[\psi]\big) = \ip{\varphi_V}{G_{(1)}(\psi_V)}_h~.
\end{flalign}
In other words, $(\EE^{0} ,\tau^0)$ is a presymplectic vector space.
\item[b)] The radical $\mathcal{N}^{0}$ of $(\EE^{0} ,\tau^0)$ is
\begin{flalign}
\mathcal{N}^0 = \big[\big\{\varphi\in \EE^\mathrm{inv} : 
\varphi_V =0\big\}\big]~.
\end{flalign}
\end{itemize}
\end{lem}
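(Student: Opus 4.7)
Both statements reduce to the explicit characterization of the radical of $(\EE,\tau)$ obtained in Theorem \ref{theo:radicalexplicit} specialized to $G=U(1)\simeq \mathbb{T}$ (i.e.~$k=1$, $l=0$). As a preliminary remark, I note that $\MW = \delta\circ \underline{\mathcal{F}}$ together with the formal adjointness of $\delta$ and $\dd$ give $\MW^\ast(\xi) = \underline{\mathcal{F}}^\ast(\dd\xi)$ for every $\xi\in\Omega^1_0(M,\g^\ast)$; since $\dd\xi$ is automatically closed, $\MW^\ast\big[\Omega^1_0(M,\g^\ast)\big]\subseteq \underline{\mathcal{F}}^\ast\big[\Omega^2_{0,\dd}(M,\g^\ast)\big]$. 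Hence $\EE^0$ is a further quotient of $\EE$, and once we verify well-definedness of the formula, $\tau^0$ inherits bilinearity and antisymmetry from $\tau$ of Proposition \ref{propo:presymp}.

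For part (a) the only nontrivial step is well-definedness. The linear part of $\underline{\mathcal{F}}^\ast$ is the map $\eta\mapsto -\delta\eta$, so for any $\eta\in\Omega^2_{0,\dd}(M,\g^\ast)$ the element $\varphi = \underline{\mathcal{F}}^\ast(\eta)\in\EE^\mathrm{inv}$ satisfies $h^{-1}(\varphi_V) = -\delta h^{-1}(\eta)\in \delta\,\Omega^2_{0,\dd}(M,i\bbR)$, which is exactly the defining condition (with $k=1$, $l=0$) for its class to lie in $\mathcal{N}$ by Theorem \ref{theo:radicalexplicit}. Therefore $\ip{\varphi_V}{G_{(1)}(\psi_V)}_h = 0$ for every $\psi\in\EE^\mathrm{inv}$, and by antisymmetry of $\tau$ the same vanishing holds with the two arguments swapped. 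This shows that the formula (\ref{eqn:presympzero}) descends unambiguously to the quotient $\EE^0$ and endows it with an antisymmetric bilinear form $\tau^0$, making $(\EE^0,\tau^0)$ a presymplectic vector space.

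For part (b), one inclusion is immediate: if $\varphi\in \EE^\mathrm{inv}$ has $\varphi_V = 0$, then $\tau^0([\varphi],[\psi]) = \ip{0}{G_{(1)}(\psi_V)}_h = 0$ for every $[\psi]\in \EE^0$, so $[\varphi]\in\mathcal{N}^0$. Conversely, if $[\varphi]\in\mathcal{N}^0$ then the condition $\ip{\varphi_V}{G_{(1)}(\psi_V)}_h = 0$ for every $\psi\in\EE^\mathrm{inv}$ is precisely that characterizing the radical $\mathcal{N}$ of $(\EE,\tau)$; applying Theorem \ref{theo:radicalexplicit} once more produces $\xi\in\Omega^2_{0,\dd}(M,i\bbR)$ with $h^{-1}(\varphi_V) = \delta\xi$. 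Setting $\mu := -h(\xi)\in \Omega^2_{0,\dd}(M,\g^\ast)$, the linear part computation $\underline{\mathcal{F}}^\ast(\mu)_V = -\delta\mu = \delta h(\xi) = \varphi_V$ shows that $\varphi - \underline{\mathcal{F}}^\ast(\mu)\in\EE^\mathrm{inv}$ has trivial linear part and represents the same class $[\varphi]\in \EE^0$, proving the desired inclusion.

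The argument is essentially bookkeeping on top of Theorem \ref{theo:radicalexplicit}; the only genuine (and quite modest) conceptual point is recognizing that the subspace $\underline{\mathcal{F}}^\ast[\Omega^2_{0,\dd}(M,\g^\ast)]$ by which we quotient is tailor-made to collapse exactly the pre-radical contributions to the linear part, so that $\mathcal{N}^0$ is carried entirely by the constant affine observables $[a\,\1]$ with $a\in C_0^\infty(M)$ (modulo $\mathrm{Triv}$).
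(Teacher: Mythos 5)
Your proof is correct and follows exactly the route the paper takes: the paper's own proof is the single line ``This is a direct consequence of Theorem \ref{theo:radicalexplicit}'', and your argument is precisely the specialization of that theorem to $k=1$, $l=0$ with the bookkeeping (linear part of $\underline{\mathcal{F}}^\ast$, the inclusion $\MW^\ast[\Omega^1_0]\subseteq\underline{\mathcal{F}}^\ast[\Omega^2_{0,\dd}]$, and the shift by $\underline{\mathcal{F}}^\ast(\mu)$ to kill the linear part) written out explicitly.
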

\begin{proof}
This is a direct consequence of Theorem \ref{theo:radicalexplicit}. 
\end{proof}
Similar to Theorem \ref{theo:phasespacefunc} we obtain that the association of these presymplectic vector spaces is functorial.
\begin{theo}\label{theo:phasespacefunczero}
Let $h$ be a bi-invariant pseudo-Riemannian metric on $U(1)$.
Then there exists a covariant functor $\PhaseSpace^{0}: U(1){-}\PrBu \to \PreSymp$. It associates to any object
$\Xi$  the presymplectic vector space $\PhaseSpace^0(\Xi) = (\EE^0,\tau^0)$
 which has been constructed in Lemma \ref{lem:presympzero}.
To a morphism $f:\Xi_1\to \Xi_2$  the functor associates the morphism in $\PreSymp$ given by
 \begin{flalign}\label{eqn:phasespacefunczero}
 \PhaseSpace^0(f): \PhaseSpace^0(\Xi_1)\to \PhaseSpace^0(\Xi_2)~,~~[\varphi] \mapsto [f_\ast(\varphi)]~,
 \end{flalign}
 where the linear map $f_\ast$ is given in Definition \ref{defi:dualmap}.
\end{theo}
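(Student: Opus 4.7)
My plan is to follow the blueprint of Theorem \ref{theo:phasespacefunc}, modifying the argument only at the one place where the quotient has been enlarged. Concretely, I need to check (i) that the assignment $[\varphi]\mapsto[f_\ast(\varphi)]$ is well-defined as a map $\EE^0_1 \to \EE^0_2$, (ii) that it preserves the presymplectic structures $\tau^0_i$, and (iii) that the assignment is functorial. Since the formula defining $\PhaseSpace^0(f)$ is identical to that of $\PhaseSpace(f)$ and the presymplectic structure $\tau^0$ is given by the same pairing as $\tau$, items (ii) and (iii) will follow essentially verbatim from the corresponding parts of the proof of Theorem \ref{theo:phasespacefunc}, together with the obvious fact that composition of the pull-backs underlying $f_\ast$ is functorial and the identity morphism is sent to the identity.

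The one new point is the compatibility of $f_\ast$ with the enlarged quotient. For this I need to show: if $\eta\in\Omega^2_{0,\dd}(M_1,\g^\ast)$ then $f_\ast\bigl(\underline{\mathcal{F}}_1^\ast(\eta)\bigr)\in\underline{\mathcal{F}}_2^\ast\bigl[\Omega^2_{0,\dd}(M_2,\g^\ast)\bigr]$. The candidate representative is $\underline{f}_\ast(\eta)$, where $\underline{f}_\ast$ denotes extension by zero across the open causally compatible embedding $\underline{f}[M_1]\subseteq M_2$. Since $\underline{f}$ is a smooth open embedding, $\underline{f}_\ast$ commutes with $\dd$ on compactly supported forms, so $\underline{f}_\ast(\eta)\in\Omega^2_{0,\dd}(M_2,\g^\ast)$. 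The identity $f_\ast\circ\underline{\mathcal{F}}_1^\ast=\underline{\mathcal{F}}_2^\ast\circ\underline{f}_\ast$ then follows, for all $\lambda\in\sect{M_2}{\mathcal{C}(\Xi_2)}$, from the short calculation
\begin{align*}
\int_{M_2}\vol_2\,\bigl(f_\ast(\underline{\mathcal{F}}_1^\ast(\eta))\bigr)(\lambda)
&=\int_{M_1}\vol_1\,\underline{\mathcal{F}}_1^\ast(\eta)\bigl(f^\ast(\lambda)\bigr)
=\bigl\langle\eta,\underline{\mathcal{F}}_1(f^\ast(\lambda))\bigr\rangle_1\\
&=\bigl\langle\eta,\underline{f}^\ast(\underline{\mathcal{F}}_2(\lambda))\bigr\rangle_1
=\bigl\langle\underline{f}_\ast(\eta),\underline{\mathcal{F}}_2(\lambda)\bigr\rangle_2
=\int_{M_2}\vol_2\,\underline{\mathcal{F}}_2^\ast(\underline{f}_\ast(\eta))(\lambda)~,
\end{align*}
where the key third equality is precisely the naturality of the curvature, i.e.\ the natural transformation $\underline{\mathcal{F}}:\Gamma^\infty\circ\mathcal{C}\Rightarrow\Omega^2_{\mathrm{base}}$ established in Lemma \ref{lem:curvaffineop} and the text following it. Combining this with the fact (already used in the proof of Theorem \ref{theo:phasespacefunc}) that $f_\ast$ preserves the ideal $\MW^\ast[\Omega^1_0(-,\g^\ast)]$, which is contained in $\underline{\mathcal{F}}^\ast[\Omega^2_{0,\dd}(-,\g^\ast)]$, gives the required descent to the quotient.

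For preservation of the presymplectic structures I simply reuse the last calculation of the proof of Theorem \ref{theo:phasespacefunc}: the linear part of $f_\ast(\varphi)$ is $\underline{f}_\ast(\varphi_V)$, and the identity $G_{1\,(1)}=\underline{f}^\ast\circ G_{2\,(1)}\circ\underline{f}_\ast$ from Lemma \ref{lem:HdAcomp}~b) yields $\tau_2^0\bigl([f_\ast(\varphi)],[f_\ast(\psi)]\bigr)=\tau_1^0([\varphi],[\psi])$ at once. Functoriality, namely $\PhaseSpace^0(g\circ f)=\PhaseSpace^0(g)\circ\PhaseSpace^0(f)$ and $\PhaseSpace^0(\id_\Xi)=\id_{\PhaseSpace^0(\Xi)}$, follows from the corresponding property of $f^\ast=\Gamma^\infty(\mathcal{C}(f))$, which is a contravariant functor as recorded in Section \ref{subsec:Atiyah}, by dualising via Definition \ref{defi:dualmap}.

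No step presents a serious difficulty: the only genuinely new ingredient over Theorem \ref{theo:phasespacefunc} is the well-definedness on the larger quotient, and this is entirely controlled by naturality of $\underline{\mathcal{F}}$ together with the fact that the push-forward along the open isometric embedding $\underline{f}$ commutes with $\dd$. If anything is subtle, it is keeping straight that $\underline{f}_\ast$ on compactly supported forms preserves closedness, which is immediate because $\underline{f}$ is an open embedding and so extension by zero of a closed form is still closed.
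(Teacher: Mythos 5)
Your proposal is correct and follows essentially the same route as the paper, which itself proves Theorem \ref{theo:phasespacefunczero} simply by invoking ``similar arguments as in the proof of Theorem \ref{theo:phasespacefunc}''. You correctly isolate and verify the only genuinely new point, namely that $f_\ast$ maps $\underline{\mathcal{F}}_1^\ast[\Omega^2_{0,\dd}(M_1,\g^\ast)]$ into $\underline{\mathcal{F}}_2^\ast[\Omega^2_{0,\dd}(M_2,\g^\ast)]$ via naturality of the curvature and the compatibility of extension by zero with $\dd$, which the paper leaves implicit.
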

\begin{proof}
The proof follows by similar arguments as in the proof of Theorem \ref{theo:phasespacefunc}.
\end{proof}
By slightly modifying the proofs of Theorem \ref{theo:classcausal} and Theorem \ref{theo:classtimeslice}
it is easy to show that the covariant functor $\PhaseSpace^0:U(1){-}\PrBu \to \PreSymp$
satisfies the classical causality property and the classical time-slice axiom. In addition,
we have have the following
\begin{theo}
The covariant functor $\PhaseSpace^0:U(1){-}\PrBu \to \PreSymp$ satisfies the locality property:\sk

Let $f:\Xi_1\to \Xi_2$ be any morphism in $U(1){-}\PrBu$, then $\PhaseSpace^0(f)$
is injective.
\end{theo}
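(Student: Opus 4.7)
The strategy is to take an element $[\varphi]\in\EE^0_1$ with $\PhaseSpace^0(f)([\varphi])=0$, pick a representative $\varphi\in\EE^{\mathrm{inv}}_1$, and explicitly produce $\eta_1\in\Omega^2_{0,\dd}(M_1,\g^\ast)$ such that $\varphi-\underline{\mathcal{F}}^\ast_1(\eta_1)\in\mathrm{Triv}_1$, which forces $[\varphi]=0$ in $\EE^0_1$. The hypothesis provides some $\eta_2\in\Omega^2_{0,\dd}(M_2,\g^\ast)$ such that $f_\ast(\varphi)=\underline{\mathcal{F}}^\ast_2(\eta_2)$ in $\EE^{\mathrm{kin}}_2$. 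Comparing linear parts yields $\underline{f}_\ast(\varphi_V)=-\delta_2\eta_2$, so $\delta_2\eta_2$ has compact support inside $\underline{f}[M_1]$, and applying $\dd_2$ gives the wave equation $\square_{2\,(2)}\eta_2=-\underline{f}_\ast(\dd_1\varphi_V)$ with right-hand side compactly supported in $\underline{f}[M_1]$.

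The main technical step is to recognise that $\underline{f}^\ast(\eta_2)$ is compactly supported on $M_1$. Since $\eta_2$ itself is compactly supported, the differences $\eta_2-G^\pm_{2\,(2)}(\square_{2\,(2)}\eta_2)$ are homogeneous solutions of the normally hyperbolic operator $\square_{2\,(2)}$ which vanish in the far past and in the far future respectively; by uniqueness of the Cauchy problem, they both vanish identically, so $\eta_2=G^+_{2\,(2)}(\rho)=G^-_{2\,(2)}(\rho)$ for $\rho:=-\underline{f}_\ast(\dd_1\varphi_V)$. Pulling back via $\underline{f}^\ast$ and applying the compatibility $G^\pm_{1\,(k)}=\underline{f}^\ast\circ G^\pm_{2\,(k)}\circ\underline{f}_\ast$ of Lemma \ref{lem:HdAcomp}\,b), I obtain $\underline{f}^\ast(\eta_2)=-G^\pm_{1\,(2)}(\dd_1\varphi_V)$. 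The coincidence $G^+_{1\,(2)}(\dd_1\varphi_V)=G^-_{1\,(2)}(\dd_1\varphi_V)$ confines the support to $J^+_{M_1}(K)\cap J^-_{M_1}(K)$ with $K:=\supp(\dd_1\varphi_V)$, which is compact by global hyperbolicity of $M_1$. Hence $\eta_1:=\underline{f}^\ast(\eta_2)\in\Omega^2_{0,\dd}(M_1,\g^\ast)$, and $\delta_1\eta_1=\underline{f}^\ast(\delta_2\eta_2)=-\varphi_V$, so $\varphi-\underline{\mathcal{F}}^\ast_1(\eta_1)$ has vanishing linear part in $\EE^{\mathrm{kin}}_1$ and therefore admits a representative of the form $c\,\1_1$ with $c\in C^\infty_0(M_1)$.

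It remains to show $c\,\1_1\in\mathrm{Triv}_1$. Applying $f_\ast$ to the identity $\varphi-\underline{\mathcal{F}}^\ast_1(\eta_1)=[c\,\1_1]$, using the naturality $f_\ast\circ\underline{\mathcal{F}}^\ast_1=\underline{\mathcal{F}}^\ast_2\circ\underline{f}_\ast$ (a direct consequence of Definition \ref{defi:dualmap}) together with the relation $f_\ast(\varphi)=\underline{\mathcal{F}}^\ast_2(\eta_2)$ in $\EE^{\mathrm{kin}}_2$, I obtain $\underline{\mathcal{F}}^\ast_2(\eta_2-\underline{f}_\ast\eta_1)=[\underline{f}_\ast(c)\,\1_2]$ in $\EE^{\mathrm{kin}}_2$. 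Comparing linear parts forces $\delta_2(\eta_2-\underline{f}_\ast\eta_1)=0$; combined with closedness and compact support, this makes $\eta_2-\underline{f}_\ast\eta_1$ a compactly supported solution of $\square_{2\,(2)}$, hence zero. The equation thus collapses to $[\underline{f}_\ast(c)\,\1_2]=0$ in $\EE^{\mathrm{kin}}_2$, equivalently $\int_{M_1}c\,\vol_1=\int_{M_2}\underline{f}_\ast(c)\,\vol_2=0$, so that $c\,\1_1\in\mathrm{Triv}_1$ and $[\varphi]=0$ in $\EE^0_1$.

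The main obstacle I expect is the compact-support step for $\underline{f}^\ast(\eta_2)$, which is precisely where the quotient defining $\EE^0$ (rather than $\EE$) plays an essential role: the availability of a closed and compactly supported $\eta_2$ on $M_2$ is exactly what enables the $G^+=G^-$ coincidence, and this is the structural feature whose absence in $\PhaseSpace$ is responsible for the failure of locality exhibited in Remark \ref{rem:noninjective}.
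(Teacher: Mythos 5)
Your proof is correct, but it follows a genuinely different route from the paper's. The paper's argument is structural: since morphisms in $\PreSymp$ preserve the presymplectic form, any class in the kernel of $\PhaseSpace^0(f)$ must lie in the radical $\mathcal{N}^0_1$, and by Lemma \ref{lem:presympzero}~b) (which rests on the cohomological characterization of Theorem \ref{theo:radicalexplicit}, hence on Poincar\'e duality) such a class has a representative $a\,\1_1$ with vanishing linear part; the proof then reduces to showing that the closed coclosed compactly supported form $\eta$ appearing in $\underline{f}_\ast(a)\,\1_2 = b\,\1_2 + \underline{\mathcal{F}_2}^\ast(\eta)$ vanishes by normal hyperbolicity, and that $\int_{M_1}\vol_1\,a=0$. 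You instead bypass the radical entirely and construct the witness $\eta_1=\underline{f}^\ast(\eta_2)$ by hand: the identity $\eta_2=G^+_{2\,(2)}(\rho)=G^-_{2\,(2)}(\rho)$ (which is just $G^\pm\circ\square=\id$ on compactly supported forms), combined with Lemma \ref{lem:HdAcomp}~b) and the compactness of $J^+_{M_1}(K)\cap J^-_{M_1}(K)$, gives $\eta_1\in\Omega^2_{0,\dd}(M_1,\g^\ast)$ with $\delta_1\eta_1=-\varphi_V$, after which the constant part is handled much as in the paper. All the individual steps check out ($\underline{f}_\ast$ and $\underline{f}^\ast$ intertwine $\dd$ and $\delta$ because $\underline{f}$ is an open isometric embedding, and $\underline{f}^\ast\circ\underline{f}_\ast=\id$ on compactly supported forms on $M_1$). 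What your version buys is self-containedness — no appeal to Theorem \ref{theo:radicalexplicit} or to Poincar\'e duality — and it isolates precisely where the charge-zero quotient matters: the kernel hypothesis hands you a \emph{closed, compactly supported} $\eta_2$ rather than an exact one, which is what makes the $G^+=G^-$ support argument work on $M_1$; the cost is a longer, more computational argument where the paper gets the vanishing of the linear part essentially for free from the already-established structure of the radical.
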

\begin{proof}
Notice first that any element $[\varphi]\in  \PhaseSpace^0(\Xi_1)$ that satisfies
$[f_\ast(\varphi)] =0$ is necessarily contained in the radical $\mathcal{N}_1^0 \subseteq \PhaseSpace^0(\Xi_1)$.
Let us now assume that $[\varphi]\in \mathcal{N}_1^0$ is such that
$[f_\ast(\varphi)]=0$. By Lemma \ref{lem:presympzero} b) there exists a representative 
$\varphi\in \sectn{M_1}{\mathcal{C}(\Xi_1)^\dagger}$ of $[\varphi]$ that is of the form $\varphi = a\,\1_1$ with
 $a\in C^\infty_0(M_1)$. The push-forward along $f$ of this representative is then $f_\ast(a\,\1_1) = \underline{f}_\ast(a)\,\1_2$,
 where $\underline{f}_\ast(a)\in C^\infty_0(M_2)$ is the push-forward along $\underline{f}:M_1\to M_2$.
 Since by hypothesis $[f_\ast(\varphi)]=0$, the representative $\underline{f}_\ast(a)\,\1_2$ is equivalent to 
 an element in $\mathrm{Triv}_2$, i.e.~for some $\eta\in \Omega^2_{0,\dd}(M_2,\g^\ast)$ and $b\in C^\infty_0(M_2)$
 satisfying $\int_{M_2} \vol_2\,b=0$, we have $\underline{f}_\ast(a) \,\1_2 = b\,\1_2 + \underline{\mathcal{F}_2}^\ast(\eta)$.
 Comparing the linear parts of both sides of the equality we obtain $\delta_2\eta =0$, i.e.~$\eta\in \Omega^2_{0,\dd}(M_2,\g^\ast)$
 is both closed and coclosed. As a consequence, $\square_{2\,(2)}(\eta)=0$, which due to normal hyperbolicity implies that
 $\eta =0$. We find $\underline{f}_\ast(a)=b$ and in particular $0=\int_{M_2}\vol_2\,\underline{f}_\ast(a) = \int_{M_1}\vol_1\,a$.
 Thus, $[\varphi]=[a\,\1_1]=0$ since $a\,\1_1\in \mathrm{Triv}_1$.
\end{proof}
Let us denote by $\PreSymp^{\mathrm{inj}}$ the subcategory of $\PreSymp$ where all morphisms are injective.
We have shown above the existence of the covariant functor $\PhaseSpace^0:U(1){-}\PrBu \to \PreSymp^{\mathrm{inj}}$.
Since the $\CCR$-functor restricts to a covariant functor $\CCR:\PreSymp^{\mathrm{inj}}\to \astAlg^{\mathrm{inj}}$, where
we have used the obvious notation for the subcategory of $\astAlg$ with injective morphisms,
we obtain by composition a covariant functor $\QFT^0: U(1){-}\PrBu \to \astAlg^{\mathrm{inj}}$.
The classical causality property and the classical time-slice axiom extend via the $\CCR$-functor 
to the quantum case, see e.g.~\cite[Theorem 6.3]{Benini:2012vi}.
The main result of this section can be summarized as follows:
\begin{theo}
The covariant functor  $\QFT^0:= \CCR\circ \PhaseSpace^0: U(1){-}\PrBu \to \astAlg^{\mathrm{inj}}$
is a locally covariant quantum field theory, i.e.~$\QFT^0$ satisfies the quantum causality property, the quantum time-slice axiom
and the locality property.
\end{theo}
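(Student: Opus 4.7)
The plan is to reduce the three claimed properties of $\QFT^0 = \CCR \circ \PhaseSpace^0$ to the corresponding properties of the presymplectic functor $\PhaseSpace^0$, following exactly the strategy used to derive Theorem \ref{theo:QFTfunctor} from Theorems \ref{theo:classcausal} and \ref{theo:classtimeslice}, now enhanced by the new injectivity input. Concretely, I would organise the argument around three classical statements about $\PhaseSpace^0$ (causality, time-slice, locality) followed by three standard transfer lemmas for the $\CCR$-functor.

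First I would verify the three classical properties of $\PhaseSpace^0$. The locality (injectivity) property is already the content of the theorem immediately preceding the one to be proved, so nothing remains to be done there. For classical causality, the formula defining $\tau^0$ is identical to that of $\tau$ in Proposition \ref{propo:presymp} (only the quotient is enlarged), so the support argument of Theorem \ref{theo:classcausal}, which relies exclusively on causal disjointness of $\supp(\underline{f_1}_\ast(\varphi_V))$ and $\supp\bigl(G_{3\,(1)}(\underline{f_2}_\ast(\psi_V))\bigr)$, carries over verbatim. For the classical time-slice axiom I would replicate the structure of the proof of Theorem \ref{theo:classtimeslice}: functoriality of $\PhaseSpace^0$ and the fact that $f:\Xi_1 \to \Xi_2\vert_{\underline{f}[M_1]}$ is an isomorphism in $U(1){-}\PrBu$ reduce the claim to proving that the canonical map $\PhaseSpace^0(\Xi_2\vert_{\underline{f}[M_1]}) \to \PhaseSpace^0(\Xi_2)$ is an isomorphism. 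Surjectivity is obtained by the same cutoff-and-propagator construction (with functions $\chi^\pm$ interpolating between Cauchy surfaces) that makes the linear part of a representative compactly supported in $\underline{f}[M_1]$, after which the residual constant affine part is absorbed into $\mathrm{Triv}_2$. Injectivity requires a mild adaptation: if $\varphi$ becomes trivial in $\PhaseSpace^0(\Xi_2)$, one must realise the corresponding cohomology class of the auxiliary $2$-form in $\underline{\mathcal{F}}^\ast\bigl[\Omega^2_{0,\dd}(M,\g^\ast)\bigr]$ already over $\underline{f}[M_1]$; this is achieved exactly as in Theorem \ref{theo:classtimeslice} by invoking the Bernal--S{\'a}nchez homotopy equivalence between $\underline{f}[M_1]$ and $M_2$ together with Poincar{\'e} duality.

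Finally I would invoke three standard transfer properties of the $\CCR$-functor to pass from the classical to the quantum level. The canonical commutation relation $[v,w] = i\,\tau(v,w)\,\oone$ in $\CCR(\EE,\tau)$ turns classical causality into the statement that any pair of generators in the two subalgebras $\QFT^0(f_1)[\QFT^0(\Xi_1)]$ and $\QFT^0(f_2)[\QFT^0(\Xi_2)]$ commute, hence so do the subalgebras themselves, giving the quantum causality property. Any functor preserves isomorphisms, so the classical time-slice axiom yields the quantum time-slice axiom at once. Finally, the $\CCR$-functor sends injective morphisms of presymplectic vector spaces to injective unital $\ast$-algebra homomorphisms (a standard PBW-type fact about CCR algebras, already implicit in the remark that $\CCR$ restricts to $\PreSymp^{\mathrm{inj}} \to \astAlg^{\mathrm{inj}}$), so the locality of $\PhaseSpace^0$ lifts directly to the locality of $\QFT^0$. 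The main obstacle in the whole plan is the refined injectivity step inside the classical time-slice argument in the presence of the enlarged quotient by $\underline{\mathcal{F}}^\ast\bigl[\Omega^2_{0,\dd}(M,\g^\ast)\bigr]$; everything else is routine propagation of previously established facts.
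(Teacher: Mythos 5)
Your proposal is correct and follows essentially the same route as the paper, which likewise reduces everything to the classical causality, time-slice and locality properties of $\PhaseSpace^0$ (the first two obtained by minor modifications of Theorems \ref{theo:classcausal} and \ref{theo:classtimeslice}, the third being the preceding theorem) and then transfers them through the $\CCR$-functor, using that $\CCR$ restricts to $\PreSymp^{\mathrm{inj}}\to\astAlg^{\mathrm{inj}}$. The only place you go beyond the paper is in spelling out the adaptation of the time-slice injectivity step to the enlarged quotient by $\underline{\mathcal{F}}^\ast\bigl[\Omega^2_{0,\dd}(M,\g^\ast)\bigr]$, which the paper leaves implicit; your treatment of it via the Bernal--S{\'a}nchez homotopy equivalence and Poincar{\'e} duality is the intended one.
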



\section*{Acknowledgements}
We would like to thank Hanno Gottschalk, Thomas-Paul Hack, Ko Sanders and Jochen Zahn for useful discussions and comments.
Furthermore, we are grateful to the anonymous referees for their constructive comments and
suggestions to improve the manuscript.
The work of C.D.\ has been supported partly by the University of Pavia and partly
 by the Indam-GNFM project {``Effetti topologici e struttura della teoria di campo interagente''}. 
 The work of M.B.\ has been supported by a DAAD scholarship. 
M.B.\ is grateful to the II.\ Institute for Theoretical Physics of the University of Hamburg for the kind hospitality.


\end{document}